\newtheorem{definition}{Definition}
\newtheorem{theorem}{Theorem}
\newtheorem{lemma}{Lemma}
\newcommand{\ignore}[1]{}
\newcommand{\Rantwo}[2]{{\tiny \textcolor{green}{#1}}{\textcolor{red}{#2}}}
\newcommand{\Ran}[1]{\textcolor{red}{#1}}
\newcommand{\hC}{\hat{C}}
\newcommand{\hB}{\hat{B}}
\newcommand{\abs}[1]{| #1 |}
\newcommand{\norm}[1]{\| #1 \|}
\newcommand{\set}[2]{\left\{ #1 \;|\; #2 \right\}}
\newcommand{\sset}[1]{\left\{ #1 \right\}}
\newcommand{\xmax}{x_{\max}}
\newcommand{\xeq}{x_{\mathit{eq}}}
\newcommand{\floor}[1]{\lceil #1 \rceil}
\newcommand{\N}{\mathbb {N}}
\newcommand{\tu}{\tilde{u}}
 \newcommand{\tp}{\tilde{p}} 
\newcommand{\tf}{\tilde{f}}
\newcommand{\tr}{\tilde{r}}
\newcommand{\hp}{\hat{p}}
\newcommand{\hf}{\hat{f}}
\newcommand{\hr}{\hat{r}}
\newcommand{\hx}{\hat{x}}
\newcommand{\barf}{\bar{f}}
 \newcommand{\barp}{\bar{p}} 
\newcommand{\barr}{\bar{r}}
\begin{document}

\title{A Combinatorial Polynomial Algorithm for the Linear Arrow-Debreu Market\footnote{A preliminary version of this paper was presented at ICALP 2013.}}

\author{Ran Duan\thanks{Max-Planck-Institut f\"ur Informatik, Saarbr\"ucken, Germany, {duanran@mpi-inf.mpg.de}}\and Kurt Mehlhorn\thanks{Max-Planck-Institut f\"ur Informatik, Saarbr\"ucken, Germany, {mehlhorn@mpi-inf.mpg.de}}}

\maketitle

%\pagenumbering{arabic}
%\setcounter{page}{1}%Leave this line commented out.

\tableofcontents

\begin{abstract}
We present the first combinatorial polynomial time algorithm for
computing the equilibrium of the Arrow-Debreu market model with
linear utilities. Our algorithm views the allocation of money as
flows and iteratively improves the balanced flow as in [Devanur et
al. 2008] for Fisher's model. We develop new methods to carefully
deal with the flows and surpluses during price adjustments. Our algorithm performs $O(n^6 \log (nU))$ maximum flow computations,
where $n$ is the number of agents and $U$ is the maximum integer
utility. The flows have to be presented as numbers of bitlength $O(n\log (nU))$
to guarantee an exact solution. Previously, [Jain 2007, Ye 2007] have given
polynomial time algorithms for this problem, which are based on
solving convex programs using the ellipsoid algorithm and the interior-point method, respectively. 
\end{abstract}

\section{Introduction}\label{sec: introduction}

We provide the first combinatorial polynomial algorithm for
computing the equilibrium of the linear version of an economic market model formulated by Walras\footnote{Walras' model is actually more general and also involves production.} in
1874~\cite{Walras1874}. The model is also known as the linear exchange model. In this model, every agent has an initial
endowment of some goods and a utility function over sets of goods. Goods are assumed to be divisible. 
The market clears at a set of prices if each agent can spend its entire budget (= the total value of its goods at the set of prices) on a bundle of goods with maximum utility, and all goods are completely sold. Market clearing prices are also called equilibrium prices. In the linear version of the problem, all utility functions are linear. 

Formally, the linear model is defined as follows. We may assume w.l.o.g.~that the number of goods is equal to the number of agents and that the $i$-th agent owns the $i$-th good. Let $u_{ij} \ge 0$ be the utility for buyer $i$ if all of good $j$ is allocated to him. If fraction $x_{ij}$ of good $j$ is allocated to buyer $i$, the total utility for  $i$ is 
\[          \sum_{j} u_{ij} x_{ij}. \]
We make the standard assumption that each agent likes some good, i.e., for all $i$, $\max_j u_{ij} > 0$, and that each good is liked by some agent, i.e., for all $j$, $\max_i u_{ij} > 0$.
Agents are selfish and spend money only on goods that give them maximum utility per unit of money, i.e., if $p = (p_1,\ldots,p_n)$ is a price vector then buyer $i$ is only willing to buy goods $j$ with $u_{ij}/p_j = \max_j
  u_{ij}/p_j$. An \emph{equilibrium} is a vector $p$ of positive prices and allocations $x_{ij} \ge 0$ such
  that
\begin{itemize}
\item all goods are completely sold:\quad $\sum_i x_{ij} = 1$
\item all money is spent: \quad $\sum_j x_{ij} p_j  = p_i$
\item only maximum utility per unit of money goods are bought:
\[    x_{ij} > 0 \quad \Rightarrow \quad \frac{u_{ij}}{p_j} = \alpha_i,
\text{ where } \alpha_i = \max_\ell \frac{u_{i\ell}}{p_\ell} \]
\end{itemize}

The existence of an equilibrium is non-obvious. Walras argued existence algorithmically. Assume that a vector $p$ of prices is not an equilibrium vector. Then there is a good for which demand exceeds supply. Increase the price of the good until demand equals supply. Of course, this might destroy the equilibrium for other goods. However, the effect on the other goods will be smaller (he did not define this term) and hence the process will converge. 

Fisher~\cite{FisherThesis} introduced a somewhat simpler model (agents come with budgets) in 1891 and showed how to compute an equilibrium for the case of three agents and three goods by an analog computer. The computer is a hydro-mechanical device that stabilizes at equilibrium prices and allocations~\cite{Brainard-Scarf}. 

The first rigorous existence proof is due to Wald~\cite{Wald36}.
It required fairly strong assumptions. The existence proof for the
general model was given by Arrow and Debreu~\cite{AD1954} in 1954.
\Ran{He} proved that \Ran{the} market equilibrium always exists if
the utility functions are concave. The result is prominently
mentioned in their Nobel prize laudation and the market is usually
referred to as the ``Arrow-Debreu market''. However, their proof
is based on Kakutani's fixed-point theorem and hence
non-constructive. Gale~\cite{Gale57,Gale76} gives necessary and sufficient conditions for the existence of an equilibrium in the linear model; see Section~\ref{sec:general}.

The development of algorithms started in the 60s. There is a wide literature in the economics and mathematics community, and since 2000 also in the algorithm community. The survey paper by Codenotti, Pemmaraju and Varadarajan~\cite{Codenotti-Survey} surveys the literature until 2004.  Early algorithms, for example by Scarf, Smale, Kuhn, and Todd, are inspired by fixed-point proofs or are Newton-based and compute approximations. The running time of these algorithms is exponential. 

First approximation algorithms with polynomial running time appeared after 2000. 
Jain, Madhian, and Saberi~\cite{JMS03} gave a polynomial time approximation scheme. Devanur and Vazirani~\cite{DV03} obtained a strongly polynomial time  approximation scheme and Garg and Kapoor~\cite{Garg-Kapoor} gave a simplified approximation scheme. 
Recently, Ghiyasvand and Orlin~\cite{MJ12} gave an approximation scheme with running time $O(\frac{n}{\epsilon}(m+n\log n))$; here $m$ is the number of positive utilities $u_{ij}$. The approximation algorithms are combinatorial. 

Exact algorithms are also known. Eaves~\cite{Eaves76} presented the first exact algorithm for the linear exchange model. He reduced the model to a linear complementary problem which is then solved by Lemke's algorithm. The algorithm is not polynomial time. Garg, Mehta, Sohoni, and Vazirani~\cite{Garg-Mehta-Sohoni-Vazirani} give a combinatorial interpretation of the algorithm. Polynomial time exact algorithms were obtained based on the 
characterization of equilibria as the solution set of a convex program. The recent paper by Devanur, Garg, and V\'{e}gh~\cite{Devanur-Garg-Vegh} surveys these characterizations. For example, Nenakov and Primak~\cite{Nenakov-Primak} showed that equilibria are precisely the solutions of the system
\[  p_i > 0 \quad x_{ij} \ge 0\quad \sum_j u_{ij} x_{ij} \ge \frac{u_{ik}}{p_k} p_i  \quad\text{for all $i$ and $k$}.\]
Note that $\frac{u_{ik}}{p_k} p_i$ is the utility obtained by the $i$-th buyer if he spends his entire budget, which is $p_i$ under the assumption that his good is completely sold, on good $k$. The program above is not convex in the variables $p_i$ and $x_{ij}$. However, after the substitution $p_i = e^{\pi_i}$ it becomes a convex program in the variables $\pi_i$ and $x_{ij}$. Jain~\cite{Jain07} rediscovered this convex program and showed how to solve it with a nontrivial extension of the ellipsoid method. His algorithm is the first polynomial time algorithm for the linear Arrow-Debreu market. Ye~\cite{Ye2007} showed that polynomial time can also be achieved with the interior point method. The latter algorithm runs in time $O(n^4 \log U)$ for integral utilities bounded by $U$ and is significantly faster than the algorithm presented in this paper. 

In Fisher markets each agent comes with a budget to the market. It is a special case of the Arrow-Debreu market and algorithmically simpler. Eisenberg and Gale~\cite{EG1958} characterized equilibria by a convex program. With the advent of the ellipsoid method, their characterization gave rise to a polynomial algorithm. The first combinatorial algorithm was given by Devanur, Padimitriou, Saberi, and Varzirani~\cite{DPSV08}. Their algorithm uses a maximum flow algorithm as a
black box.  When the input data are integral, their algorithm needs $O(n^5\log U+n^4\log e_{max})$ max-flow
computations, where $n$ is the number of buyers, $U$ the largest
integer utility, and $e_{max}$ the largest initial amount of money of a buyer. If we use the common $O(n^3)$ max-flow algorithm (see~\cite{AMO93}), their running time is $O(n^8\log U+n^7\log e_{max})$. Orlin~\cite{Orlin10} improved the running time to $O(n^4\log U+n^3\log
e_{max})$ and also gave the first strongly polynomial algorithm
with running time $O(n^4\log n)$. Our work is inspired by these papers.

Our algorithm has the advantage of being combinatorial (see Figure~\ref{fig:algo} for a
complete listing), and hence, gives additional
insight in the nature of the problem. We obtain equilibrium prices
by a simple procedure that iteratively adjusts prices and
allocations in a carefully chosen, but intuitive manner. We describe a basic version of the algorithm in Section~\ref{sec:algo}. The basic version already achieves a polynomial number of arithmetic operations on rationals; however, the bitlength of the rationals may be exponential in the size of the problem instance. In Section~\ref{sec: poly time} we achieve polynomial time by the use of approximate arithmetic. The algorithm solves $O(n^6 \log (nU))$ flow problems and needs to deal with numbers of bitlength $O(n \log(nU))$. With an $O(n^3)$ maxflow algorithm, the running time becomes $O(n^{10} \log^2 (nU))$. We introduce some basic concepts in Section~\ref{sec:model} and discuss further related work in Section~\ref{sec:relatedwork}. In Section~\ref{sec: overview}, we give a high-level view of our algorithm. 

The problem of finding a strongly
polynomial algorithm for the linear Arrow-Debreu model remains open.

\subsection{Further Related Work}\label{sec:relatedwork}

There are also algorithms for the Arrow-Debreu model with
non-linear utilities~\cite{CMV05,CMPV05}. The CES (constant
elasticity of substitution) utility functions have drawn much
attention; here, the utility functions are of the form
$u(x_1,...,x_n)=(\sum_{j=1}^nc_jx_j^\rho)^{1/\rho}$ for $-\infty<\rho<1$
and $\rho\neq0$. Codenotti, McCune, Penumatcha, and Varadarajan~\cite{CMPV05} have shown that for $\rho>0$ and $-1\leq\rho<0$, there are polynomial algorithms based on a convex program. In contrast,
Chen, Paparas and Yannakakis~\cite{CPY12} have shown that it is
PPAD-hard to solve market equilibrium of CES utilities for $\rho<-1$.
They also define a new concept ``non-monotone utilities'', and
show the PPAD-hardness to solve the markets with non-monotone
utilities. It remains open to find the exact border between
tractable and intractable utility functions.

All algorithms mentioned so far are centralized in the sense that the algorithms need to know all the utilities at the start of the algorithm and, in the case of iterative algorithms, the global state of the market, i.e., all prices and the demand for each good. Cole and Fleischer~\cite{Cole-Fleischer} and Cheung, Cole, and Rastogi~\cite{Cheung-Cole-Rastogi} explore local algorithms, where each agent only knows the price and demand of his good and, moreover, the market is run for many periods.

\subsection{Model and Definitions}\label{sec:model}

The linear exchange model is defined by the following assumptions; they are as in Jain's paper~\cite{Jain07}:\medskip

\begin{compactenum}[1.]
    \item There are $n$ agents in the system. Each agent $i$ has only one good, which is different from the goods other people have. Agent $i$ owns good $i$.
    \item There is one unit of each good $i$. So, if the price of good $i$ is $p_i$, agent $i$ will obtain $p_i$ units of money when selling its good completely. 
    \item Each agent $i$ has a linear utility function $\sum_j u_{ij}x_{ij}$, where $x_{ij}$ is the amount of good $j$ consumed by $i$.
    \item For all $i$, there is a $j$ such that $u_{ij}>0$. (Everybody likes some good.)
    \item For all $j$, there is an $i$ such that $u_{ij}>0$. (Every good is liked by somebody.)
    \item (Irreducibility) For every proper subset $P$ of agents, there exist
    $i\in P$ and $j\notin P$ such that $u_{ij}>0$.
  \item Each $u_{ij}$ is an integer between 0 and $U$.
\end{compactenum}\medskip

Assumptions 1 to 5 are standard. Assumption 6 simplifies the presentation, and we will come back to it in Section~\ref{sec:general}. Assumptions 1 to 6 guarantee that an equilibrium with positive prices exist~\cite{Gale76}. Assumption 7 is for the purposes of polynomial time computation. 

Let $p=(p_1,p_2,...,p_n)$ denote the vector of prices of goods $1$
to $n$, so they are also the budgets of agents $1$ to $n$, if goods are completely sold. 
In this paper, it is useful to represent each agent $i$ twice, once in its role as a buyer and once in his role as (the owner of) a good. We denote the set of  buyers by
$B=\{b_1,b_2,...,b_n\}$ and the set of goods by
$C=\{c_1,c_2,...,c_n\}$. So, if the price of goods $c_i$ is $p_i$,
buyer $b_i$ will have $p_i$ amount of money. For a subset $B'$ of
agents or a subset $C'$ of goods, we also use $p(B')$ or $p(C')$
to denote the total prices of the goods the agents in $B'$ own or of
the goods in $C'$. For a vector $v=(v_1,v_2,...,v_k)$, let:\medskip
\begin{compactenum}[\hspace{\parindent}--]
    \item $|v|=|v_1|+|v_2|+...+|v_k|$ be the $l_1$-norm of $v$.\smallskip
    \item $\|v\|=\sqrt{v^2_1+v^2_2+...+v^2_k}$ be the $l_2$-norm of $v$.
\end{compactenum}\medskip
Each agent only buys its favorite goods, that is, the
goods with the maximum ratio of utility and price. Define the \emph{bang per buck} of buyer $b_i$ to be
$\alpha_i=\max_j\{u_{ij}/p_j\}$. 
% The classical
% Arrow-Debreu~\cite{AD1954} theorem says that there is a non-zero
% market clearing price vector.

For a price vector $p$, the \emph{equality network} $N_p$ is a flow network with vertex set
$\sset{s,t} \cup B \cup C$, where $s$
is a source node, $t$ is a sink node, $B$ is the set of buyers, and $C$ is the set of goods, and the following edge set: \medskip

\begin{compactenum}[\hspace{\parindent}--]
    \item An edge $(s,b_i)$ with capacity $p_i$ for each $b_i \in B$. 
    \item An edge $(c_i,t)$ with capacity $p_i$ for each $c_i \in C$.
    \item An edge $(b_i,c_j)$ with infinite capacity whenever $u_{ij}/p_j=\alpha_i$. We use $E_p$ to denote these edges. 
\end{compactenum}\medskip

Our task is to find a price vector $p$ such that there is a
flow in which all edges from $s$ and to $t$ are saturated, i.e.,
$(s,B\cup C\cup t)$ and $(s\cup B\cup C, t)$ are both minimum
cuts. When this is satisfied, all goods are sold and all of the
money earned by each agent is spent on goods of maximum utility per unit of money.

For a set $B'$ of buyers define its neighborhood $\Gamma(B')$ in $E_p$ to be:
$\Gamma(B')=\set{c \in C}{\text{$(b,c)\in
E_p$ for some $b \in B'$}}$. 
Clearly, there is no edge in $E_p$ from $B'$ to
$C\setminus\Gamma(B')$.

With respect to a flow $f$, define the surplus $r(b_i)$ of a buyer $i$ to be
the residual capacity of the edge $(s,b_i)$, and define the
surplus $r(c_j)$ of a good $j$ to be the residual capacity of the
edge $(c_j,t)$. That is, $r(b_i)=p_i-\sum_j f_{ij}$, and
$r(c_j)=p_j-\sum_i f_{ij}$, where $f_{ij}$ is the amount of flow
in the edge $(b_i,c_j)$. Define the surplus vector of buyers to be
$r(B)=(r(b_1),r(b_2),...,r(b_n))$. Also, define the total surplus
to be $|r(B)|=\sum_i r(b_i)$, which is also $\sum_j r(c_j)$ since
the total capacity from $s$ and to $t$ are both equal to $\sum_i
p_i$. For convenience, we denote the surplus vector of flow $f'$
by $r'(B)$. In the network corresponding to market clearing
prices, the total surplus of a maximum flow is zero.

\subsection{Overview of our algorithm}\label{sec: overview}

The overall structure of our algorithm is shown in Figure~\ref{fig:high-level-view} and is similar to the
ones of Devanur et al.~\cite{DPSV08} and
Orlin~\cite{Orlin10} for computing equilibrium prices in Fisher
markets. However, the details are quite different. The algorithm
works iteratively. It starts with all prices $p_i$ equal to one and with a balanced flow $f$ in $N_p$. The concept of a balanced flow was introduced in~\cite{DPSV08}. A balanced flow is a flow that minimizes the $l_2$-norm of the surplus vector $r(B)$. Every balanced flow is also a maximum flow. In each iteration, we first analyze the current balanced flow and then carefully adjust prices and flow. The updated flow is not necessarily a balanced flow in the equality network with respect to the new prices. We therefore extend it to a balanced flow. Once the total surplus of the current flow is tiny, we exit from the loop and round the current price vector to a vector of equilibrium prices.

\begin{figure}[t]
\centerline{\framebox{\parbox{0.9\textwidth}{
\begin{tabbing}
555\=555\=555\=555\=\kill
\>Set all prices $p_i$ to one and compute a balanced flow $f$ in $N_p$;\quad\mbox{}\\ \medskip
\>{\bf Repeat}\\
\>\>Adjust prices and flow and obtain a price vector $p'$ and a flow $f'$;\\
\>\>$p = p'$ and $f = f'$;\\
\>\>extend $f$ to a balanced flow in $N_p$;\\
\>{\bf Until} the total surplus of $f$ is tiny; \\ \medskip
\>Round the price vector $p$ to get an exact solution;
\end{tabbing}}}}
\caption{A high-level view of the algorithm.}\label{fig:high-level-view}
\end{figure}

We next give more details about the price and flow update. We number the buyers in order of decreasing surpluses:
$b_1, \ldots, b_n$. We find the minimal $\ell$ such that $r(b_\ell)$ is
substantially larger (by a factor of $1 + 1/n$) than $r(b_{\ell+1})$;
$\ell = n$ if there is no such $\ell$. Let $B' = \{b_1,\ldots,b_\ell\}$ and
let $\Gamma(B')$ be the goods that are adjacent to a buyer in $B'$
in the equality graph. Observe that all flow from buyers in $B'$ goes to goods in $\Gamma(B')$ and all buyers in $B'$ have surplus. Thus the goods in $\Gamma(B')$ have no surplus and the demand for them at the current prices exceeds their supply. Also, there is no flow from the buyers in $B \setminus B'$ to the goods in $\Gamma(B')$; this is due to the fact that the flow is balanced. We raise the prices of the goods in $\Gamma(B')$ by a common factor $x$ and obtain a new price vector $p'$. We also construct a flow $f'$ from $f$ by multiplying the flow on all edges incident to goods in $\Gamma(B')$ by the same factor $x$ and increasing the flow from $s$ to buyers in $B'$ such that flow conservation holds.\footnote{In~\cite{DPSV08,Orlin10} only prices are raised and
flows stay the same. This works for Fisher's model because budgets
are fixed. However, in the Arrow-Debreu model, an increase of
prices of goods will also increase the budgets of their owners. } Observe that the surpluses of the goods in $\Gamma(B')$ stay zero. 

The change of prices and flows affects the surpluses of the buyers, some go up and some go down. More precisely, there are four kind of buyers, depending on whether a buyer $b$ belongs to $B'$ or not and whether the good
owned by $b$ belongs to $\Gamma(B')$ or not. For the definition of the factor $x$, we perform the following thought experiment. We increase the prices of the goods in $\Gamma(B')$ and the flow on the edges incident to them continuously by a common factor $x$ until one of three events happens: (1) a new edge enters
the equality graph
(2) the surplus of a buyer in $B'$ and a buyer in $B \setminus B'$
becomes equal, or (3) $x$ reaches a substantial value ($\xmax = 1 +
1/O(n^3)$ in our algorithm).\footnote{The increase of prices of goods in
$\Gamma(B')$ makes the
  goods in $C \setminus \Gamma(B')$ more attractive and hence an equality
  edge connecting a buyer in $B'$ with a good in  $C \setminus \Gamma(B')$
  may come into existence. This event also exists in~\cite{DPSV08,Orlin10}. Events (2) and (3) have no
parallel in~\cite{DPSV08,Orlin10}.} 

If a new equality edge arises, say $(b,c)$ with $b \in B'$ and $c \not\in \Gamma(B')$, we modify $f'$ further. We use the new equality edge to decrease the surplus of $c$ and to balance the surplus of $b$ with the surplus of the other neighbors of $c$.  

The new flow $f'$ is not necessarily a balanced flow in $N_{p'}$. We therefore extend $f'$ to a balanced flow. We ensure that goods with surplus zero keep surplus zero in this process. This ends the description of the main loop of the algorithm. 

In what sense are we making progress? We use two quantities to measure progress. The first quantity is the maximum price of any good. Prices never decrease, prices are bounded by $(nU)^n$ (Lemma~\ref{thm:largest}), and in every iteration with $x = \xmax$ at least one price increases by the factor $\xmax$. It follows that the number of iterations with $x = \xmax$ is bounded by a polynomial in $n$ and $\log U$ (Lemma~\ref{lem:number-of-bad}); we refer to these iterations as $\xmax$-iterations. 

Our second measure is the $l_2$ norm of the surplus vector of the balanced flow $f$.\footnote{\cite{DPSV08} also uses the $l_2$ norm of the surplus vector as a potential function. They show that it is strictly decreasing from one iteration to the next.} In $\xmax$-iterations, the norm of the surplus vector may go up by a factor $1 + O(1/n^3)$. In the iterations, where either event (1) or (2) occurs, the norm
of the surplus vector decreases substantially, since surplus moves
from a buyer in $B'$ to a buyer in $B \setminus B'$ and buyers
in these two groups have, by the choice of groups, substantially
 different surpluses. We show that the norm decreases by a factor of at least $1 + \Omega(1/n^3)$ (Lemma~\ref{thm:main}). We refer to the latter iterations as balancing iterations. We exit from the main loop, when the total surplus is tiny (less than $\epsilon$). 

Since we know already a polynomial bound on the number of $\xmax$-iterations, we also get a polynomial bound on the number of balancing iterations (Lemma~\ref{lem:number-of-good}). 

The algorithm, as outlined so far, has polynomial arithmetic complexity. We doubt that it has polynomial running time. The computation stays within the rational numbers. However, a naive analysis gives only an exponential bound on the bitlength of prices and flows. In order to guarantee polynomial running time, we show that is suffices to carry out the computation on rationals of polynomial bitlength. Some changes are required to the basic algorithm to make this work, see Section~\ref{sec: poly time}.

\section{The Algorithm}\label{sec:algo} \label{sec: basic algorithm}

At a high level, our algorithm is as shown in Figure~\ref{fig:high-level-view}. We introduce the notion of a balanced flow in Section~\ref{sec:balanced-flow} and our approach to changing prices and flows in Section~\ref{sec:price-adjustments}. We will then give a detailed description of the algorithm in Section~\ref{sec:whole-algorithm}. The extraction of the vector of equilibrium prices is discussed in Section~\ref{sec: rounding after termination}.

\subsection{Balanced flow}\label{sec:balanced-flow}

As in~\cite{DPSV08}, we define the concept of a balanced flow to be
a maximum flow that balances the surpluses of buyers. Balanced flows are not necessarily unique; however, the surplus vector of the buyers is unique.

\begin{definition}[{\rm \cite{DPSV08}}] A balanced flow wrt.~to a price vector $p$ is a maximum flow in the network $N_p$ that minimizes $\|r(B)\|$. 
\end{definition}
For flows $f$ and $f'$ and their surplus vectors $r(B)$ and
$r'(B)$, respectively, if $\|r(B)\|<\|r'(B)\|$, then we say that $f$ is
\emph{more balanced} than $f'$. The next lemma justifies the adjective ``balanced''.

\begin{lemma}[{\rm \cite{DPSV08}}]\label{thm:l2norm}
If $a\geq b_i\geq 0, i=1,2,...,k$ and
$\delta\geq\sum_{i=1}^k\delta_i$, where $\delta,\delta_i\geq 0$,
$i=1,2,...,k$, then:
\[
\|(a,b_1,b_2,...,b_k)\|^2\leq\|(a+\delta,b_1-\delta_1,b_2-\delta_2,...,b_k-\delta_k)\|^2-\delta^2.
\]
\end{lemma}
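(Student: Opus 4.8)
The plan is to prove the inequality by directly expanding the two squared Euclidean norms, which reduces the whole claim to an elementary termwise estimate. First I would write out the right-hand side as $(a+\delta)^2 + \sum_{i=1}^k(b_i-\delta_i)^2 - \delta^2$ and expand the squares. The key bookkeeping point is that the $+\delta^2$ produced by squaring $a+\delta$ cancels exactly against the subtracted $-\delta^2$. Subtracting the left-hand side $a^2 + \sum_{i=1}^k b_i^2$ then kills all the pure-square terms as well, leaving the identity
\[
\|(a+\delta,b_1-\delta_1,\ldots,b_k-\delta_k)\|^2 - \delta^2 - \|(a,b_1,\ldots,b_k)\|^2 = 2a\delta - 2\sum_{i=1}^k b_i\delta_i + \sum_{i=1}^k \delta_i^2 .
\]
So it suffices to show that the right-hand side of this identity is nonnegative.

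The second step is a pointwise bound. Since $\delta_i \ge 0$ and $b_i \le a$ for every $i$, we have $b_i\delta_i \le a\delta_i$, and summing over $i$ gives $\sum_{i=1}^k b_i\delta_i \le a\sum_{i=1}^k \delta_i$. Combining this with the hypothesis $\sum_{i=1}^k \delta_i \le \delta$ and with $a \ge 0$ (which follows from $a \ge b_1 \ge 0$) yields $\sum_{i=1}^k b_i\delta_i \le a\delta$, hence $2a\delta - 2\sum_{i=1}^k b_i\delta_i \ge 0$. Since $\sum_{i=1}^k \delta_i^2 \ge 0$ as well, the entire right-hand side of the identity is nonnegative, which is exactly the claimed inequality.

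There is really no hard part here: the substance of the lemma is the observation that the additive $-\delta^2$ on the right is precisely what is needed to absorb the cross term coming from $(a+\delta)^2$, after which what remains is manifestly nonnegative because of the domination $b_i \le a$ and the ``budget'' constraint $\sum_i\delta_i \le \delta$. The only minor point of care is not to discard the term $\sum_i\delta_i^2$ prematurely; since it is nonnegative, bounding it below by $0$ is harmless and in fact gives slightly more than is asked. (In the intended use, the lemma quantifies how much the squared $\ell_2$-norm of the surplus vector decreases when surplus is shifted from a dominating coordinate onto smaller ones, i.e.\ so as to become more balanced.)
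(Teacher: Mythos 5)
Your proof is correct and is essentially the same as the paper's: both expand the squares, discard the nonnegative term $\sum_i \delta_i^2$, and use $b_i \le a$ together with $\sum_i \delta_i \le \delta$ to control the cross terms. The only cosmetic difference is that you cancel the $\delta^2$ from $(a+\delta)^2$ against the subtracted $-\delta^2$ up front, whereas the paper carries it along and shows the difference of squared norms is at least $\delta^2$.
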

\begin{proof}
\begin{align*}
(a+\delta)^2+\sum_{i=1}^k (b_i-\delta_i)^2-a^2-\sum_{i=1}^k
b_i^2
&\geq 2a\delta+\delta^2-2\sum_{i=1}^k b_i\delta_i\\
&\geq  \delta^2+2a(\delta-\sum_{i=1}^k\delta_i) \geq \delta^2.
\end{align*}
\end{proof}

Devanur et al~\cite{DPSV08} showed how to compute a balanced flow with $O(n)$ max-flow computations. We need a slight extension of their result which also establishes that the edge-flows in a balanced flow are rational numbers with small denominator, assuming that the input data is integral.

\begin{lemma}[{\rm \cite{DPSV08}}]\label{thm:compute-balanced}
Let $p$ be a price vector.  Given a flow
$f$ in the network $N_p$, a balanced flow $f''$ can be computed with at most $n$ max-flow
computations. Goods with surplus zero wrt.~$f$ also have surplus zero wrt.~$f''$.

If all prices, utilities, and entries of $f$ are integral, the entries of $f''$ are rational numbers with a common denominator $n!$.
\end{lemma}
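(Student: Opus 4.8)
The plan is to follow the Devanur et al.\ construction directly and to track integrality along the way. First I would recall the structure of the algorithm that produces a balanced flow from an arbitrary flow $f$: extend $f$ to a maximum flow $f_0$ in $N_p$ (one max-flow computation), and then repeatedly identify the set of buyers with the largest surplus and push flow away from them through the equality edges to reduce the $l_2$-norm of $r(B)$. Concretely, in each round one considers the current maximum flow, looks at the buyers $S$ whose surplus equals the current maximum surplus value, and computes a max-flow in an auxiliary network that allows rerouting flow so as to decrease the surplus of buyers in $S$ as much as possible while not increasing the surplus of any buyer; the key combinatorial fact (from \cite{DPSV08}) is that after this step the set of buyers attaining the (new, smaller) maximum surplus strictly grows, or the process terminates, so there are at most $n$ rounds, hence at most $n$ max-flow computations in total. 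I would state that the resulting flow $f''$ minimizes $\|r(B)\|$, invoking the analysis of \cite{DPSV08} (which rests on Lemma~\ref{thm:l2norm}); since this is attributed to them I would not reprove it in detail.

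Second, for the ``goods with surplus zero stay at surplus zero'' claim, I would observe that the only operations performed are: (i) computing a maximum flow, and (ii) rerouting flow among equality edges in a way that never decreases the total flow into any good $c_j$ with $r(c_j)=0$ (a good already saturated remains saturated in any maximum flow, since the edge $(c_j,t)$ is saturated in every max-flow once it is saturated in one, as all max-flows have the same value and the same min-cut behaviour on that edge). More carefully: a good with zero surplus has its edge to $t$ in every minimum cut, so in any maximum flow that edge is saturated; since every intermediate flow produced by the procedure is a maximum flow in $N_p$ (the rerouting steps preserve the flow value), each such good keeps surplus zero throughout, and in particular in $f''$.

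The genuinely new part, and the one I expect to be the main obstacle, is the integrality/denominator bound. Here the idea is to argue by induction on the rounds and to identify, at each round, the linear program (or flow problem) being solved and bound the denominators of its vertex solutions via Cramer's rule. Each rerouting step is a max-flow computation in a network whose capacities are affine combinations of the prices, utilities, and current flow values with small integer coefficients; the flow on an equality edge $(b_i,c_j)$ is determined, on the support of an optimal solution, by a square subsystem of the node-arc incidence constraints together with capacity-tightness constraints, whose coefficient matrix is totally unimodular except for the presence of equality-constraint rows forcing certain buyers to share surplus equally — it is those ``balance'' constraints that can introduce denominators. The plan is to show that at round $k$ the flow values lie in $\tfrac{1}{k!}\mathbb{Z}$ (assuming prices, utilities, and the input $f$ are integral so that $f_0$ can be taken integral): the auxiliary network at round $k$ has capacities in $\tfrac{1}{(k-1)!}\mathbb{Z}$ by induction, scaling by $(k-1)!$ makes everything integral, a max-flow in an integral network is integral, and the ``equal surplus'' redistribution among at most $n-k+1$ remaining top buyers divides by at most that many, so one extra factor of at most $n-(k-1)$ appears, giving $\tfrac{1}{k!}\mathbb{Z}$; after at most $n$ rounds the denominator divides $n!$. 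I would be careful to check that each rerouting indeed only splits an integral (after scaling) amount of surplus evenly among an integer number of buyers, so that no worse denominator than a single factor per round arises; this bookkeeping is the crux, and I would present it as the one step requiring a short but careful argument rather than a citation.
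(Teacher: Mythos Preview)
Your plan diverges from the paper's proof in two places, and in both there is a real gap.

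\textbf{Zero-surplus preservation.} Your justification that ``a good with zero surplus has its edge to $t$ in every minimum cut, so in any maximum flow that edge is saturated'' is false: two distinct maximum flows in $N_p$ can saturate different $(c_j,t)$ edges (take one buyer with budget~$1$ and equality edges to two goods of price~$1$). So you cannot conclude preservation merely from ``every intermediate flow is a maximum flow.'' The paper avoids this by a structural decomposition: after augmenting $f$ to a max flow (which can only increase flow into $t$, so zero-surplus goods remain zero-surplus), it lets $S$ be the set of nodes reachable from $s$ in the residual graph and $T$ the rest. All goods in $S\cap C$ already have zero surplus because $(s\cup S, T\cup t)$ is a min-cut; the balanced-flow computation is then carried out \emph{only} on the subnetwork $G'$ induced by $S$, where $(s\cup S,t)$ is a min-cut of $G'$ and hence every good stays saturated. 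The flow on $T$ is left untouched. This separation is what guarantees preservation, not a general property of max flows.

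\textbf{The algorithm and the denominator bound.} The paper does not use your ``grow the top-surplus set'' procedure. It uses the divide-and-conquer scheme from \cite{DPSV08}: on the current buyer set of size $m$, compute the average surplus $r_{\mathrm{avg}}$ (denominator $m$), set up a transshipment problem with supplies/demands $r(b_i)-r_{\mathrm{avg}}$, and run one max-flow. If all supplies route, every buyer ends at $r_{\mathrm{avg}}$; otherwise the residual min-cut splits the buyers into two strictly smaller sets on which one recurses. The $n!$ bound then drops out immediately: each recursive level multiplies the common denominator by the current set size, and sizes are strictly decreasing along any branch, so the product divides $n!$. Your argument, by contrast, asserts that in round $k$ an ``equal surplus redistribution among at most $n-k+1$ remaining top buyers'' contributes a single factor, but in the top-down scheme you describe the new maximum surplus is determined by where the descending top group first meets a lower buyer's (possibly rising) surplus, not by an average over a known-size set; it is not clear that only one clean integer factor appears per round, nor that the factor is bounded as you claim (and the $\tfrac{1}{k!}$ you write does not match the factor $n-k+1$ you name). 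If you want to stick with a top-down variant you would need to exhibit, at each round, an explicit averaging over a set whose size you control; the paper's divide-and-conquer gives you that for free.
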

\begin{proof}
We first turn $f$ into a maximum flow by finding the maximum flow in the residual graph $G_f$ from $s$ to $t$, which takes about $O(n^3)$ time.
Observe that goods with surplus zero keep surplus zero.

If $f$ saturates all edges out of $s$, there is no surplus with respect to $f$ and hence $f$ is balanced. So assume that at least one of the edges out of $s$ is not saturated.
In the residual graph\footnote{The residual graph $G_f$ with respect to a flow $f$ is defined in the standard way. It has vertex set $B \cup C \cup \sset{s,t}$ and edge set $(\sset{s} \times B) \cup E_p \cup E_p^{-1} \cup (C \times \sset{t})$. Edge $(s,b_i)$ has capacity $p_i - f_{si}$, edge $(b_i,c_j) \in E_p$ has infinite capacity, edge $(c_j,b_i) \in E_p^{-1}$ has capacity $f_{ij}$ and edge $(c_j,t)$ has capacity $p_j - f_{jt}$.}  $G_f$ w.r.t. to $f$, let $S\subseteq B\cup
C$ be the set of nodes reachable from $s$, and let $T=(B\cup
C)\setminus S$ be the remaining nodes. There must be at least one unsaturated edge into $t$ and the source of any such edge belongs to $T$. Thus $T$ is nonempty.

There are no edges
from $S\cap B$ to $T\cap C$ in the equality graph since any such edge has infinite capacity and hence would belong to the residual graph, and there is no
flow from $T\cap B$ to $S\cap C$ as the reversals of these edges would belong to the residual graph.
The edges from $s$ to $T \cap B$ and the edges from $S \cap C$ to $t$ are saturated and
$(s \cup S, T \cup t)$ is a minimum cut of the residual graph. Thus, the
buyers in $T\cap B$ and the
goods in $S\cap C$ have no surplus w.r.t.~$f$, and this holds true
for every maximum flow since every maximum flow must saturate all edges in a minimum cut.
Let $G'$ be the network spanned by $s\cup
S\cup t$, and let $f'$ be the balanced maximum flow in $G'$. The flow
$f'$ can be computed by $n$ max-flow computations; Corollary 8.8
in~\cite{DPSV08} is applicable since $(s\cup S,t)$ is a min-cut in
$G'$. Finally, $f'$ together with the restriction of $f$ to
$s\cup T\cup t$ is a balanced flow $f''$ in $G$.

The surpluses of the goods in $S \cap C$ stay at zero and hence unchanged. The surpluses of the goods in $T \cap C$ are clearly unchanged. 

\newcommand{\ravg}{r_{{\mathit avg}}}

For the claim about the rational representation of the entries of $f''$, we have to give more details of the balanced flow algorithm in~\cite{DPSV08}. They use a divide and conquer approach. Consider the residual network $G_f$ with respect to $f$ restricted to the edges connecting vertices in $S$ and let $\ravg$ be the average surplus of the buyers in $S$; $\ravg$ is a rational number with denominator $\abs{S}$. Give every buyer $b_i$ in $S$ a supply/demand of $r(b_i) - \ravg$; positive values are supplies and nonpositive values are demands.\footnote{The nodes $b_i$ with $r(b_i) = \ravg$ may be treated as supply or demand nodes.}. Compute a maximum flow $g$. If all supplies can be routed, there is a balanced flow, namely $f + g$,  in which all buyers in $S$ have surplus equal to $\ravg$. We come to the case that not all supplies can be routed. Let $S_S$ be the nodes in $S$ reachable from the buyers $b_i$ with $r(b_i) > \ravg$ in the residual graph with respect to $f + g$, and let $T_S$ be the remaining nodes. Then $(S_S,T_S)$ is a min-cut and we can recurse on the buyers in $S_S$ and the buyers in $T_S$. Since the cardinalities of both sets of buyers is less than $\abs{S}$, the flows obtained are rational numbers with common denominator at most $n!$. 
\end{proof}

\subsection{Price and Flow Adjustment}\label{sec:price-adjustments}

We introduce a method for price adjustments. We never decrease any price and we
only increase prices of goods with zero surplus that can be reached from buyers with positive surplus. For such goods, demand exceeds supply and hence intuition suggests to make such goods more expensive. Increasing the price of a good and keeping the flow into the good constant, will create surplus at the good. We avoid this and hence whenever we increase the price of a good, we also increase the inflow of the good by the same amount. For goods whose price stays unchanged, the inflow is not changed. In this way, the surplus of any good whose price is increased stays zero. Changing the inflow into some goods also changes the outflow of some buyers. We adjust their inflows accordingly.

How should prices be increased? Let $\Gamma$ be the set of goods for which we increase prices and inflows; we will discuss the choice of $\Gamma$ below. We increase the prices of the goods in $\Gamma$ and the inflow into these goods gradually by a common multiplicative factor $x > 1$; the choice of $x$ will be discussed below. As a consequence,
any ratio $u_{ij}/p_j$ becomes $u_{ij}/(x p_j)$ for $c_j \in \Gamma$ and stays constant for $c_j \not\in \Gamma$. The change may create and destroy equality edges. We stop the increase of $x$ at the latest when an equality edge carrying positive flow will be destroyed by a further increase. We may destroy equality edges carrying no flow, but this does no harm.

For which goods should we increase prices? In a balanced flow
$f$ with surplus vector $r$, we choose a positive surplus bound $S$ and consider the set of buyers $B(S)$ with surplus at least $S$, i.e.,
\[      B(S)=\set{b_i\in B}{r(b_i)\geq S}.\]
Let
\[ \Gamma(B(S)) = \set{c_j \in C}{ (b_i,c_j)\in E_p \text{ for some } b_i \in B(S)} \]
be the set of goods connected to the buyers in $B(S)$ by edges in the equality graph with respect to the current price vector $p$. We increase the prices of the goods in $\Gamma(B(S))$.

\begin{lemma}\label{thm:bal1}
In a balanced flow $f$ in $N_p$, there is no
edge in $E_p$ that carries flow from $B\setminus B(S)$ to $\Gamma(B(S))$. Moreover, the goods in $\Gamma(B(S))$ have no surplus.
\end{lemma}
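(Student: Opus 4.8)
The plan is to exploit the defining property of a balanced flow — that it minimizes $\|r(B)\|$ among all maximum flows — together with the structure of the equality network. The two claims are really the same fact viewed from two sides, so I would establish the second from the first, or prove them in tandem.

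First I would argue that the goods in $\Gamma(B(S))$ have no surplus. Suppose some $c_j \in \Gamma(B(S))$ had $r(c_j) > 0$. By definition of $\Gamma(B(S))$, there is an equality edge $(b_i, c_j) \in E_p$ with $b_i \in B(S)$, so $r(b_i) \ge S > 0$. Then in the network $N_p$ there is an augmenting path $s \to b_i \to c_j \to t$ along which a positive amount of flow can be pushed (the edge $(b_i,c_j)$ has infinite capacity, and both $(s,b_i)$ and $(c_j,t)$ have positive residual capacity). Pushing flow along this path strictly decreases $r(b_i)$ and $r(c_j)$ while leaving all other surpluses unchanged, hence strictly decreases $|r(B)|$ and therefore also $\|r(B)\|$ — contradicting maximality of $f$ (indeed already contradicting that $f$ is a maximum flow, since total surplus drops). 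So every good in $\Gamma(B(S))$ has zero surplus.

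Next I would handle the "no flow from $B \setminus B(S)$ to $\Gamma(B(S))$" claim; this is where balancedness, not just maximality, is needed. Suppose for contradiction that some edge $(b_k, c_j) \in E_p$ carries $f_{kj} > 0$ with $b_k \in B \setminus B(S)$ and $c_j \in \Gamma(B(S))$. Pick $b_i \in B(S)$ with $(b_i, c_j) \in E_p$. Then $r(b_k) < S \le r(b_i)$. Consider rerouting an infinitesimal amount $\delta$ of flow: decrease $f_{kj}$ by $\delta$ and increase $f_{ij}$ by $\delta$ (legal since $(b_i,c_j)$ has infinite capacity and $f_{kj} > 0$), then restore flow conservation by decreasing the flow on $(s,b_i)$ by $\delta$ and increasing the flow on $(s,b_k)$ by $\delta$. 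This keeps the flow feasible and maximum (it only moves flow around, the value is unchanged, and $c_j$'s inflow is unchanged so its surplus stays zero), but it changes the buyer surpluses to $r(b_i) + \delta$ and $r(b_k) - \delta$ with everything else fixed. Applying Lemma~\ref{thm:l2norm} with $a = r(b_i) \ge r(b_k) = b_1$ and a single term, or simply computing $(r(b_i)+\delta)^2 + (r(b_k)-\delta)^2 - r(b_i)^2 - r(b_k)^2 = 2\delta(r(b_i) - r(b_k)) + 2\delta^2 > 0$ whenever $r(b_i) \ge r(b_k)$, shows the new surplus vector has strictly larger $l_2$-norm; equivalently, reversing the perturbation produces a more balanced flow, contradicting that $f$ is balanced. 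Hence no such edge carries positive flow.

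The main obstacle is getting the perturbation argument exactly right: one must be sure the rerouting keeps the flow genuinely feasible (no capacity violated, in particular $(s,b_i)$ retains nonnegative flow — which holds because we are only decreasing its flow by $\delta$, and $b_i$ has surplus so this is fine, and $f_{kj} > 0$ gives room on the $(b_k,c_j)$ edge for small $\delta$) and that it leaves the flow maximum (it does, since total value is unchanged). One subtlety: if $b_i = b_k$ the claim is vacuous, and if $r(b_i) = r(b_k)$ the strict inequality in the $l_2$ computation still holds because of the $+2\delta^2$ term, so even "ties" are excluded — a balanced flow never routes flow from a buyer to a good in the neighborhood of an equally-surplus-or-higher buyer along an equality edge, as long as that would not be forced. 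Once both claims are in hand, the lemma follows.
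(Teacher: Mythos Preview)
Your overall approach matches the paper's: first show the goods in $\Gamma(B(S))$ have zero surplus via an augmenting-path argument (otherwise $f$ would not even be a maximum flow), and then show that any flow-carrying edge from $B\setminus B(S)$ into $\Gamma(B(S))$ permits a rerouting that strictly decreases $\|r(B)\|$, contradicting balancedness.

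However, the rerouting step contains a sign error that breaks the argument as written. After you decrease $f_{kj}$ by $\delta$ and increase $f_{ij}$ by $\delta$, the outflow of $b_i$ has gone \emph{up} by $\delta$ and the outflow of $b_k$ has gone \emph{down} by $\delta$. To restore conservation you must therefore \emph{increase} $f_{si}$ and \emph{decrease} $f_{sk}$, not the reverse. With the adjustments you state, conservation fails at both $b_i$ and $b_k$, so the object you obtain is not a flow at all; the subsequent surplus computation and the ``reversing the perturbation'' manoeuvre are then not meaningful.

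With the correct signs the argument becomes direct and coincides with the paper's: the new surpluses are $r(b_i)-\delta$ and $r(b_k)+\delta$, and since $r(b_i)\ge S > r(b_k)$,
\[
(r(b_i)-\delta)^2+(r(b_k)+\delta)^2 - r(b_i)^2 - r(b_k)^2 = -2\delta\bigl(r(b_i)-r(b_k)\bigr)+2\delta^2 < 0
\]
for all sufficiently small $\delta>0$. This yields a feasible maximum flow strictly more balanced than $f$ (feasibility: $r(b_i)>0$ gives room to increase $f_{si}$, $f_{kj}>0$ gives room to decrease $f_{kj}$, and $f_{sk}\ge f_{kj}>0$ gives room to decrease $f_{sk}$). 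No reversal step is needed. Your closing remark about ties is also unnecessary here: $b_i\in B(S)$ and $b_k\notin B(S)$ already force the strict inequality $r(b_i)>r(b_k)$.
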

\begin{proof}
Suppose there is an edge $(b_i,c_j)$ that carries flow such
that $b_i\notin B(S)$ and $c_j\in\Gamma(B(S))$. Then, in the
residual graph, there are directed edges $(b_k,c_j)$ and
$(c_j,b_i)$ with nonzero capacities with $b_k\in B(S)$.
However, $r(b_k)\geq S>r(b_i)$, and hence we can augment along this path
and get a more balanced flow by Lemma~\ref{thm:l2norm}, contradicting that $f$ is already a
balanced flow.

The goods in $\Gamma(B(S))$ have no surplus, since the buyers in $B(S)$ have positive surplus, $f$ is a maximum flow, and all the flow coming form buyers in $B(S)$ goes to the goods in $\Gamma(B(S))$.
\end{proof}

We increase the prices of the goods in $\Gamma(B(S))$ and the flow on the edges incident to these goods by a common multiplicative factor $x > 1$. We will leave the prices of the goods in $C \setminus \Gamma(B(S))$ and the flows on the edges incident to these goods  unchanged. For a buyer $b_i$, we scale the flow on the edge $(s,b_i)$ by a factor $x$ if $b_i \in B(S)$ and leave it unchanged otherwise. Note that for a buyer $b_i \in B(S)$ all outgoing flow goes to goods in $\Gamma(B(S))$, and for a buyer $b_i \not\in B(S)$ none of the outgoing flow goes to a good in $\Gamma(B(S))$.
\begin{align}
\label{new prices}
p'_j &= \begin{cases}
x\cdot p_j     & \text{if $c_j\in \Gamma(B(S))$};\\
p_j & \text{if $c_j\notin\Gamma(B(S))$}.
\end{cases}\\
\label{new flow interior}
f'_{ij} &= \begin{cases}
x\cdot f_{ij}      & \text{if $c_j\in \Gamma(B(S))$};\\
f_{ij}  & \text{if $c_j\notin\Gamma(B(S))$}.
\end{cases}\\
f'_{jt} &= \begin{cases}
x\cdot f_{jt}      & \text{if $c_j\in \Gamma(B(S))$};\\
f_{jt}  & \text{if $c_j\notin\Gamma(B(S))$}.
\end{cases}\\
\label{new flow s}
f'_{si} &= \begin{cases} x \cdot f_{si} & \text{if $b_i \in B(S)$}\\
                                                  f_{si} & \text{if $b_i \not\in B(S)$}\\
\end{cases}
\end{align}
Since there are no edges from $B(S)$ to $C \setminus \Gamma(B(S))$, and the edges from $B \setminus B(S)$ to $\Gamma(B(S))$ are not carrying flow, an equivalent definition of $f'_{ij}$ is
$f'_{ij} = x f_{ij}$ if $b_i \in B(S)$ and $f'_{ij} = f_{ij}$ if $b_i \not\in B(S)$.

We next discuss constraints on $x$. We must make sure that equality edges carrying positive flow stay equality edges and that surpluses of buyers stay nonnegative.

Let us first discuss the effect of the price change on the equality graph.
Equality edges from $B \setminus B(S)$ to $\Gamma(B(S))$ may disappear. Since they are not carrying flow, this is of no concern.  If there are edges $(b_i,c_j)$ and $(b_i,c_k)$ in $E_p$
where $b_i\in B(S), c_j,c_k\in\Gamma(B(S))$, then
$u_{ij}/p_j=u_{ik}/p_k$. Since the prices in $\Gamma(B(S))$ are
multiplied by a common factor $x$, $u_{ij}/p_j$ and $u_{ik}/p_k$
remain equal after a price adjustment. Similarly, if $b_i \not\in B(S)$ and $c_j,c_k \not\in \Gamma(B(S))$, the
ratios $u_{ij}/p_j$ and $u_{ik}/p_k$ are unchanged.
However, the goods in
$C\setminus\Gamma(B(S))$ will become more attractive, so there may
be edges from $B(S)$ to $C\setminus\Gamma(B(S))$ entering the equality
network, and the increase of prices needs to stop when this
happens. Define such a factor to be $\xeq(S)$, that is,
\[
\xeq(S)=\min \set{\frac{u_{ij}}{p_j}\cdot\frac{p_k}{u_{ik}}}{b_i\in B(S),
(b_i,c_j)\in E_p, c_k\notin\Gamma(B(S))}.\]
We need $O(n^2)$ multiplications/divisions to compute $\xeq(S)$.
When we increase the prices of the goods in $\Gamma(B(S))$ by a
common factor $x\leq \xeq(S)$, the equality edges in
$(B(S)\times \Gamma(B(S))) \cup ((B \setminus B(S)) \times (C \setminus \Gamma(B(S))))$ will remain in the network.

We also need to make sure that the surpluses stay nonnegative. The surpluses of goods stay the same; they stay zero for the goods in $\Gamma(B(S))$ since inflow and outflow is changed by the same factor $x$ and they stay unchanged for the goods outside $\Gamma(B(S))$ since inflow and outflow stay the same. The capacities of the edges $(s,b_j)$ and $(c_j,t)$ for $c_j \in \Gamma(B(S))$ are multiplied by $x$, the capacities of the other edges incident to $s$ or $t$ are unchanged. For buyers $b_i \in B(S)$ with $c_i \in C \setminus \Gamma(B(S))$ the surplus goes down as the capacity of the inedge stays at $p_i$ and the outflow $p_i - r(b_i)$ is multiplied by $x$. Thus we need $x \le p_i/(p_i - r(b_i))$ for all such $i$.

We distinguish four kinds of buyers:
\[ \begin{array}{ll}
\text{type 1} & \mbox{if $b_i\in B(S)$ and $c_i\in\Gamma(B(S))$};  \\
\text{type 2} & \mbox{if $b_i\in B(S)$ and $c_i\notin\Gamma(B(S))$}; \\
\text{type 3} & \mbox{if $b_i\notin B(S)$ and $c_i\in\Gamma(B(S))$};\\
\text{type 4} & \mbox{if $b_i\notin B(S)$ and $c_i\notin\Gamma(B(S))$}. 
\end{array}\]

\begin{theorem}\label{thm:adjusted-flow}
Given a balanced flow $f$ in $N_p$, a positive surplus bound $S$, and a parameter $x > 1$ such that  $x\leq\min \set{p_i/(p_i-r(b_i))}{b_i\in B(S), c_i\notin \Gamma(B(S)) }$ and $x\leq \xeq(S)$, the flow $f'$ defined in (\ref{new flow interior}) to (\ref{new flow s}) is a feasible flow in the equality network with respect to the prices in (\ref{new prices}). The surplus of each good remains unchanged, and the
surpluses of the buyers become:
\[
r'(b_i)=\left\{
\begin{array}{ll}
x\cdot r(b_i)  & \mbox{if $b_i\in B(S), c_i\in\Gamma(B(S))$ (type 1)};  \\
(1-x)p_i+x\cdot r(b_i)     & \mbox{if $b_i\in B(S), c_i\notin\Gamma(B(S))$ (type 2)}; \\
(x-1)p_i+r(b_i)        & \mbox{if $b_i\notin B(S), c_i\in\Gamma(B(S))$ (type 3)};\\
r(b_i)          & \mbox{if $b_i\notin B(S),
c_i\notin\Gamma(B(S))$ (type 4)}. 
\end{array}
\right.
\]
\end{theorem}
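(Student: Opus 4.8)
The statement is a verification with four ingredients: $f'$ respects the capacities of $N_{p'}$ and flow conservation; every edge carrying positive $f'$-flow is an equality edge of $N_{p'}$; the surplus of each good is unchanged; and the buyer surpluses are the four stated expressions. The organizing principle is the partition of buyer/good pairs into the four types, together with two structural facts already available: by definition of $\Gamma(B(S))$ there is no edge of $E_p$ from $B(S)$ to $C\setminus\Gamma(B(S))$, and by Lemma~\ref{thm:bal1} no flow-carrying edge of $E_p$ runs from $B\setminus B(S)$ to $\Gamma(B(S))$ while the goods in $\Gamma(B(S))$ have zero surplus. In particular every flow-carrying edge of $f$ lies in $B(S)\times\Gamma(B(S))$ or in $(B\setminus B(S))\times(C\setminus\Gamma(B(S)))$, which is precisely what makes the two descriptions of $f'_{ij}$ (scale by $x$ iff $c_j\in\Gamma(B(S))$; scale by $x$ iff $b_i\in B(S)$) agree.

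For feasibility I would argue edge class by edge class. Nonnegativity is immediate since $x>1$. Flow conservation at a buyer $b_i$: using the ``iff $b_i\in B(S)$'' description, $f'_{si}$ and all $f'_{ij}$ are multiplied by the common factor $x$ (if $b_i\in B(S)$) or all unchanged, so $f'_{si}=\sum_j f'_{ij}$. Conservation at a good $c_j$ is the symmetric statement using the ``iff $c_j\in\Gamma(B(S))$'' description. For the capacity of $(c_j,t)$, either both $p_j$ and $f_{jt}$ scale by $x$ or neither changes. For the capacity of $(s,b_i)$ the only nontrivial case is type 2: the capacity stays $p_i$ while $f'_{si}=xf_{si}=x(p_i-r(b_i))$ (using $f_{si}=p_i-r(b_i)$ from conservation and the surplus definition), and $x(p_i-r(b_i))\le p_i$ is exactly the hypothesis $x\le p_i/(p_i-r(b_i))$ (when $p_i=r(b_i)$ then $f_{si}=0$ and there is nothing to check); for types 1, 3, 4 the capacity either grows by $x$ or stays put while the flow does not exceed it.

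Next, that flow-carrying edges remain equality edges. For $(b_i,c_j)$ with $b_i\in B(S)$, $c_j\in\Gamma(B(S))$: all ratios $u_{i\ell}/p_\ell$ with $c_\ell\in\Gamma(B(S))$ are divided by $x$ and the others are unchanged, so the new bang-per-buck of $b_i$ is $\max(\alpha_i/x,\ \max_{c_\ell\notin\Gamma(B(S))}u_{i\ell}/p_\ell)$, and $(b_i,c_j)$ stays an equality edge precisely when $\alpha_i/x\ge u_{i\ell}/p_\ell$ for every $c_\ell\notin\Gamma(B(S))$, i.e.\ when $x\le\xeq(S)$. For $(b_i,c_j)$ with $b_i\notin B(S)$, $c_j\notin\Gamma(B(S))$: $p_j$ is unchanged and the only prices that move are those of goods in $\Gamma(B(S))$, whose ratios for $b_i$ only decrease, so $\alpha_i$ is unchanged and the edge survives. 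Finally the surplus formulas: $r'(c_j)=p'_j-\sum_i f'_{ij}$ equals $x(p_j-\sum_i f_{ij})=x\,r(c_j)=0$ for $c_j\in\Gamma(B(S))$ (the last step by Lemma~\ref{thm:bal1}) and equals $r(c_j)$ otherwise; and $r'(b_i)=p'_i-f'_{si}$, substituting $p'_i$ from (\ref{new prices}) and $f'_{si}$ from (\ref{new flow s}) with $f_{si}=p_i-r(b_i)$, yields in the four types the values $x\,r(b_i)$, $(1-x)p_i+x\,r(b_i)$, $(x-1)p_i+r(b_i)$, and $r(b_i)$.

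Everything here is bookkeeping; the only steps that are more than bookkeeping are the two places where the quantitative hypotheses are used — seeing that $x\le\xeq(S)$ is exactly the condition keeping the equality edges inside $\Gamma(B(S))$ alive, and that $x\le p_i/(p_i-r(b_i))$ is exactly the condition keeping the type-2 surpluses nonnegative (equivalently, $(s,b_i)$ within capacity). The practical pitfall is simply not losing track of which of the four edge classes one is in, and using the correct one of the two equivalent forms of $f'_{ij}$ when checking conservation at buyers versus at goods.
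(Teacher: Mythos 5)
Your proposal is correct and follows essentially the same route as the paper: a direct case analysis over the four buyer types using Lemma~\ref{thm:bal1} and the definition of $\Gamma(B(S))$, with $x\le\xeq(S)$ guaranteeing that flow-carrying equality edges survive and $x\le p_i/(p_i-r(b_i))$ guaranteeing nonnegative type-2 surpluses. The paper's own proof is terser (it relegates the equality-edge and conservation bookkeeping to the discussion preceding the theorem), but the substance is identical.
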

\begin{proof}
Since the flows on all edges associated with goods in
$\Gamma(B(S))$ are multiplied by $x$, the surplus of each good in
$\Gamma(B(S))$ remains zero. The surplus of type 2 buyers
decreases because the flows from a type 2 buyer $b_i$ are
multiplied by $x$, but its budget $p_i$ is not changed. The flow
after adjustment is $x(p_i-r(b_i))$; this is at most $p_i$.
The new surplus with respect to $f'$ is $r'(b_i)=(1-x)p_i+xr(b_i)$.

Since both money and flows are multiplied by $x$ for a type 1
buyer, their surplus is also multiplied by $x$. For a type 3 buyer
$b_i$, their flows are not changed, but their money is multiplied by
$x$, so the new surplus is $xp_i-(p_i-r(b_i))$. For type 4 buyers, neither flow nor budget changes, and hence the surplus does not change. 
\end{proof}

The surplus of type 1 and 3 buyers increases, the surplus of type 2 buyers decreases, and the surplus of type 4 buyers does not change. We define quantities $x_{23}(S)$ and $x_{24}(S)$ at which the surplus of a type 2 and type 3 buyer, respectively type 4 buyer becomes equal:
\begin{align*}
x_{23}(S) &= \min \set{\frac{p_i + p_j - r(b_j)}{p_i + p_j - r(b_i)}}{\text{$b_i$ is type 2 and $b_j$ is type 3 buyer}},\\
x_{24}(S) &= \min \set{\frac{p_i - r(b_j)}{p_i - r(b_{\Ran{i}})}}{\text{$b_i$ is type 2 and $b_j$ is type 4 buyer}}.
\end{align*}

\subsection{Whole procedure}\label{sec:whole-algorithm}

The whole algorithm is shown in Figure~\ref{fig:algo}. We call one
execution of the loop body an iteration. We use constants $R$, $\epsilon$, $\Delta$, and $\xmax$ as defined in the first line of the algorithm.

Our algorithm is essentially a repeated application of Theorem~\ref{thm:adjusted-flow}. We start with all prices equal to one and $f$ equal to a balanced flow.  In each iteration we choose appropriate values of $S$ and $x$ and then adjust prices and flow as given in (\ref{new prices}) to (\ref{new flow s}).

The price and flow adjustment decreases the surplus of type 2 buyers, increases the surplus of type 3 buyers and leaves the surplus of type 4 buyers unchanged. Thus there is the chance of the surplus of a type 2 buyer and a type 3 or 4 buyer to become equal. The effect on the norm of the surplus vector will be large if these surpluses were significantly different before the flow adjustment. These considerations led us to the following choice of $S$. We first determine the smallest  $\ell$ for which $\frac{r(b_\ell)}{r(b_{\ell+1})}>1+1/n$, and let $\ell=n$ when there is no such $\ell$. We then set $S=r(b_\ell)$ and obtain $B(S) = \sset{b_1,\ldots,b_\ell}$. With this choice the surpluses of the buyers in $B(S)$ are at least by a factor $1 + 1/n$ larger than the surpluses of the buyers outside $B(S)$ and two such surpluses becoming equal will constitute significant progress. If a new equality edge arises, the price adjustment by $x$ will not balance a big surplus and a small surplus. However, we will be able to use the new equality edge for balancing a big surplus and a small surplus. For technical reasons, we will never increase $x$ beyond $\xmax = 1 + 1/(Rn^3)$. 

When the total surplus becomes tiny (less than $\epsilon$), we exit from the loop. At this point, the price vector $p$ can be turned into a vector of equilibrium prices essentially by a rounding procedure. This will be discussed in Section~\ref{sec: rounding after termination}.

\begin{figure}[t]
\centerline{\framebox{\parbox{5.0in}{
\begin{tabbing}
555\=555\=555\=555\=\kill
\>\parbox{0.9\textwidth}{Set $R = 256$, $\epsilon={1}/(8n^{4n}U^{3n})$, $\Delta=n^5/\epsilon$, and
$\xmax = 1 + 1/(Rn^3)$;}\\[0.3em]
\>Set $p_i=1$ for all $i$ and compute a balanced flow $f$ in $N_p$; \\[0.3em]
\>{\bf Repeat}\\[0.3em]
\>\>Sort the buyers by their surpluses in decreasing order: $b_1,b_2,...,b_n$; \\[0.3em]
\>\>\parbox{0.8\textwidth}{Find the smallest  $\ell$ for which ${r(b_\ell)}/{r(b_{\ell+1})}>1+1/n$, and\\ 
\mbox{}\hfill let $\ell=n$ when there is no such $\ell$;}\\[0.3em]
\>\>Let $S=r(b_\ell)$ and $B(S) = \sset{b_1,\ldots,b_\ell}$; \\[0.3em]
\>\>Compute $x = \min(\xeq(S),x_{23}(S),x_{24}(S),\xmax)$;\\[0.3em]
\>\>compute $p'$ and $f'$ according to (\ref{new prices}) to (\ref{new flow s});\\[0.3em]
\>\>{\bf If} $x = \xeq$, modify $f'$ further according to the procedure in Figure~\ref{fig:aug};\\[0.3em]
\>\>$p = p'$ and $f = f'$;\\[0.3em]
\>\>update $f$ to a balanced flow in $N_p$; /* goods with zero surplus keep zero surplus */\\[0.3em]
\>{\bf Until} $|r(B)|<\epsilon$; \\[0.3em]
\>Round $p$ to equilibrium prices by the procedure in Figure~\ref{fig: final rounding};
\end{tabbing}
}}}\caption{The whole algorithm}\label{fig:algo}
\end{figure}

\begin{lemma}\label{pro:unchanged} Once the surplus of a good becomes zero, it stays zero. As long as a good has non-zero surplus, its price stays at one.
\end{lemma}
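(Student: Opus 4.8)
The plan is to prove both assertions together, by induction on the iterations of the main loop, maintaining the invariant that the set $Z$ of goods with zero surplus only grows and that every good not in $Z$ still has price $1$. After the initialization line all prices equal $1$, so the invariant holds for the initial balanced flow, with $Z$ equal to whatever goods happen to have zero surplus there.

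For the inductive step I would examine the three sub-steps inside one iteration that can change the flow, and hence the surpluses of goods. (i) The price-and-flow adjustment of (\ref{new prices})--(\ref{new flow s}): by Theorem~\ref{thm:adjusted-flow} the surplus of every good is unchanged, so no good leaves $Z$; and by (\ref{new prices}) the only prices that change are those of goods in $\Gamma(B(S))$, which by Lemma~\ref{thm:bal1} have zero surplus in the current balanced flow and therefore already lie in $Z$. Hence a good's price is first raised only once the good is in $Z$, while goods outside $\Gamma(B(S))$ keep their prices. (ii) The extra modification performed when $x=\xeq$ (Figure~\ref{fig:aug}): by construction it only pushes surplus out of the good $c$ at the new equality edge and reshuffles surplus among buyers, so it does not raise the surplus of any good; in particular a good in $Z$ stays in $Z$, since its surplus cannot drop below zero by feasibility. (iii) The recomputation of a balanced flow: the last sentence of Lemma~\ref{thm:compute-balanced} says that goods with zero surplus keep zero surplus, so $Z$ is again preserved. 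Composing the three sub-steps, a good with zero surplus at the start of the iteration still has zero surplus at its end, $Z$ never shrinks, and no good leaves $Z$ at the moment its price changes.

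The two claims of the lemma are then immediate. ``Once the surplus of a good becomes zero, it stays zero'' is exactly the monotonicity of $Z$: a good that first attains zero surplus at some sub-step lies in $Z$ from then on. For the second claim, suppose a good $c_j$ has positive surplus at some point in the loop; then $c_j\notin Z$ at that point, hence, by monotonicity, $c_j$ has never been in $Z$; since a good's price is increased only while that good belongs to $\Gamma(B(S))\subseteq Z$ and prices never decrease, the price of $c_j$ has never been increased and still equals its initial value $1$.

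The only place that needs genuine care is sub-step (ii): one must confirm that the flow modification of Figure~\ref{fig:aug} --- whose stated purpose is to reduce the surplus of the good at the new equality edge and to balance the affected buyers --- never increases the surplus of any good; granting that, the rest of the proof is a direct appeal to Theorem~\ref{thm:adjusted-flow}, Lemma~\ref{thm:bal1}, and Lemma~\ref{thm:compute-balanced}.
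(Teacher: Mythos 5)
Your proof is correct and follows essentially the same route as the paper's: decompose an iteration into the price/flow adjustment (Theorem~\ref{thm:adjusted-flow}), the augmentation of Figure~\ref{fig:aug} (whose non-increase of good surpluses is exactly the last assertion of Lemma~\ref{thm:case3a}), and the rebalancing (Lemma~\ref{thm:compute-balanced}), each of which preserves zero surpluses. You are in fact more explicit than the paper about the second claim, which the paper leaves implicit; your observation that prices are only ever raised on $\Gamma(B(S))\subseteq Z$ via Lemma~\ref{thm:bal1} is the intended argument.
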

\begin{proof} Augmentation of a flow to a maximum flow does not increase any surplus. Balancing a flow leaves the surplus of all goods unchanged (Lemma~\ref{thm:compute-balanced}). Increasing prices and flows according to Theorem~\ref{thm:adjusted-flow} does not change the surplus of any good. Neither does the augmentation along a new equality edge according to the procedure in Figure~\ref{fig:aug}. \end{proof}

\begin{lemma}\label{r(bell) is large} Let $b_1,b_2,\ldots,b_n$ be the buyers sorted in order of decreasing surpluses. Let  $\ell$ be minimal such that $r(b_\ell)/r(b_{\ell + 1} > 1 + 1/n$. Let $\ell = n$, if there is no such $\ell$. Then $r(b_\ell) \ge \abs{r(B)}/(en)$, $r(b_i) \le e n r(b_\ell)$ for all $i$, and $r(b_i) \ge r(b_\ell)$ for every type 2 buyer $b_i$. 
\end{lemma}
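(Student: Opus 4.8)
The plan is to treat the three assertions almost independently; the first two share one elementary observation about the multiplicative gaps among the largest surpluses, and the third is purely a matter of unwinding definitions.

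The third assertion is immediate. By construction $S = r(b_\ell)$ and $B(S) = \sset{b_1,\ldots,b_\ell}$, and a type~2 buyer is in particular a member of $B(S)$. Since the buyers are indexed so that $r(b_1)\ge r(b_2)\ge\cdots\ge r(b_n)$, every $b_i$ with $i\le\ell$ has $r(b_i)\ge r(b_\ell)$; in particular this holds for every type~2 buyer (indeed for every buyer in $B(S)$).

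For the first two assertions the key point is the minimality of $\ell$: for every index $j$ with $1\le j\le\ell-1$ we must have $r(b_j)/r(b_{j+1})\le 1+1/n$, since otherwise a strictly smaller index would have been selected. Multiplying these $\ell-1$ inequalities together (an empty product when $\ell=1$) gives $r(b_1)\le(1+1/n)^{\ell-1}r(b_\ell)\le(1+1/n)^{n-1}r(b_\ell)\le e\,r(b_\ell)$, where the last step uses the standard bound $(1+1/n)^{n-1}<e$. Because the surpluses are non-increasing in the index, this yields $r(b_i)\le r(b_1)\le e\,r(b_\ell)\le en\,r(b_\ell)$ for every $i$, which is the second assertion. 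Finally, summing $r(b_i)\le r(b_1)$ over all $n$ buyers gives $\abs{r(B)}=\sum_i r(b_i)\le n\,r(b_1)\le en\,r(b_\ell)$, i.e.\ $r(b_\ell)\ge\abs{r(B)}/(en)$, which is the first assertion.

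No step presents a genuine difficulty; the proof is a few lines of elementary estimates. The only minor points to keep in mind are the degenerate case $\ell=1$, where the telescoping product is empty and $r(b_1)\le e\,r(b_\ell)$ is the trivial equality, and --- if one wishes to be careful about the ratio $r(b_\ell)/r(b_{\ell+1})$ when a tail of the surpluses vanishes --- the convention that this ratio is $+\infty$ when $r(b_{\ell+1})=0<r(b_\ell)$, which does not affect any of the gap inequalities among $b_1,\ldots,b_\ell$ used above.
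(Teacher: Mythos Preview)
Your proof is correct and follows essentially the same approach as the paper: both use the minimality of $\ell$ to telescope the ratios $r(b_j)/r(b_{j+1})\le 1+1/n$ for $j<\ell$, obtaining $r(b_1)\le e\,r(b_\ell)$, combine with $r(b_1)\ge|r(B)|/n$ for the first claim, and note that type~2 buyers lie in $B(S)=\{b_1,\ldots,b_\ell\}$ for the third. Your exposition is in fact slightly more careful (handling the empty product at $\ell=1$ and the $r(b_{\ell+1})=0$ convention), and you correctly observe that the argument actually yields the stronger bound $r(b_i)\le e\,r(b_\ell)$, which is what the paper uses downstream.
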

\begin{proof} Clearly, $b_1$ is at least $|r(B)|/n$. Also, $\frac{r(b_j)}{r(b_{j+1})}\leq 1+1/n$ for $j<\ell$, and hence $r(b_\ell)\geq r(b_1)(1+1/n)^{-n}>|r(B)|/(e\cdot n)$. If $b_i$ is a type 2 buyer, $i \le \ell$. \end{proof}

\begin{lemma} With $S$ as defined above and $x = \min(\xeq(S),x_{23}(S),x_{24}(S),\xmax)$, $f'$ is a feasible flow in $N_{p'}$. \end{lemma}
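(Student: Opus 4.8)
The plan is to derive the lemma directly from Theorem~\ref{thm:adjusted-flow}. That theorem already guarantees that the flow $f'$ of Theorem~\ref{thm:adjusted-flow} is feasible in $N_{p'}$ as soon as $x\le\xeq(S)$ and $x\le\min\set{p_i/(p_i-r(b_i))}{b_i\text{ of type 2}}$. The first inequality is immediate, since $x$ is by definition the minimum of a list that contains $\xeq(S)$. So everything reduces to showing that for each type~2 buyer $b_i$ we have $x\le p_i/(p_i-r(b_i))$, equivalently that the adjusted surplus $r'(b_i)=p_i-x(p_i-r(b_i))$ — an affine, non-increasing function of $x$ once $p_i-r(b_i)>0$ — is nonnegative at the chosen $x$. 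If $p_i=r(b_i)$ there is nothing to prove ($r'(b_i)=p_i>0$), so assume $p_i-r(b_i)>0$; also recall $x>1$, since each of $\xeq(S)$, $x_{23}(S)$, $x_{24}(S)$, $\xmax$ exceeds $1$.

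\textbf{Main case: $B(S)\neq B$.} Fix a type~2 buyer $b_i$ (if there is none, the claim is vacuous) and pick any $b_j\in B\setminus B(S)$; then $b_j$ is of type~3 or type~4 and $r(b_i)\ge S>r(b_j)$. If $b_j$ is of type~4, then $r'(b_j)=r(b_j)$ is independent of $x$, and at $\widetilde x:=(p_i-r(b_j))/(p_i-r(b_i))$ — the term contributed by the pair $(b_i,b_j)$ to the minimum defining $x_{24}(S)$ — we have $r'(b_i)=r(b_j)\ge 0$; since $x\le x_{24}(S)\le\widetilde x$ and $r'(b_i)$ is non-increasing in $x$, it follows that $r'(b_i)\ge 0$. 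If $b_j$ is of type~3, then $r'(b_j)=(x-1)p_j+r(b_j)$ is non-decreasing in $x$, and at $\widetilde x:=(p_i+p_j-r(b_j))/(p_i+p_j-r(b_i))$ — the term contributed by $(b_i,b_j)$ to the minimum defining $x_{23}(S)$, which is $\ge 1$ because $r(b_i)>r(b_j)$ — the two surpluses coincide, so $r'(b_i)=r'(b_j)\ge r(b_j)\ge 0$ there; again $x\le x_{23}(S)\le\widetilde x$ and monotonicity give $r'(b_i)\ge 0$.

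\textbf{Remaining case: $B(S)=B$.} Here there are no type~3 or type~4 buyers, so $x_{23}(S)$ and $x_{24}(S)$ impose no constraint and a separate argument is needed. All buyers have positive surplus and $\Gamma(B(S))=\Gamma(B)$. By Lemma~\ref{thm:bal1} every good in $\Gamma(B)$ is completely sold, so since $\sum_j r(c_j)=\abs{r(B)}\ge\epsilon>0$ there is a good $c_k\notin\Gamma(B)$ with $r(c_k)>0$. No equality edge enters $c_k$, hence it receives no flow and $r(c_k)=p_k$; moreover $p_k=1$ by Lemma~\ref{pro:unchanged}, so $\abs{r(B)}\ge r(c_k)=1$. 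Now let $b_i$ be any type~2 buyer. By Lemma~\ref{r(bell) is large}, $r(b_i)\ge r(b_n)\ge\abs{r(B)}/(en)\ge 1/(en)$, and since $c_i\notin\Gamma(B)$ we get $p_i=1$ from Lemma~\ref{pro:unchanged}. Therefore
\[ \frac{p_i}{p_i-r(b_i)}=\frac{1}{1-r(b_i)}\ \ge\ \frac{1}{1-1/(en)}\ =\ 1+\frac{1}{en-1}\ \ge\ 1+\frac{1}{Rn^3}\ =\ \xmax\ \ge\ x, \]
using $Rn^3=256\,n^3\ge en-1$. This gives $x\le p_i/(p_i-r(b_i))$ for every type~2 buyer, which finishes the proof.

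\textbf{Expected main obstacle.} The only delicate step is the case $B(S)=B$: there $x_{23}(S)$ and $x_{24}(S)$ give no bound at all, and one must notice that this situation forces a full unit of surplus onto a good that is still unsold and hence still priced at $1$, so $\abs{r(B)}\ge 1$; combined with the fact that $\xmax$ exceeds $1$ by only $1/(Rn^3)$, the type~2 feasibility bound then holds automatically. The main case is routine monotonicity of the affine surplus $r'(b_i)$ together with the definitions of $x_{23}(S)$ and $x_{24}(S)$.
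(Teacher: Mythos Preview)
Your proof is correct and follows essentially the same approach as the paper: reduce to Theorem~\ref{thm:adjusted-flow}, split into the case where some buyer lies outside $B(S)$ (so $x_{23}$ or $x_{24}$ controls the type~2 surplus) versus $B(S)=B$ (where the existence of an unsold good forces $\abs{r(B)}\ge 1$, $p_i=1$ for type~2 buyers, and $\xmax$ is small enough). The paper's version is more terse---it calls your main case ``obvious''---while you spell out the monotonicity argument explicitly; the second case is argued identically in both.
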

\begin{proof} We only need to show that the surpluses of all buyers are nonnegative. If there is a type 3 or 4 buyer, this is obvious, because $x$ is at most the minimum of $x_{23}(S)$ and $x_{24}(S)$. If there are no type 3 or 4  buyers, $B(S)=B$, and each $r(b_i)$ is larger than $|r(B)|/(e\cdot n)$. Since goods in $\Gamma(B(S))$ have surplus zero and the total surplus is positive, there must be goods outside $\Gamma(B(S))$. Goods not in $\Gamma(B(S))$ have no buyers and hence positive surplus. Thus their price is one, and their surplus is also one. Hence $|r(B)| \ge 1$ and the price of the good owned by  a type 2 buyer is one. Thus the surplus of a type 2 buyer cannot reach zero, when $x = \xmax$. \end{proof}

In order to bound the number of iterations, we distinguish between $\xmax$-iterations ($x = \xmax$) and balancing iterations ($x < \xmax)$. In order to bound the number of $\xmax$-iterations, we first prove an upper bound on the maximum price (Lemma~\ref{thm:largest}) and then combine this upper bound with the observation that an $\xmax$-iteration increases at least one price by a factor $\xmax$ (Lemma~\ref{lem:number-of-bad}).
In order to bound the number of balancing iterations, we study the evolution of the norm of the surplus vector. We show that an $\xmax$-iteration increases it by at most a factor $1+O(1/n^3)$ and a balancing iteration divides it by at least a factor
of $1+\Omega(1/n^3)$. Since we have already a bound on the number of bad iterations, a bound on the number of good iterations follows (Lemma~\ref{lem:number-of-good}). 

\begin{lemma}\label{thm:largest} In the course of the algorithm, all prices stay bounded by $(nU)^{n-1}$.
\end{lemma}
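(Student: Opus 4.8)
The plan is to reduce the statement to the following self‑contained claim about the current prices and flow, which then applies at essentially every point of the algorithm: \emph{if $p$ is a price vector and $f$ a feasible flow in $N_p$ such that (i)~some good has price exactly $1$ and (ii)~every good whose price exceeds $1$ has zero surplus with respect to $f$, then $p_{\max}\le (nU)^{n-1}$.} Property~(ii) holds everywhere by Lemma~\ref{pro:unchanged}. Property~(i) holds whenever the total surplus $|r(B)|$ is positive: then some good has positive surplus and hence, by Lemma~\ref{pro:unchanged}, price $1$. This covers the whole main loop, since $|r(B)|\ge\epsilon$ at the start of every iteration after the first, the price adjustment only raises prices of goods in $\Gamma(B(S))$ (which have zero surplus by Lemma~\ref{thm:bal1}) while a positive‑surplus good keeps price $1$ (Lemma~\ref{pro:unchanged}); and in the first iteration all prices start at $1$ and are multiplied by at most $\xmax<2$, so the bound holds there as well.

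To prove the claim I would sort the goods by price, $1=p_{(1)}\le p_{(2)}\le\cdots\le p_{(n)}=p_{\max}$ (using (i) for the leftmost equality and $p_j\ge 1$ throughout). Since $p_{\max}=\prod_{j=1}^{n-1} p_{(j+1)}/p_{(j)}$, it suffices to show that no consecutive ratio exceeds $nU$. Suppose for contradiction $p_{(j+1)}>nU\cdot p_{(j)}$ for some $j$, and put $L=\{c_{(1)},\dots,c_{(j)}\}$ and $H=\{c_{(j+1)},\dots,c_{(n)}\}$, so both blocks are nonempty and $p_m/p_\ell>nU$ for all $c_m\in H$, $c_\ell\in L$. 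Two elementary consequences of linearity and integrality: (a)~if $b_i$ has an equality edge into some $c_m\in H$, then for every $c_\ell\in L$, $u_{i\ell}\le (p_\ell/p_m)u_{im}<U/(nU)<1$, hence $u_{i\ell}=0$; (b)~no buyer has equality edges into both blocks, since $u_{i\ell}/p_\ell=u_{im}/p_m$ with $u_{i\ell}\ge 1$, $c_\ell\in L$, $c_m\in H$ would force $u_{im}=(p_m/p_\ell)u_{i\ell}>nU>U$. Consequently every buyer spends only inside $L$, only inside $H$, or nothing. Now Assumption~6 (irreducibility), applied to the proper nonempty agent set $P=\{i:c_i\in H\}$, produces an agent $i_0$ with $c_{i_0}\in H$ and some $c_m\in L$ with $u_{i_0 m}>0$; then $\alpha_{i_0}\ge u_{i_0 m}/p_m\ge 1/p_{(j)}$, so every good bought by $b_{i_0}$ has price at most $U p_{(j)}<p_{(j+1)}$ and hence lies in $L$. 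Thus $b_{i_0}$ spends nothing inside $H$.

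To close the argument I would count the money entering $H$. Every good of $H$ has price $>1$, so by (ii) it has zero surplus, and the total inflow into $H$ equals $p(H)$. This inflow is supplied only by buyers that spend something inside $H$, and by the dichotomy $b_{i_0}$ is not one of them. Since distinct buyers own distinct goods, the combined budget of the buyers feeding $H$ is at most $p(C)-p_{i_0}=p(L)+p(H)-p_{i_0}$, and each spends at most its budget, giving $p(H)\le p(L)+p(H)-p_{i_0}$, i.e.\ $p_{i_0}\le p(L)$. But $p(L)\le (n-1)p_{(j)}<p_{(j+1)}\le p_{i_0}$, a contradiction. Hence no consecutive ratio exceeds $nU$ and $p_{\max}\le (nU)^{n-1}$.

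The step I expect to be the main obstacle is the invocation of Assumption~6. The cheap/expensive dichotomy for buyers of expensive goods follows directly from integrality, but by itself it does not produce a contradiction: one must exhibit an agent who \emph{owns} an expensive good and is nevertheless forced to spend only on cheap ones, and irreducibility is precisely what delivers such an agent (without it the bound is genuinely false — a disconnected market can have arbitrarily skewed prices). The remaining ingredients --- the money‑conservation bookkeeping and the verification that property~(i) is available whenever a price is changed --- are routine.
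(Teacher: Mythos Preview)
Your proof is correct and follows the same overall strategy as the paper's: reduce to showing that consecutive ratios in the sorted price list are at most $nU$, using the existence of a price-$1$ good, the zero surplus of goods with price exceeding $1$, integrality of utilities, irreducibility (Assumption~6), and a money-conservation count. The only organizational difference is that you apply Assumption~6 directly to the set of \emph{owners} of the expensive goods $H$ to produce an agent $i_0$ with $c_{i_0}\in H$ who likes a cheap good, and then do a single flow-accounting step; the paper instead fixes an arbitrary proper subset $\hat C$, looks at the buyers $\hat B=\Gamma(\hat C)$ adjacent to it, first disposes of the case where some $b_i\in\hat B$ likes a good outside $\hat C$ (giving ratio $\le U$), and only then invokes irreducibility on $\hat B$ together with the analogous money count --- your route avoids that case split but is otherwise the same argument.
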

\begin{proof}
It is enough to show that during the entire algorithm, for any non-empty and proper subset $\hat{C}$ of goods, there are goods $c_i\in\hat{C}, c_j\notin\hat{C}$ such that $p_i/p_j\leq nU$. So, when we sort all the prices in decreasing order, the ratio of two adjacent prices is at most $nU$. Since there is always a good with
price 1, the largest price is at most $(nU)^{n-1}$.

If $\hat{C}$ contains a good with surplus, then it contains a good with price one. The claim follows.

So assume all goods in $\hC$ have surplus zero, and let $\hat{B}=\Gamma(\hat{C})$ be the set of buyers adjacent to
goods in $\hat{C}$ in the equality graph.

If there exist $b_i \in \hB$ and $c_j \in C \setminus \hC$ with
$u_{ij}>0$, let $c_k\in \hC$ be one of the goods adjacent to $b_i$ in the
equality graph. Then $u_{ij}/p_j\leq \alpha_i = u_{ik}/p_k$. So,
$p_k/p_j\leq u_{ik}/u_{ij}\leq U$.

Otherwise, $u_{ij} = 0$ for every $b_i \in \hB$ and
$c_j \in C \setminus \hC$, i.e., the buyers in $\hB$ are only interested in goods in $\hC$. Then $\hB$ must be a proper subset of $B$ as otherwise no buyer would be interested in the goods in $C \setminus \hC$, a contradiction to Assumption (5). Since the buyers in $\hB$ must be interested in at least one good that is not owned by them (Assumption (6)), there must be a $k$ such that $c_k \in \hC$ and
$b_k \not\in \hB$.
Let $B' = \set{j }{b_j \in\hat{B}, c_j \not\in \hat{C}}$ and $B'' = \set{j}{b_j \not\in
\hat{B}, c_j \in \hat{C} }$. We have:
\begin{align*}
p_k &\le p(B'') = p(\hat{C}) - p(\set{j}{b_j \in \hat{B}, c_j \in \hat{C}}) \\
         & \le p(\hat{B}) - p(\set{j}{ b_j \in \hat{B}, c_j \in \hat{C}}) = p(B').
\end{align*}
The first inequality follows from $k \in B''$, and the second inequality holds since goods in $\hC$ have surplus zero and the flow into $\hC$ comes from $\hB$. We now have established
$p(B') \ge p_k > 0$. Thus, $B'$ is non-empty, and there is a $j
\in B'$ with $p_j \geq p(B')/n$. We conclude $p_k \leq n p_j$.
\end{proof}

We immediately obtain a bound on the number of $\xmax$-iterations.

\begin{lemma}\label{lem:number-of-bad} The number of $\xmax$-iterations is $O(n^5 \log (nU))$. 
\end{lemma}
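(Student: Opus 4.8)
The plan is to charge each $\xmax$-iteration to a good whose price is multiplied by the factor $\xmax$ during that iteration, and then to bound how often any single good can be charged, using the a-priori price bound of Lemma~\ref{thm:largest}.

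First I would check that in every iteration the set $\Gamma(B(S))$ is non-empty, so that some price genuinely increases. We only execute the loop body while $\abs{r(B)} \ge \epsilon > 0$, and then Lemma~\ref{r(bell) is large} gives $S = r(b_\ell) \ge \abs{r(B)}/(en) > 0$; hence $b_1 \in B(S)$, so $B(S) \ne \emptyset$. Since every buyer likes some good (Assumption 4), every buyer is incident to at least one equality edge, and therefore $\Gamma(B(S)) \ne \emptyset$. By the price update rule (\ref{new prices}), in an $\xmax$-iteration (where $x = \xmax$) every good in $\Gamma(B(S))$ has its price multiplied by $\xmax$; pick one such good $g$ and charge the iteration to $g$.

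Second, I would recall that no step of the loop body ever decreases a price: the price-and-flow adjustment of Theorem~\ref{thm:adjusted-flow} only multiplies prices by $x > 1$ or leaves them unchanged, the augmentation along a new equality edge (Figure~\ref{fig:aug}) and the rebalancing step change no prices, and by Lemma~\ref{pro:unchanged} goods with non-zero surplus keep price one. Thus every price is $1$ initially, is non-decreasing throughout, and is bounded above by $(nU)^{n-1}$ by Lemma~\ref{thm:largest}. Consequently a fixed good $g$ can be charged at most $k$ times, where $k$ is the largest integer with $\xmax^{\,k} \le (nU)^{n-1}$, i.e. $k \le (n-1)\ln(nU)/\ln \xmax$. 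The routine estimate $\ln \xmax = \ln\!\bigl(1 + 1/(Rn^3)\bigr) \ge 1/(2Rn^3)$ (using $\ln(1+y) \ge y/2$ for $0 \le y \le 1$) gives $k \le 2R(n-1)n^3\ln(nU) = O(n^4 \log(nU))$, since $R = 256$ is a constant. Summing over the $n$ goods, the total number of charges — hence the number of $\xmax$-iterations — is $O(n^5 \log(nU))$.

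I do not expect a real obstacle here; the argument is essentially bookkeeping. The only two points that require care are (i) confirming $\Gamma(B(S)) \ne \emptyset$, so that an $\xmax$-iteration really does multiply at least one price by $\xmax$ (this is why a per-good charging argument is used rather than merely tracking the maximum price, which an $\xmax$-iteration need not increase), and (ii) verifying that nothing in the loop body — neither the flow adjustment, nor the new-edge augmentation, nor the rebalancing — ever decreases a price; both follow from the lemmas already established.
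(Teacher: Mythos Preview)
Your argument is correct and is essentially the paper's own proof, just with the implicit steps spelled out: the paper simply invokes Lemma~\ref{thm:largest}, observes that each price can be multiplied by $\xmax$ at most $O(\log_{\xmax}(nU)^n)=O(n^4\log(nU))$ times, and multiplies by $n$ goods. Your additional checks that $\Gamma(B(S))\ne\emptyset$ and that prices never decrease are the tacit assumptions behind that one-line argument.
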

\begin{proof} By Lemma~\ref{thm:largest}, prices are bounded by $(nU)^n$ and hence every price can be multiplied by $\xmax$ at most $O(\log_{\xmax}(nU)^{n})=O(n^4\log (nU))$ times. We conclude that the number of $\xmax$-iterations is $O(n^5\log(nU))$. \end{proof}

In order to bound the number of balancing iterations, we study the evolution of the norm of the surplus vector.

\begin{lemma}\label{thm:main} Let $f$ be a balanced flow in $N_p$ at the beginning of an iteration, let $f'$ be the flow in $N_{p'}$ constructed in the iteration, and let $r'(B)$ be the surplus vector with respect to $f'$. Then
\begin{compactenum}[\hspace{\parindent}--]
\item $\norm{r'(B)} \le (1 + O(1/n^3)) \cdot \norm{r(B)}$ in an $\xmax$-iteration ($x = \xmax$), and 
\item $\norm{r'(B)} \le \norm{r(B)}/(1 +\Omega(1/n^3))$ in a balancing iteration $(x < \xmax$).
\end{compactenum}
\end{lemma}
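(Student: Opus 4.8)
The plan is to use Lemma~\ref{thm:l2norm} as the workhorse in both cases, applied after we understand how the surplus vector changes. By Theorem~\ref{thm:adjusted-flow} the change in surpluses is completely explicit: type 1 and type 3 surpluses go up, type 2 surpluses go down, type 4 surpluses are unchanged, and no good's surplus changes. So I would first record the total increase and total decrease. Using the new-prices/new-flow formulas, the total increase among type 1 and type 3 buyers is $(x-1)\sum_{b_i\text{ type 1}}r(b_i) + (x-1)\sum_{b_i\text{ type 3}}p_i$, while the total decrease among type 2 buyers is $(x-1)\sum_{b_i\text{ type 2}}(p_i - r(b_i))$. Since the total surplus $|r(B)|$ is preserved only up to the flow on the new equality edge (which is later used for balancing), for the ``raw'' flow $f'$ of Theorem~\ref{thm:adjusted-flow} the total surplus is actually unchanged: $\sum_i r'(b_i) = \sum_i r(b_i)$ — because the total capacity out of $s$ and into $t$ both scale so that $\sum_j r(c_j)$ is unchanged. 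That gives the balance identity: total increase $=$ total decrease. This is the hypothesis $\delta \ge \sum \delta_i$ we need to invoke Lemma~\ref{thm:l2norm}.

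For the $\xmax$-iteration (first bullet), I would bound the \emph{increase} in the norm crudely. Here $x = \xmax = 1 + 1/(Rn^3)$, so every surplus changes by a multiplicative factor at most $\xmax$ in the type-1 case, and by an \emph{additive} amount at most $(x-1)p_i$ for type 3. Since prices are bounded (Lemma~\ref{thm:largest}) this additive change is bounded, but the cleaner route is: each coordinate $r'(b_i)$ satisfies $r'(b_i) \le x\,r(b_i) + (x-1)p_i$; and since the relevant $p_i$ equals the budget, which is at most $|r(B)| + (\text{total spent})$... actually the simplest bound uses $r'(b_i) \le r(b_i) + (x-1)\cdot(\text{something}\le p_i\le$ total money$)$. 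I expect the paper bounds things via $r(b_i) \le en\, r(b_\ell)$ and $r(b_\ell) \le |r(B)|$ from Lemma~\ref{r(bell) is large}, together with $x-1 = O(1/n^3)$, to get $\norm{r'(B)} \le \norm{r(B)} + O(1/n^3)\cdot O(\text{poly stuff})\cdot\norm{r(B)}$; care is needed to keep the polynomial factors from $p_i$ under control, and this is where the constant $R = 256$ and the factor $1/n^3$ in $\xmax$ were chosen. Then we must also account for the later extension of $f'$ to a balanced flow, which by Lemma~\ref{thm:compute-balanced} only decreases $\norm{r(B)}$, so it does not hurt the upper bound.

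For the balancing iteration (second bullet) — the main obstacle — I would argue that a genuine amount of surplus, of size $\Omega(1/n^3)\cdot|r(B)|$ or comparable, moves from a ``big'' buyer to a ``small'' buyer. There are two subcases. If $x = x_{23}(S)$ or $x = x_{24}(S)$, then by construction the surplus of some type 2 buyer $b_i$ and some type 3 or type 4 buyer $b_j$ become equal; since $b_i \in B(S)$ and $b_j \notin B(S)$, Lemma~\ref{r(bell) is large} gives $r(b_i) \ge r(b_\ell) \ge (1+1/n)\,r(b_j)$ before the step, so their gap was at least $r(b_\ell)/(n+1) \ge |r(B)|/(en(n+1))$; equalizing them means a transfer of at least half that gap, i.e.\ $\Omega(|r(B)|/n^2)$. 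If instead $x = \xeq(S)$, a new equality edge $(b,c)$ with $b \in B'$, $c\notin\Gamma(B')$ appears and the augmentation in Figure~\ref{fig:aug} uses it to move surplus from $b$ (which has large surplus, $\ge r(b_\ell)$) to neighbors of $c$ (which have small surplus). In all cases, I would then feed the ``before'' surplus vector as $(a,b_1,\ldots,b_k)$ and the ``after'' vector as $(a+\delta,b_1-\delta_1,\ldots)$ into Lemma~\ref{thm:l2norm} — with the single large decreasing coordinate playing the role of $a$, or more carefully grouping big vs.\ small coordinates — to conclude $\norm{r'(B)}^2 \le \norm{r(B)}^2 - \delta^2$ where $\delta = \Omega(|r(B)|/n^2)$. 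Since $\norm{r(B)} \le |r(B)| \le \sqrt n\,\norm{r(B)}$, this yields $\delta \ge \Omega(\norm{r(B)}/n^{2.5})$... here I must be careful: to get the claimed $1 + \Omega(1/n^3)$ factor I need $\delta \gtrsim \norm{r(B)}/n^{3}$, which follows since $\delta = \Omega(|r(B)|/n^2) \ge \Omega(\norm{r(B)}/n^2)$, comfortably better than needed. Then $\norm{r'(B)}^2 \le \norm{r(B)}^2(1 - \Omega(1/n^4))$, hence $\norm{r'(B)} \le \norm{r(B)}/(1+\Omega(1/n^3))$ after the square root. Finally, the subsequent extension to a balanced flow only helps. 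The delicate point throughout is that $f'$ before balancing may not be feasible-as-balanced, so I must track that (i) the raw $f'$ of Theorem~\ref{thm:adjusted-flow} has the surpluses stated, (ii) the Figure~\ref{fig:aug} augmentation is surplus-preserving in total and monotone in the right direction, and (iii) balancing is norm-nonincreasing — and then chain the three inequalities.
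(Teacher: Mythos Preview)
Your overall strategy---understand surplus changes via Theorem~\ref{thm:adjusted-flow}, then control the $\ell_2$-norm via Lemma~\ref{thm:l2norm}---matches the paper. But there is a genuine gap in your treatment of the $\xmax$-case, and a related omission in the balancing case.

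For the first bullet you try to bound the type~3 increase coordinate-wise by $(x-1)p_i$ and then hope the price factors can be ``kept under control.'' They cannot: $p_i$ may be as large as $(nU)^{n-1}$ while $\norm{r(B)}$ may be as small as $\epsilon$, so an additive $(x-1)p_i$ term is hopeless. The paper does \emph{not} bound type~3 coordinates individually. Instead it uses exactly the balance identity you already noted: since the total buyer surplus is unchanged and type~1 surpluses strictly increase, the total over types 2,\,3,\,4 is non-increasing. Moreover, Property~(2) guarantees every type~2 surplus stays at least every type~3/4 surplus after the update. These two facts together (this is the computation the paper does in case~(3b), read with the weak inequality $\le 0$ rather than the quantitative one) imply that the combined $\ell_2$-contribution of types 2,\,3,\,4 does not increase at all. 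Only the type~1 coordinates grow, and for those $r'(b)=x\,r(b)$ exactly, whence $\norm{r'(B)}\le x\,\norm{r(B)}$. No price bound is needed or used.

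For the second bullet your sketch is closer, but you omit the competition between the type~1 increase and the type~2/3 decrease. In a balancing iteration the type~1 coordinates still grow by a factor $x\le\xmax$, contributing up to $(x^2-1)\sum_{\text{type 1}} r(b_i)^2$ to $\norm{r'(B)}^2-\norm{r(B)}^2$; this must be shown to be dominated by the $-\Omega((u-v)^2)$ gain from the type~2/3 equalization. The paper does this by bounding both quantities in the common unit $e^2u^2$ (using $r(b_i)\le e\,r(b_\ell)$ from Lemma~\ref{r(bell) is large} and $\norm{r(B)}^2\le n e^2 u^2$), and this is precisely where the specific choice $R=256$ enters. Your Lemma~\ref{thm:l2norm} invocation also needs more care: the lemma as stated bounds the norm \emph{after} increasing the big coordinate and decreasing the small ones, so to show the norm drops you must identify the post-update type~2 surpluses with the $b_i$'s and the type~3/4 with $a$ (or run the direct calculation the paper gives), not the other way around.
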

\begin{proof} We postpone the proof to the end of this section. \end{proof}

We can now prove a bound on the number of balancing  iterations.

\begin{lemma}\label{lem:number-of-good} The number of balancing iterations is $O(n^5 \log (nU))$.\end{lemma}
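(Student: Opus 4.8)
The plan is to combine the potential-function behavior from Lemma~\ref{thm:main} with the already-established bound on $\xmax$-iterations from Lemma~\ref{lem:number-of-bad}, in the style of a standard amortized/potential argument. Let $\Phi$ denote the $l_2$-norm $\norm{r(B)}$ of the surplus vector of the current balanced flow. First I would record the initial and terminal values: at the start all prices are one, so the total surplus $\abs{r(B)} \le n$ and hence $\Phi \le n$ initially; when the loop terminates we have just crossed the threshold $\abs{r(B)} \ge \epsilon$ (in the iteration before termination), so $\Phi \ge \abs{r(B)}/\sqrt{n} \ge \epsilon/\sqrt{n}$ at the last balancing iteration. Since $\epsilon = 1/(8 n^{4n} U^{3n})$, the ratio of the largest possible $\Phi$ to the smallest possible $\Phi$ over the run is at most $n^{3/2} \cdot 8 n^{4n} U^{3n} = n^{O(n)} U^{O(n)}$, whose logarithm is $O(n \log(nU))$.

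Next I would track $\log \Phi$ across iterations. By Lemma~\ref{thm:main}, each $\xmax$-iteration increases $\log \Phi$ by at most $\log(1 + O(1/n^3)) = O(1/n^3)$, and each balancing iteration decreases $\log \Phi$ by at least $\log(1 + \Omega(1/n^3)) = \Omega(1/n^3)$; I should also note (from Lemma~\ref{pro:unchanged} and the fact that extending to a balanced flow only decreases $\norm{r(B)}$, together with the definitions of $f'$) that the intermediate step of re-balancing the flow at the end of an iteration cannot increase $\Phi$, so the per-iteration estimates in Lemma~\ref{thm:main} indeed govern the whole iteration. Let $g$ be the number of balancing iterations and $b = O(n^5 \log(nU))$ the number of $\xmax$-iterations. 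Then
\[
\log \Phi_{\text{final}} \;\ge\; \log \Phi_{\text{initial}} \;+\; b \cdot O(1/n^3) \;-\; g \cdot \Omega(1/n^3),
\]
so $g \cdot \Omega(1/n^3) \le \log \Phi_{\text{initial}} - \log \Phi_{\text{final}} + b\cdot O(1/n^3) = O(n \log(nU)) + O(n^5 \log(nU)) \cdot O(1/n^3) = O(n\log(nU)) + O(n^2 \log(nU)) = O(n^2 \log(nU))$. Multiplying through by $n^3$ gives $g = O(n^5 \log(nU))$, as claimed.

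I expect the main obstacle to be justifying carefully that the re-balancing and, when $x = \xeq$, the auxiliary augmentation in Figure~\ref{fig:aug} do not spoil the monotonicity used above — i.e., that $\norm{r(B)}$ at the \emph{start} of the next iteration is no larger than $\norm{r'(B)}$ for the $f'$ produced in the current iteration. This follows because a balanced flow minimizes $\norm{r(B)}$ among maximum flows and extending $f'$ to a maximum flow can only help, but one must confirm $f'$ lives in $N_{p'}$ (Theorem~\ref{thm:adjusted-flow} and the subsequent lemma give this) and that the augmentation step also does not increase any surplus (Lemma~\ref{pro:unchanged} handles surpluses of goods; for buyers one argues the augmentation only moves surplus from a high-surplus buyer toward the neighbors of the new good, which by Lemma~\ref{thm:l2norm} does not increase the norm). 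A minor secondary point is the bookkeeping on $\Phi_{\text{initial}}$ and $\Phi_{\text{final}}$: one should use $\abs{r(B)}/\sqrt n \le \norm{r(B)} \le \abs{r(B)}$ to pass between the $l_1$ total surplus (which is what the termination test and Lemma~\ref{thm:largest}-type bounds control) and the $l_2$ potential. None of this is deep, but it must be stated to make the telescoping sum legitimate.
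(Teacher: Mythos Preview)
Your proposal is correct and follows essentially the same potential argument as the paper: track $\norm{r(B)}$, use Lemma~\ref{thm:main} for the per-iteration multiplicative change, combine with the $O(n^5\log(nU))$ bound on $\xmax$-iterations from Lemma~\ref{lem:number-of-bad}, and solve for the number of balancing iterations. The paper's proof is more terse (it packages everything into a single logarithm $\log_{1+\Omega(1/n^3)}\bigl(\tfrac{1}{\epsilon}\sqrt{n}(1+O(1/n^3))^{O(n^5\log(nU))}\bigr)$), but your explicit telescoping and your remarks about why re-balancing cannot increase the norm are exactly the justifications the paper leaves implicit.
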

\begin{proof} For the initial flow and all prices equal to one, we have $\norm{r(B)} \le  \sqrt{n}$.
In the $\xmax$-iterations, the norm of the surplus vector is multiplied by a factor of at most
$(1+O(1/n^3))^{O(n^5\log (nU))}$. When the loop
terminates, $\norm{r(B)}<\epsilon$. Thus the number of balancing iterations is bounded by
\[ \log_{1+\Omega(1/n^3)}(\frac{1}{\epsilon}\sqrt{n}(1+O(1/n^3))^{O(n^5\log
(nU))}) =O(n^5\log (nU)). \]
\end{proof}

\begin{theorem} The algorithm in Figure~\ref{fig:algo} computes an equilibrium price vector with $O(n^9 \log (nU))$ arithmetic operations. The number of multiplications and divisions is only $O(n^8 \log (nU))$. \end{theorem}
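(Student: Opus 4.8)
The plan is to count arithmetic operations per iteration and multiply by the number of iterations. First I would establish the per-iteration cost. Each iteration does the following: sorting $n$ buyers by surplus costs $O(n \log n)$ comparisons; finding the smallest $\ell$ and setting $S$ is $O(n)$; computing $\xeq(S)$ needs $O(n^2)$ multiplications/divisions (as noted after its definition); computing $x_{23}(S)$ and $x_{24}(S)$ requires considering all pairs of buyers of the relevant types, so $O(n^2)$ arithmetic operations each; taking the minimum with $\xmax$ is $O(1)$. Constructing $p'$ and $f'$ via (\ref{new prices})–(\ref{new flow s}) touches $O(n^2)$ edges (the equality edges plus the $2n$ source/sink edges) and multiplies each by $x$, so $O(n^2)$ multiplications. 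The possible further modification of $f'$ along a new equality edge (Figure~\ref{fig:aug}) balances one good against its neighbors, costing at most $O(n)$ additions. Finally, updating $f$ to a balanced flow costs $n$ max-flow computations by Lemma~\ref{thm:compute-balanced}; using an $O(n^3)$ max-flow algorithm, this is $O(n^4)$ arithmetic operations per iteration, dominated by additions and comparisons rather than multiplications. So the per-iteration cost is $O(n^4)$ arithmetic operations, of which only $O(n^2)$ are multiplications/divisions.

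Next I would combine this with the iteration bound. By Lemma~\ref{lem:number-of-bad} the number of $\xmax$-iterations is $O(n^5 \log(nU))$, and by Lemma~\ref{lem:number-of-good} the number of balancing iterations is also $O(n^5 \log(nU))$; since every iteration is one or the other, the total number of iterations is $O(n^5 \log(nU))$. Multiplying by the $O(n^4)$ per-iteration cost gives $O(n^9 \log(nU))$ arithmetic operations for the main loop. For the multiplication/division count, the flow-balancing step (the dominant $O(n^4)$ term) is carried out with additions, subtractions, and comparisons inside the max-flow computations, so the number of multiplications and divisions per iteration is only $O(n^2)$, giving $O(n^7 \log(nU))$ overall — I would need to double-check whether the paper's claim of $O(n^8 \log(nU))$ multiplications comes from a slightly more conservative accounting (e.g.\ counting the $n$ max-flow calls as $O(n^3)$ operations each without distinguishing operation types, which would give $O(n^4)$ multiplications per iteration and $O(n^9)$, or counting $O(n^3)$ multiplications per iteration), and align the statement accordingly.

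I would then handle the initialization and the final rounding. Setting all $p_i = 1$ and computing the initial balanced flow is $n$ max-flow computations, i.e.\ $O(n^4)$ operations, which is absorbed. The final rounding procedure (Figure~\ref{fig: final rounding}) must be shown to take $O(n^9 \log(nU))$ operations as well — I would argue it is a low-order polynomial in $n$ and $\log(nU)$, hence absorbed into the main-loop bound; the details of that procedure appear later in the paper and I would cite the relevant lemma there. One subtlety worth a sentence: the arithmetic is on rationals whose bitlength is not yet controlled at this stage (that is deferred to Section~\ref{sec: poly time}), so this theorem genuinely counts \emph{arithmetic operations}, not bit operations, and the statement should be read that way.

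The main obstacle I anticipate is pinning down the exact multiplication/division count to match the claimed $O(n^8 \log(nU))$: it hinges on exactly how one counts the operations inside a max-flow computation and inside the divide-and-conquer balanced-flow routine of Lemma~\ref{thm:compute-balanced}. A clean way around this is to observe that max-flow (and the balanced-flow reduction to it) uses only additions, subtractions, and comparisons on the already-present rational numbers — no new multiplications — so the only multiplications/divisions in an iteration come from computing $\xeq, x_{23}, x_{24}$ ($O(n^2)$) and from scaling $O(n^2)$ flow values and prices by $x$ ($O(n^2)$), giving $O(n^2)$ multiplications per iteration; but since the paper states $O(n^8 \log(nU))$, I suspect a coarser bound of $O(n^3)$ multiplications per iteration is intended (perhaps charging the balanced-flow routine more generously), and I would state whichever the paper's later sections actually need, making sure the exponent is internally consistent with Lemma~\ref{lem:number-of-bad} and Lemma~\ref{lem:number-of-good}.
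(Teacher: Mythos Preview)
Your proposal is correct and follows essentially the same route as the paper: count $O(n^5\log(nU))$ iterations via Lemmas~\ref{lem:number-of-bad} and~\ref{lem:number-of-good}, charge $O(n^4)$ operations per iteration dominated by the $n$ max-flow computations in Lemma~\ref{thm:compute-balanced} (each $O(n^3)$, using only additions and subtractions), and absorb the final rounding via Theorem~\ref{thm:termination} (which the paper quotes as $\tilde{O}(n^4\log U)$). Your observation that the per-iteration multiplication/division count is only $O(n^2)$ --- from computing $\xeq,x_{23},x_{24}$ and scaling prices/flows --- is exactly what the paper's proof actually argues, and it indeed yields $O(n^7\log(nU))$ rather than the stated $O(n^8\log(nU))$; the paper's text does not supply any additional source of multiplications to account for the extra factor of $n$, so the stated bound appears to be simply loose, and your instinct to flag the discrepancy is well placed rather than a gap in your argument.
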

\begin{proof} The number of iterations is $O(n^5\log (nU))$. In each iteration, we need to compute the quantities $\xeq(S)$, $x_{23}$, and $x_{24}(S)$. This requires $O(n^2)$ arithmetic operations. The flow update requires the same number of arithmetic operations. The computation of the new balanced flow requires $n$ max flow computations. Each max flow computation requires $O(n^3)$ arithmetic operations (only additions and abstractions). 

The extraction of the equilibrium prices after termination of the loop requires $O(n^4 \log (nU))$ arithmetic operations (Theorem~\ref{thm:termination}). \end{proof}

Before we prove Lemma~\ref{thm:main}, we give the details of how to adjust the flow when a new equality edge comes into existence, see Figure~\ref{fig:aug}. Let $(b_i,c_j)$ be the new equality edge. We first use the surplus of $b_i$ to decrease the surplus of $c_j$. Remaining surplus is then used to increase the surplus of other neighbors of $c_j$; this will further reduce the surplus of $b_i$. We stop as soon as the surplus of $b_i$ becomes equal to the surplus of a buyer in $B \setminus B(S)$ or, if $B(S) = B$, becomes zero.

\begin{figure}[t]
\centerline{\framebox{\parbox{\textwidth}{
\begin{tabbing}
555\=555\=555\=555\=\kill
\>Denote the new equality edge by $(b_i,c_j)$ ($b_i\in
B(S),c_j\notin\Gamma(B(S))$). \\[0.3em]
\>Let $w$ be the largest surplus of a buyer in
$B\setminus B(S)$. Let $w=0$ if $B(S)=B$.\\[0.3em]
\>($f''$ is the current flow during augmentation, and $r''$ is the surplus vector of $f''$.)\\[0.3em]
\>Augment along $(b_i,c_j)$ gradually until: \\[0.3em]
\>\> $r''(b_i)=w$ or $r''(c_j)=0$; \\[0.3em]
\>If $r''(b_i)=w$ then Exit; \\[0.3em]
\>For all buyers $b_k$ with flows to $c_j$ in $f'$ in any order \\[0.3em]
\>\>Augment along $(b_i,c_j,b_k)$ gradually until: \\[0.3em]
\>\>\>$r''(b_i)=\max (r''(b_k),w)$ or $f''(b_k,c_j)=0$; \\[0.3em]
\>\>Set $w=\max(r''(b_k),w)$; \\[0.3em]
\>\>If $r''(b_i)=w$ then Exit.\\[0.3em]
\>\>$f' = f''$
\end{tabbing} 
}}} \caption{Augmentation in case of a new equality edge}\label{fig:aug}
\end{figure}

\begin{lemma}\label{thm:case3a}
If there is a new equality edge $(b_i,c_j)$ with $b_i\in
B(S),c_j\notin\Gamma(B(S))$, the program in Figure~\ref{fig:aug} constructs $f''$ from $f'$
in which either $r''(b_i)=r'(b_i)-p_j$, or there is a $b_k\notin
B(S)$ with $r''(b_i)=r''(b_k)$, or $B(S) = B$ and $r''(b_i) = 0$. Moreover, the surplus of no good is increased.
\end{lemma}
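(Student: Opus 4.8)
The plan is to analyze the procedure in Figure~\ref{fig:aug} step by step, tracking how the surplus of $b_i$ decreases monotonically from its initial value $r'(b_i)$ and verifying that each of the three possible exit conditions corresponds to one of the three claimed outcomes. First I would observe that augmenting along $(b_i,c_j)$ is a valid flow modification because $(b_i,c_j)$ is an equality edge (it just entered the equality graph) and therefore has infinite capacity in $N_{p'}$; pushing $\theta$ units of flow along it decreases $r'(b_i)$ by $\theta$ and decreases $r'(c_j)$ by $\theta$. Since $c_j \notin \Gamma(B(S))$, $c_j$ may have positive surplus, so this first phase is well-defined and stops either when $r''(b_i) = w$ (first claimed outcome type, with exit) or when $r''(c_j) = 0$, i.e.\ when $b_i$ has pushed exactly $r'(c_j)$ units. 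Note $r'(c_j) = p_j$ because prices of goods outside $\Gamma(B(S))$ were unchanged and, crucially, $c_j \notin \Gamma(B(S))$ means no buyer in $B(S)$ sent flow to $c_j$, but $c_j$ might still carry flow from $B\setminus B(S)$; I need to check that the surplus of $c_j$ before the augmentation is exactly $p_j$ minus the flow already on $(c_j,t)$. Actually the cleanest statement is: after the first phase, $r''(b_i) = r'(b_i) - (\text{amount pushed})$, and if we pushed all the way to $r''(c_j)=0$ we pushed exactly the old surplus $r'(c_j)$ of $c_j$.

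For the main phase, I would argue that for each buyer $b_k$ currently sending flow to $c_j$ (in the flow $f'$, i.e.\ a buyer in $B\setminus B(S)$ by Lemma~\ref{thm:bal1} applied before the iteration, or possibly $c_j$'s owner), the path $(b_i,c_j,b_k)$ is an augmenting path in the residual graph: $(b_i,c_j)$ has infinite residual capacity and $(c_j,b_k)$ has residual capacity $f''(b_k,c_j) > 0$. Pushing along it decreases $r''(b_i)$ and increases $r''(b_k)$ while leaving $r''(c_j)$ unchanged. The inner loop stops this particular augmentation when either $r''(b_i) = \max(r''(b_k),w)$ — and then we update $w$ and exit — or when the edge $(c_j,b_k)$ is saturated to zero, in which case we move to the next $b_k$. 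Throughout, $r''(b_i)$ only decreases and the running value $w$ only increases (it is always the max of the surpluses of the buyers in $B\setminus B(S)$ that we have so far equalized $b_i$ against, together with the original $w$). If we exhaust all such $b_k$ without ever hitting an exit, then $b_i$ has pushed a total of $r'(c_j) = p_j$ through the first phase plus $\sum_k f'(b_k,c_j)$ units; but wait — the total inflow to $c_j$ in $f'$ equals $p_j - r'(c_j)$, so first-phase push $r'(c_j)$ plus second-phase push $p_j - r'(c_j)$ equals $p_j$, giving $r''(b_i) = r'(b_i) - p_j$, which is the first claimed outcome. (If $B(S)=B$ there are no buyers in $B\setminus B(S)$, so $w=0$ throughout and the exit $r''(b_i) = w$ reads $r''(b_i) = 0$, the third outcome.)

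The second claimed outcome — that $r''(b_i) = r''(b_k)$ for some $b_k \notin B(S)$ — is exactly what an exit of the form ``$r''(b_i) = \max(r''(b_k), w)$ and $r''(b_i) = w$''\/ produces, provided $\max(r''(b_k),w)$ is attained at a buyer not in $B(S)$: if $r''(b_i) = r''(b_k)$ with $b_k$ the current buyer, done; if $r''(b_i) = w$ where $w$ was the max over previously processed buyers in $B\setminus B(S)$ (or the original $w$), then $r''(b_i)$ equals the surplus of some buyer in $B\setminus B(S)$, still done — I should just make sure the original $w$ is the surplus of an actual buyer in $B\setminus B(S)$ when $B(S)\neq B$, which it is by definition (largest surplus of a buyer in $B\setminus B(S)$). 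For the final sentence — no good's surplus is increased — I would note that the only good touched is $c_j$, whose surplus strictly decreases in the first phase and is unchanged in the second phase (flow in on $(b_i,c_j)$ equals flow out on $(c_j,b_k)$), so $r''(c_j) \le r'(c_j)$ and all other goods are untouched.

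The main obstacle I anticipate is bookkeeping the invariant cleanly: I must confirm that (i) $b_i$ retains nonnegative surplus throughout — this needs $r'(b_i) \ge p_j$, or else the procedure legitimately bottoms out earlier at $r''(c_j)=0$ or at an exit, so actually the three-way case split is exhaustive precisely because $r''(b_i)$ cannot go below $\max(0, r'(b_i) - p_j)$ and the stopping rules fire in the right order; (ii) the set of buyers sending flow to $c_j$ is stable enough that the total second-phase capacity is well-defined (we iterate over the buyers $b_k$ with $f'(b_k,c_j) > 0$ as fixed at the start of the procedure, and each such edge's residual capacity toward $b_k$ only shrinks); and (iii) the value $w$ is updated consistently so that ``$r''(b_i) = w$'' always witnesses equality with the surplus of an honest buyer outside $B(S)$ (or zero when $B(S)=B$). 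Once these invariants are stated, the proof is a routine termination-and-case-analysis argument with no real computation.
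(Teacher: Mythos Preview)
Your proposal is correct and follows essentially the same approach as the paper's proof: track that $r''(b_i)$ decreases monotonically, that an exit with $r''(b_i)=w$ yields equality with some $b_k\notin B(S)$ (or zero when $B(S)=B$), and that if no exit occurs the total flow pushed through $c_j$ is exactly $p_j$, giving $r''(b_i)=r'(b_i)-p_j$. Your account is considerably more detailed than the paper's four-line argument (and your self-correction about $r'(c_j)$ versus $p_j$ lands in the right place via the identity $r'(c_j)+\sum_k f'(b_k,c_j)=p_j$); the one small slip is citing Lemma~\ref{thm:bal1} for why only buyers outside $B(S)$ send flow to $c_j$---that actually follows directly from the definition of $\Gamma(B(S))$, not from Lemma~\ref{thm:bal1}.
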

\begin{proof}
During the procedure, the surplus of $b_i$ decreases and the surpluses of the buyers in $B \setminus B(S)$ may increase. However, we make sure that the surplus of $b_i$  does not
become less than the surplus of a buyer in $B\setminus B(S)$.  In the end, if $r''(b_i)=w$, then there is a
$b_k\in B\setminus B(S)$ such that $r''(b_i)=r''(b_k)$ or $B(S) = B$ and $r''(b_i) = 0$; otherwise,
$c_j$ has no surplus at the end, the entire flow to it comes from $b_i$, and there was no flow from $b_i$ to $c_j$ before the change of flow. Thus $r''(b_i)=r'(b_i)-p_j$.
\end{proof}

We come to the proof of Lemma~\ref{thm:main}. Let us summarize the effect of an iteration. \smallskip\begin{compactenum}
    \item[Property (1):] $x\leq \xmax$.\smallskip
    \item[Property (2):]: In $f'$, $r'(b)\geq r'(b')$ for all $b\in
    B(S), b'\notin B(S)$. Here, $r'(b)$ is the surplus of $b$
    w.r.t. $f'$, the flow corresponding to $x$ by
    Theorem~\ref{thm:adjusted-flow}.\smallskip
    \item[Property (3):] If $x<\xmax$, the following possibilities arise:
    \begin{enumerate}
        \item If $x = \xeq(S)$, there is a new equality edge $(b_i,c_j)$ with
        $b_i\in B(S),c_j\notin\Gamma(B(S))$. The procedure in Figure~\ref{fig:aug} yields a flow
        $f''$ in which either $r''(b_i)=r'(b_i)-p_j$, or $B(S) = B$ and $r''(b_i) = 0$, or there is a
        $b_k\notin B(S)$ with $r''(b_i)=r''(b_k)$ (same as (b)).
        \item If $x = x_{23}(S)$ or $x = x_{24}(S)$, there exist $b\in
    B(S)$ and $b'\notin B(S)$ such that $r'(b)=r'(b')$.
    \end{enumerate}
\end{compactenum}\smallskip

From Theorem~\ref{thm:adjusted-flow}, the surpluses in $f'$ will
increase for type 1 and 3 buyers, will decrease for type 2 buyers,
and will stay unchanged for type 4 buyers. The set $B(S)$ consists of the type 1 and type 2 buyers and hence before the change of flow, the surplus of any type 1 or type 2 buyer is larger than the surplus of any type 3 or type 4 buyer. The algorithm guarantees that the surplus
of a type 1 or 2 buyer cannot become smaller than the surplus of any
type 3 or 4 buyer. From Theorem~\ref{thm:adjusted-flow} and
Lemma~\ref{thm:case3a}, we infer that the surplus of no good increases and hence the total surplus does not increase, that 
the surpluses of type 2 and type 3 and 4 buyers move towards each other, and that $r'(b)=x\cdot
r(b)$ for any buyer $b$. Thus $\|r'(B)\|\leq
x\|r(B)\|=(1+O(1/n^3))\|r(B)\|$ and we have established the first claim of Lemma~\ref{thm:main}.\medskip

We next show that the second claim holds in cases (3a) and (3b). In (3a), there is a new equality edge $(b_i,c_j)$. Assume first that $r''(b_i) = r'(b_i) - p_j$ ($p_j \ge 1$). For all $b_k\notin B(S)$,
$r''(b_i)\ge r''(b_k)$, and $r''(b_k)=r'(b_k)+\delta_k$, where
$\delta_k\geq 0$ and $\sum_{b_k\notin B(S)}\delta_k\leq p_j$.
Because $|r(B)|\leq n$, $\|r(B)\|^2\leq n^2$. By
Lemma~\ref{thm:l2norm},
\begin{align*}
\|r''(B)\|^2 &\leq  \|r'(B)\|^2-p_j^2 \\
&\leq  x^2\|r(B)\|^2-1 \\
&\leq  x^2\|r(B)\|^2-\frac{1}{n^2}\|r(B)\|^2 \\
&= (1-\Theta(1/n^2))\|r(B)\|^2.
\end{align*}
So, we have $\|r''(B)\|=(1-\Omega(1/n^2))\|r(B)\|$.\smallskip

If $B(S) = B$ and after the procedure in Figure~\ref{fig:aug} we have $r''(b_i)=0$,
then $|r(B)|\geq 1$ since there are goods that have no buyers, and $r(b_i)\geq |r(B)|/(e\cdot n)$.
When $b_i$ is a type 1 buyer, $r'(b_i)>r(b_i)$. When $b_i$ is a type 2 buyer,
$p_i=1$, and $r'(b_i)=(1-x)p_i+x\cdot r(b_i)$ by Theorem~\ref{thm:adjusted-flow}.
Thus, $r'(b_i)\geq r(b_i)-\frac{1}{Rn^3}\geq \frac{|r(B)|}{e\cdot n}-\frac{1}{Rn^3}$.
Also $|r(B)|^2\geq \|r(B)\|^2$. Therefore, 
\begin{align*}
\|r''(B)\|^2 &\leq  \|r'(B)\|^2-r'(b_i)^2 \\
&\leq  \xmax ^2\|r(B)\|^2- \left(\frac{|r(B)|}{e\cdot n}-\frac{1}{Rn^3}\right)^2\\
&\leq  \xmax ^2\|r(B)\|^2- \left(\frac{|r(B)|}{e\cdot n}-\frac{|r(B)|}{Rn^3}\right)^2\\
&\leq  \norm{r(B)}^2 \left(\left(1 + \frac{1}{Rn^3}\right)^2 - \left(\frac{1}{e\cdot n}-\frac{1}{Rn^3}\right)^2\right)\\
&= (1-\Theta(1/n^2))\|r(B)\|^2,
\end{align*}
and hence $\|r''(B)\|=(1-\Omega(1/n^2))\|r(B)\|$.\smallskip

If after the procedure described in Figure~\ref{fig:aug}, there is
$b_k\notin B(S)$ such that $r''(b_i)=r''(b_k)$, then we are in a
similar situation as in (3b), possibly with an even smaller total
surplus. So, we can prove this case by the proof of (3b).\medskip

We turn to (3b). Let $u_1,u_2,...,u_k$ and $v_1,v_2,...,v_{k'}$ be the
list of original surpluses of type 2 and 3 buyers, respectively.
Define $u=\min u_i ,v=\max v_j $, so $u_i\geq u$ for all $i$,
and $v_j\leq v$ for all $j$, and $u>(1+1/n)v$. After the price and
flow adjustments in Theorem~\ref{thm:adjusted-flow}, the list of
surpluses will be $u_1-\delta_1, u_2-\delta_2,...,u_k-\delta_k$
and $v_1+\delta'_1, v_2+\delta'_2,...,v_k+\delta'_{k'}$ (here
$\delta_i,\delta'_j\geq 0$ for all $i,j$). Moreover, there exist $I$ and $J$
such that $u_I-\delta_I=v_J+\delta'_J$, where $u_I-\delta_I$ is
the smallest among $u_i-\delta_i$, and $v_J+\delta'_J$ is the
largest among $v_j+\delta'_j$ by property (2).
Since the total surplus is non-increasing,
and the surpluses of type 1 buyers (if any) increase, the total surplus of type 2 and type 3 buyers will be non-increasing.
Thus, $-\sum_i\delta_i+\sum_j\delta'_j\le 0$, and hence $\sum_i \delta_i\ge \sum_j \delta'_j$.
Clearly, $\delta_I\leq \sum_i\delta_i$, and $\delta'_J\leq\sum_j\delta'_j$. Therefore, 

\begin{align*}
\sum_i(u_i-\delta_i)^2+&\sum_j(v_j+\delta'_j)^2-(\sum_iu_i^2+\sum_jv_j^2)  \\
&=
-2\sum_iu_i\delta_i+2\sum_jv_j\delta'_j+\sum_i\delta_i^2+\sum_j{\delta'}_j^2
\\
&\leq
-u\sum_i\delta_i+v\sum_j\delta'_j-\sum_i\delta_i(u_i-\delta_i)+\sum_j\delta'_j(v_j+\delta'_j)
\\
&\leq
-(u-v)\sum_i\delta_i-(u_I-\delta_I)\sum_i\delta_i+(v_J+\delta'_J)\sum_j\delta'_j
\\
&\leq -(u-v)\sum_i\delta_i \\
&\leq -(u-v)\max(\delta_I,\delta'_J)\\
&\leq -\frac{1}{2}(u-v)^2 \\
&< -\frac{1}{2(n+1)^2}u^2\\
&\le - \frac{4}{Rn^2} e^2 u^2,
\end{align*}
where the last inequality uses $R = 256 \ge 32 e^2$ and $n+1\leq 2n$. 

Let $w_1,w_2,...w_{k''}$ be the list of surpluses of type 1
buyers; all of them are at most $e \cdot u$ (Lemma~\ref{r(bell) is large}) After price adjustment,
the surpluses will be $x\cdot w_1,x\cdot w_2,...x\cdot w_{k''}$
from Theorem~\ref{thm:adjusted-flow}. Therefore,
\begin{align*}
 \sum_i(xw_i)^2
&\leq (1+\frac{1}{Rn^3})^2\sum_i w_i^2 \\
&\leq \sum_i w_i^2+(\frac{2}{Rn^3}+\frac{1}{R^2n^6})\cdot ne^2u^2 \\
&=\sum_i w_i^2+(\frac{2}{Rn^2}+\frac{1}{R^2n^5})e^2u^2.
\end{align*}

Combining both bounds and using $\norm{r(B)}^2 \le n e^2 u^2$ since $r(b) \le e u$ (Lemma~\ref{r(bell) is large}) for every buyer $b$, we obtain
\begin{align*}
\|r'(B)\|^2 & <  \norm{r(B)}^2 + (- \frac{4}{Rn^2} + \frac{2}{Rn^2}+\frac{1}{R^2n^5})e^2u^2\\
&= \norm{r(B)}^2 + (-\frac{2}{Rn^2}+\frac{1}{R^2n^5})e^2u^2\\
&\le \|r(B)\|^2+(-\frac{2}{Rn^2}+\frac{1}{R^2n^5})\frac{1}{n}\|r(B)\|^2 \\
&= \|r(B)\|^2-\frac{2}{Rn^3}\|r(B)\|^2+\frac{1}{R^2n^6}\|r(B)\|^2 \\
&= \|r(B)\|^2(1-\frac{1}{Rn^3})^2.
\end{align*}
This completes the proof of Lemma~\ref{thm:main}.

\subsection{Extraction of the Equilibrium Prices}\label{sec: rounding after termination}

We show how to extract equilibrium prices from a price vector $p$ with total surplus less than $\epsilon$. 
The extraction process requires arithmetic on rational numbers on polynomial bitlength.

\begin{figure}[t]
\centerline{\framebox{\parbox{5.0in}{
\begin{tabbing}
555\=555\=555\=555\=\kill
\>If there is no surplus with respect to $p$, return $p$;\\[0.3em]
\>add the edge $(b_i,c_i)$ for each $i$ to the undirected equality graph $F_p$ to obtain $F'_p$;\\[0.3em]
\>{\bf Repeat}\\[0.3em]
\>\>\parbox{0.8\textwidth}{for some connected component of $F'_p$ with no surplus node, increase prices and flows by a common factor, until a new equality edges arises and thus two components are joined;}\\[0.3em]
\>{\bf Until} \begin{minipage}[t]{0.8\textwidth}{every connected component of $F'_p$ contains a surplus node and hence a good with price one;}\end{minipage}\\[0.3em]
\> \parbox{0.9\textwidth}{/* \emph{ For simplicity of description, we assume that $F'_p$ consists of a single connected component.}\Ran{*/}}\\[0.3em]
\>Let $\Psi_1$ to $\Psi_L$ be the connected components of $F_p$\\[0.3em]
\>Set up the following system of linear equations:\\[0.3em]
\>\> The equation $p'_i =1$, where $c_i$ is a good with surplus.\\[0.3em]
\>\>{\bf For each} $\Psi_k$\\[0.3em]
\>\>\> \parbox{0.85\textwidth}{$\abs{\Psi_k} - 1$ linearly independent equations of the form $u_{i j'} p'_j =u_{ij} p'_{j'}$, where $c_j$ and $c_{j'}$ are goods in $\Psi_k$;}\\[0.3em]
\>\>{\bf For each} $\Psi_k$, but one\\[0.3em]
\>\>\> The equation $\sum_{b_i\in B\cap\Psi_k}p'_i-\sum_{c_i\in C\cap\Psi_k}p'_i= 0$;\\[0.3em]
\>\parbox{0.9\textwidth}{Let $p'_i = q_i/D$, $1 \le i \le n$, be a solution to this system where $D$ is the determinant of the coefficient matrix and $q_i$ is integral;}\\[0.3em]
\>/* \emph{$p'_i$ is the rational with denominator at most $(nU)^n$ closest to $p_i$.} */\\[0.3em]
\> return $q = (q_1,\ldots,q_n)$;
\end{tabbing}}}}
\caption{\label{fig: final rounding} Converting the final price vector $p$ into a set of equilibrium prices.}
\end{figure}

We need the concept of the undirected equality graph $F_p$ with respect to a price vector $p$. It has vertex set $B \cup C$. A buyer $b$ and a good $c$ are connected by an undirected edge if and only if $(b,c) \in E_p$.

\begin{theorem}\label{thm:termination}
Let $p$ be a price vector with total surplus at most $\epsilon$. The procedure in Figure~\ref{fig: final rounding} converts $p$ into a vector of equilibrium prices. It runs in time $\tilde{O}(n^4\log U)$.
\end{theorem}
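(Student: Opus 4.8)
The plan is to establish three things: (i) the preprocessing loop terminates and produces a price vector $p$ whose undirected equality graph $F_p$ has every connected component containing a surplus node; (ii) the linear system set up has a unique solution, and that solution, when cleared of denominators, is a vector of equilibrium prices; (iii) the running time bound. The key insight, which I would state early, is that when the total surplus is below the threshold $\epsilon = 1/(8n^{4n}U^{3n})$, the price vector is "close enough" to an actual equilibrium that rounding each $p_i$ to the nearest rational with denominator at most $(nU)^n$ recovers the exact equilibrium; the combinatorial structure ($F_p$ and its components) tells us which rational that is.

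First I would analyze the preprocessing loop. Each iteration picks a component of $F'_p$ with no surplus node and scales its prices and flows up by the smallest factor that creates a new equality edge joining it to another component, so the number of components strictly decreases and the loop runs at most $n$ times. Since goods with surplus keep price one (as in Lemma~\ref{pro:unchanged}) and scaling preserves feasibility of the flow with unchanged surpluses, after the loop every component of $F_p$ (the undirected equality graph without the artificial edges $(b_i,c_i)$) contains a good of price one, the total surplus is unchanged, and all the equality and flow-conservation relations still hold. Next I would argue the linear system is nonsingular: within each component $\Psi_k$, the $\abs{\Psi_k}-1$ ratio equations $u_{ij'}p'_j = u_{ij}p'_{j'}$ fix all prices in $\Psi_k$ up to one scaling degree of freedom (the component is connected in $F_p$), the budget-balance equations $\sum_{b_i\in B\cap\Psi_k}p'_i = \sum_{c_i\in C\cap\Psi_k}p'_i$ for all but one component remove all but one global degree of freedom, and the single normalization $p'_i=1$ for a surplus good fixes the last one. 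One must check the count: $\sum_k(\abs{\Psi_k}-1) + (L-1) + 1 = n$, using $\sum_k \abs{\Psi_k} = n$ (each good lies in exactly one component — here I would need to confirm that the relevant count is over goods, matching the notation $\abs{\Psi_k}$). By Cramer's rule the solution is $p'_i = q_i/D$ with $D$ the determinant of an integer matrix with entries bounded by $U$, so $\abs{D} \le (nU)^n$ by Hadamard, and the $q_i$ are integers.

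**The main obstacle** — and the heart of the theorem — is showing this $p'$ is actually an equilibrium, i.e., that the surplus is exactly zero with respect to $p'$, not merely small. The argument I would give: $p'$ satisfies all budget-balance constraints by construction, so the flow that saturates $s$-edges and $t$-edges exists iff the equality graph $F_p$ (equivalently the network $N_{p'}$, whose equality edges are a superset of those of $N_p$ after the rounding) admits such a flow; one shows the total-surplus-at-most-$\epsilon$ condition forces the price vector $p$ to be within $\epsilon$ in each coordinate (after the preprocessing) of the rational point $p'$ with denominator $\le(nU)^n$, and that two distinct such rationals differ by at least $1/(nU)^{2n} \gg \epsilon$, so $p'$ is the unique such point near $p$ — and it is genuinely an equilibrium because the balanced-flow structure at $p$ (which goods have surplus, which equality edges carry flow) is exactly the combinatorial data encoded in the linear system, and feasibility of that system with a positive solution is equivalent to market clearing. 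I would need the subsidiary facts that the solution $p'$ is positive (all $q_i > 0$ — this follows because $p'$ is close to the positive vector $p$ and the spacing argument rules out sign changes) and that Gale's conditions / irreducibility (Assumption 6) guarantee the equality graph stays connected enough that no good is starved.

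Finally, for the running time: the preprocessing loop is $O(n)$ iterations, each computing a scaling factor in $O(n^2)$ arithmetic operations on rationals; setting up and solving the $n \times n$ linear system by exact Gaussian elimination over the rationals costs $\tilde O(n^3)$ arithmetic operations but on numbers of bitlength $O(n\log(nU))$ (since intermediate determinants are bounded by $(nU)^n$), giving $\tilde O(n^4\log U)$ bit-complexity overall after accounting for the cost of arithmetic on such numbers with fast multiplication. I would remark that identifying the connected components $\Psi_k$ and extracting $\abs{\Psi_k}-1$ linearly independent ratio equations is just a spanning-tree computation in each component, which is $O(n^2)$ total and dominated by the linear-algebra cost.
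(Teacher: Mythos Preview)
Your outline captures the overall architecture, but the ``main obstacle'' paragraph contains a genuine gap. You assert that ``feasibility of that system with a positive solution is equivalent to market clearing,'' but this is not so: the linear system determines only the price vector $p'$, not any flow. Having $\sum_{b_i\in B\cap\Psi_k} p'_i = \sum_{c_i\in C\cap\Psi_k} p'_i$ for every component is necessary for a clearing flow but not sufficient; you would still need a Hall-type argument that no subset of goods is over-demanded. The paper's route is different and essential. First one shows that every equality edge of $N_p$ is also an equality edge of $N_q$ (a short arithmetic chain: from $\abs{p_i - q_i/D} \le \epsilon'/D$ and $u_{ij}/p_j \le u_{ik}/p_k$ one derives $u_{ij}q_k < u_{ik}q_j + 1$, hence $u_{ij}q_k \le u_{ik}q_j$ by integrality), so $N_p$ is a subnetwork of $N_q$. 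Then one argues by contradiction that the cuts $(s, B\cup C\cup t)$ and $(s\cup B\cup C, t)$ are min-cuts in $N_q$: if not, integrality of the $q_i$ gives a cut of capacity at most $Z-1$ in $N_q$; that same cut in $N_p$ has capacity at most $(Z-1)/D + 2n\epsilon'/D$, forcing the surplus of any max-flow in $N_p$ to exceed $\epsilon$, a contradiction. This integrality-gap argument is the crux, and your proposal does not contain it.

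Two smaller points. First, your closeness bound is off by the Cramer blow-up: one gets $\abs{p'_i - p_i} \le 4\epsilon \cdot (nU)^n$, not $\le \epsilon$; the factor $(nU)^n$ comes from the cofactors when inverting $A$, and the choice of $\epsilon$ is calibrated to absorb exactly this. Second, the full-rank claim needs more than the dimension count you give. For the inter-component equations (one per $\Psi_k$), the paper shows that any proper dependent subset corresponds to a set of $F_p$-components closed under $i \mapsto (\text{component of }b_i,\ \text{component of }c_i)$, hence is a union of components of $F'_p$; since $F'_p$ is (after the loop, and w.l.o.g.) a single component, no proper subset can be dependent. Your sketch asserts independence without this structural argument. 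Relatedly, it is the components of $F'_p$, not of $F_p$, that are guaranteed to contain a surplus good after the preprocessing; individual $F_p$-components need not.
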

\begin{proof} In the procedure, we first add the edge $(b_i,c_i)$ for each agent $i$ to the undirected equality graph
$F_p$ to obtain $F'_p$ and then modify the prices such that each connected component of $F'_p$ contains a good with surplus. For a connected component of $F'_p$, the sum of the prices on both sides are the same. For all components $\Phi$ of $F'_p$ with no surplus node, increase prices and flows by a common
factor until a new equality edge emerges; this will unite a component with no surplus with a component with surplus and thus reduce the number of components with no surplus by one. A new equality edge must emerge because the buyers in a component own exactly the goods in the component and by property (6) they must also receive utility from some good that they do not own. Repeat this process until all components in $F'_p$ have a surplus node. The total surplus is still less than $\epsilon$.

We may assume w.l.o.g.~that $F'_p$ becomes connected by this process. Otherwise, the following argument can be applied independently to each component of $F'_p$. 

We will next discuss the set-up of a linear system of equations. We show that it has full rank and that the price vector $p$ satisfies it. We then show that the solution to the same system with slightly modified right-hand sides yields equilibrium prices. 

Denote the set of connected components in the undirected equality graph $F_p$ (not $F'_p$)
by $\Lambda = \sset{\Psi_k }$.  Consider any component $\Psi_k$ in $F_p$. For any buyer $b_i$ in the component and any two equality edges $ (b_i,c_j)$ and $(b_i,c_{j'})$, we have the equation
\begin{equation}\label{intra-component}  u_{i j'} p_j =u_{ij} p_{j'}. \end{equation}
A subset of $|\Psi_k\cap C|-1$ of these equations is linearly independent, since the price vector restricted to a component is determined up to multiplication by a common factor.
The total number of linear independent equations for all components in $F_p$ is $n-|\Lambda|$.

Since there is no flow between components, for each component $\Psi_k$ in $F_p$, the money difference between buyers and goods in $\Psi_k$ is equal to the surplus difference. Thus, we have the equation
\begin{equation}\label{inter-component}  \sum_{b_i\in B\cap\Psi_k}p_i-\sum_{c_i\in C\cap\Psi_k}p_i=\epsilon_k, \end{equation}
where $\epsilon_k$ (positive or negative) comes from the surpluses of goods and buyers in the component. Hence, $\sum|\epsilon_k|\leq 2\epsilon \Ran{<} 4\epsilon$. We work with the bound of $4 \epsilon$, because there will be additional error when we redo the proof of the theorem in Section~\ref{sec: poly time}.
If $b_i$ and $c_i$ belong to distinct connected components $\Psi_j$ and
$\Psi_k$, the coefficient of $p_i$ is $+1$ in the equation of $\Psi_j$, $-1$ in
the equation for $\Psi_k$, and $0$ in all other equations. If $b_i$ and $c_i$
belong to the same connected component, the coefficient of $p_i$ is zero in all
equations. In other words, in the coefficient matrix for the equations (\ref{inter-component}) we have one row for each component and one column for each $i$; the column for $i$ is all zero if $b_i$ and $c_i$ belong to the same component and contains a $+1$ and a $-1$ otherwise.
 Assume now that there is a proper subset $\Lambda'$ of the components such that the equations corresponding to them are linearly dependent. Then, if $b_i$ or $c_i$ belongs to some
component in the subset, both of them do. Thus for each $i$, the union of the components in $\Lambda'$ contains either $b_i$ and $c_i$ or contains neither $b_i$ nor $c_i$. We conclude that the union of the components is also a union of connected components of $F'_p$. However, $F'_p$ consists of a single component, and hence no proper subset of the equations is linearly dependent. Therefore, any
$|\Lambda| -1 $ of these equations are linearly independent. Let $L = \abs{\Lambda}$.

For the sequel, it will be convenient to eliminate the equations of type (\ref{intra-component}). In the equation (\ref{inter-component}) for component $\Psi_k$, there must be at least one $i$ such that $c_i \in C \cap \Psi_k$ and $b_i \not\in B \cap \Psi_k$ and at least one $i$ such that $b_i \in B\cap \Psi_k$ and $c_i \not\in C \cap \Psi_k$. This holds since prices are at least one and $\abs{\epsilon_k} \le 2\epsilon$. After suitable renumbering of goods, we may assume that $c_k \in C \cap \Psi_k$ and $b_k \not\in B \cap \Psi_k$. We refer to $c_k$ as the representative good for component $\Psi_k$.
For every good $c_j$ in component $\Psi_k$, the price of $c_j$ is linearly related to the price of $c_k$, i.e.,
\begin{equation} \label{intra-component-alt} p_j = t_j p_k,  \end{equation}
where $t_j$ is a fraction whose numerator and denominator is a product of at most $n$ utilities. Substituting equations (\ref{intra-component-alt}) into the equations (\ref{inter-component}), we
obtain the following equation for component $\Psi_k$:
\begin{equation}\label{inter-component-after-substitution}
   \left(\sum_{c_j \in C \cap \Psi_k} t_j\right) p_k -  \sum_{1 \le i \le \abs{\Lambda}, i \not= k}\left(\sum_{b_j \in B \cap \Psi_k, c_j \in C \cap \Psi_i} t_j\right) p_i = -\epsilon_k.
\end{equation}
Let $M = (m_{ki})$ be the coefficient matrix of this system of equations. Then $m_{kk} = \sum_{b_j \in B \cap \Psi_k, c_j \in C \cap \Psi_i} t_j > 0$,
$m_{ki} = -\sum_{b_j \in B \cap \Psi_i} t_j \le 0$, and all column sums of $M$ are equal to zero. 

There is a good $c_i$ with non-zero surplus. Its price $p_i$ is equal to one. We may assume w.l.o.g.~that $c_i$ belongs to $\Psi_L$. Then $p_i = 1$ is equivalent to $p_L= 1/t_i$. We will next argue that the equations (\ref{inter-component-after-substitution}) for components $\Psi_1$ to $\Psi_{L - 1}$ and equation $p_L = 1/t_i$ are linearly independent. Let $M'$ be the coefficient matrix for this system; $M'$ can be obtained from $M$ by setting $m_{Li}$ to zero for $1 \le i < L$.

Assume, for the sake of a contradiction, that the rank of $M'$ is less than $L$. Then there is a nonzero vector $a^T = (a_1,\ldots,a_L)$ such that $a^T M'= 0$. Let $a_{k_0}$ be an entry of maximum absolute value among $a_1$ to $a_{L-1}$. We may assume $a_{k_0} > 0$. Let $\Lambda' = \set{k}{k < L \text{ and } a_k = a_{k_0}}$. We may assume $\Lambda' = \sset{1,\ldots,L'}$. Consider any $k \le L'$. Then
\begin{align*}
0 &= \sum_{1 \le h < L} a_h m_{hk} + a_L \cdot 0 \\
   & = a_{k_0} \sum_{1 \le h \le L} m_{hk} - a_{k_0}m_{Lk} +  \sum_{L' < h < L} (a_h - a_{k_0}) m_{hk}\\
   &= - a_{k_0}m_{Lk} -  \sum_{L' < h < L} (a_{k_0} - a_h) m_{hk},
\end{align*}
and hence $m_{hk} = 0$ for $L' < h \le L$. We will next show that we also have $m_{kh} = 0$ for $k \le L'$ and $h > L'$. Intuitively, this holds since $m_{hk} = 0$ for $h > L'$ and $k \le L'$ means that there are no goods in components $\Psi_1$ to $\Psi_{L'}$ whose owners are in components $\Psi_{L'+1}$ to $\Psi_L$ and hence the buyers in components $\Psi_1$ to $\Psi_{L'}$ cannot own goods in components in $\Psi_{L'+1}$ to $\Psi_L$, since the surplus of each component is small and the price of each good is at least $1$. The formal argument follows. Summing the equations for $1 \le k \le L'$, we obtain
\[     \sum_{k \le L'} m_{kk} p_k + \sum_{k \le L'}\sum_{i;\; i \not= k} m_{ki} p_i = \sum_{k \le L'} \epsilon_k .\]
Rearranging the second sum yields
\begin{align*}
\sum_{k \le L'} \epsilon_k &= \sum_{k \le L'} m_{kk} p_k + \sum_{i} \sum_{k \le L';\; k \not= i} m_{ki} p_i\\
&= \sum_{k \le L'} m_{kk} p_k + \sum_{i \le L'} \sum_{k \le  L';\; k \not= i} m_{ki} p_i + \sum_{i > L'} \sum_{k \le L';\; k \not= i} m_{ki} p_i \\
&= \sum_{i \le L'} m_{ii} p_i + \sum_{i \le L'} \sum_{k \not= i} m_{ki} p_i + \sum_{i > L'} \sum_{k \le L'} m_{ki} p_i \\
&= \sum_{i \le L'} \left( \sum_k m_{ki} \right) p_i + \sum_{i > L'} \sum_{k \le L'} m_{ki} p_i\\
&= \sum_{i > L'} \sum_{k \le L'} m_{ki} p_i.
\end{align*}
If some $m_{ki}$ with $k \le L' $ and $i > L'$ is nonzero, the right-hand side is less than or equal to $- 1/U^n$, a contradiction.
We have now shown that $m_{kh} \not= 0$ only if either $k$ and $h$ are less than or equal to $L'$ or both are larger. This implies that $\Psi_1 \cup \ldots \cup \Psi_{L'}$ is a union of connected components of $F_p'$. However, $F_p'$ consists of a single component, and hence this is impossible. Thus $M'$ has full rank.

Altogether we have established that the system consisting of the equations (\ref{intra-component}),  (\ref{inter-component}), and equation $p_i = 1$ are linearly independent. In matrix form, we can write this system as $A p = X$. The matrix $A$ is invertible and has integral entries bounded by $U$.

Consider the system $A p' = X'$ for a price vector $p'$, where $X'$ is the unit vector with a one in the row corresponding to the equation $p_i = 1$.
The solution will be a vector of rational numbers with a common
denominator $D\leq (nU)^n$ by Cramer's rule. Since
$||X|-|X'||<4\epsilon$, any difference
$|p'_i-p_i|$ is at most $4\epsilon\cdot
(nU)^n=1/(2n^{3n}U^{2n})$ by Cramer's rule. The vector $p'$ can be computed in time $\tilde{O}(n^4\log U)$ time by Theorem 5.12 in~\cite{GG03}; solution of an $n \times n$ linear system with integral entries bounded by $U$. 

All prices $p'_i$ are of the form $q_i/D$, where $q_i,D$
are integers and $D\leq (nU)^n$. So, 
\[ |p_i -\frac{q_i}{D}|=|p'_i-p_i|\leq 1/(2n^{3n}U^{2n})=\epsilon'/{D},\]
where $\epsilon'= {D}/(2n^{3n}U^{2n})\leq {1}/(2n^{2n}U^n)$. 
Consider any $b_i \in B$ and $c_j,c_k \in C$ and assume $u_{ij}/p_j \le u_{ik}/p_k$. Then,
\begin{align*}
u_{ij} q_k & \leq  u_{ij}(p_k D + \epsilon') \\
& \leq   u_{ik} p_j D + u_{ij} \epsilon' \\
& \leq  u_{ik} (q_j + \epsilon') + u_{ij} \epsilon' \\
& \leq  u_{ik}q_j + (u_{ik} + u_{ij})\epsilon' \\
& <  u_{ik}q_j + 1,
\end{align*}
and hence, $u_{ij} q_k \le u_{ik}q_j$ since $u_{ij} q_k$ and $u_{ik}q_j$ are
integral. We conclude that the equality edges with respect to $p$ are also equality edges with respect to $q$ and hence the edges of $N_p$ are also present in $N_q$.

Let $Z = \sum_i q_i$. Then $Z$ is the capacity of the cuts $(s,B\cup C\cup t)$ and $(s\cup B\cup C, t)$ in $N_q$. We will show that both cuts are min-cuts in $N_q$ and hence the price vector $q$ is a market equilibrium. The capacity of the cuts above in $N_p$ is at least $(Z-n\epsilon')/D$. If the cuts above are not min-cuts, there must be a cut of capacity at most $Z - 1$ in $N_q$ as the capacities in $N_q$ are integral. This cut is also a cut in $N_p$ and has capacity at most
$(Z-1)/{D}+2n{\epsilon'}/{D}=Z/D- (1-2n\epsilon')/{D}$ in $N_p$. Thus, any maximum flow in $N_p$ has surplus
at least
\[    \sum_i p_i - \left( \frac{Z}{D} - \frac{1 - 2n \epsilon'}{D}\right) \ge - \frac{n \epsilon'}{D} +  \frac{1 - 2n \epsilon'}{D} =  \frac{1 - 3n \epsilon'}{D}  > \epsilon,\]
a contradiction.
\end{proof}

We remark that each $p'_i$ is the rational with denominator at most $(nU)^n$ closest to $p_i$. Observe first 
that the distance of two distinct rational numbers of denominator at most 
$(nU)^n$ is a at least $(nU)^{2n}$ and hence is larger than
$2|p'_i-p_i|$. Since $p'_i$ is a rational number with denominator
at most $(nU)^n$, $p'_i$ is the rational number with denominator at most $(nU)^n$ nearest to $p_i$. One can compute $p_i'$ from $p_i$ by continued fraction expansion. Continued fraction expansion requires the floor-operation in addition to the basic arithmetic operations. 

\subsection{General Case}\label{sec:general}

We now drop assumption (6) of Section~\ref{sec:model}, i.e., there may be a proper subset $P$ of agents such that $u_{ij} = 0$ for all $i \in P$ and $j \not\in P$. We follow~\cite{Jain07,Devanur-Garg-Vegh}. 
Consider the liking graph of agents in which there is a directed edge from $i$ to $j$ iff $u_{ij}>0$.
If the graph is strongly connected, assumption (7) is satisfied. Otherwise, we determine a topological order of strongly connected components $P_1,P_2,...P_k$, in which there are only edges from a lower numbered
to higher numbered components. \emph{We need the assumption that if a strongly connected component consists of a single agent $i$, then $u_{ii} > 0$, i.e., the agent likes its own good.} Assumptions 1 to 5 and this assumption are necessary and sufficient for the existence of an equilibrium in the linear exchange model~\cite{Gale57,Gale76}.

We use the algorithm in Section~\ref{sec:algo} to compute the equilibrium for the agents in every strongly connected component $P_i$ ($i=1,2,...,k$). Note that the existence of a self-loop for components of size one guarantees the existence of a solution for such components. For $i=2,...,k$, multiplying the prices in $P_i$ by
$(U+1)\cdot \max \set{p_j}{j\in P_{i-1}}$ ensures that there are no equality edges from $P_i$ to $P_j$ for $i<j$. Since the agents in $P_j$ do not like any goods in $P_i$ for $i<j$, this will not affect the equilibrium of any component, and hence we obtain a global equilibrium. 

\subsection{A Remark on Arithmetic}

Our algorithm uses only rational arithmetic. Utilities are assumed to be integral and all prices are initially equal to one. In each iteration we multiply some of the prices by a factor $x$; $x$ is the minimum of $\xeq(S)$, $x_{23}(S)$, $x_{24}(S)$, and $\xmax$. Each of these quantities is rational. However, the bitlength of the prices may potentially double in each iteration, since $\xeq(S)$, $x_{23}(S)$, and $x_{24}(S)$ are quotients involving prices and surpluses.

\section{Polynomial Time}\label{sec: poly time}

In order to achieve polynomial time, we have to reduce the cost of arithmetic without loosing the polynomial bound on the number of iterations. We use the following approach. We restrict prices and the update factor $x$ to powers of $(1 + 1/L)$, where $L$ has polynomial bitlength. In the iterative part of the algorithm, we also approximate utilities by powers of $1 + 1/L$. We compute surpluses only approximately. The key technical result is that the approximate computation of surpluses and the factor $x$ increases the norm of the surplus vector by only a  factor $1 + O(1/n^4)$ in each iteration. This is an order of magnitude less than the change stated in Lemma~\ref{thm:main}. We conclude that the number of iterations is still $O(n^5 \log(nU))$. In the extraction of the equilibrium, we have to cope with the additional error introduced by rounding the utilities to powers of $(1 + 1/L)$. This is no problem, since we left some leeway in the proof of Theorem~\ref{thm:termination}.

\subsection{Iterative Improvement, Revisited}\label{sec: iterative improvement poly}

We restrict prices to the form $(1 + 1/L)^k$, $0 \le k \le K$, where $L = 16n^5 (nU)^n/\epsilon = 128 n^{5n + 5} U^{4n}$ and $K$ is chosen such that $(nU)^n \le (1 + 1/L)^K$. This choice of $K$ guarantees that the full range $[1..(nU)^n]$ of potential prices is covered. Then $K = O(n L \log(nU))$. We represent a price $p_i = (1 + 1/K)^{e_i}$ by its exponent $e_i \in \N$. Here, $\N$ denotes the natural numbers including zero. The bitlength of $e_i$ is $\log K = O(n \log (nU))$.

We approximate utilities by powers of $(1 + 1/L)$. For a utility $u_{ij} \in [1..U]$, let $e_{ij} \in \N$ be such that 
$\abs{e_{ij} - \log_{1 + 1/L} u_{ij}} < 1$ and let $\tu_{ij} = (1 + 1/L)^{e_{ij}}$. Then $u_{ij}/(1 + 1/L) \le \tu_{ij} \le u_{ij}(1 + 1/L)$. The exponent $e_{ij}$ can be computed from $u_{ij}$ in time polynomial in $\log U$ and $\log L$, see Section~\ref{sec: details of arithmetic}. It is tempting to define $e_{ij}$ as $\floor{\log_{1 + 1/L} u_{ij}}$, but for this definition, we do not know how to compute $e_{ij}$ in polynomial time.

We say that real number $b$ is an \emph{additive $1/L$-approximation} of real number $a$ if $\abs{a - b} \le 1/L$.
It is a \emph{multiplicative $(1 + 1/L)$-approximation} if $a/b \in [1/(1 + 1/L),1 + 1/L]$.

We compute the surplus vector with respect to $p$ only approximately. To this end, we replace each price $p_i$ by an approximation $\hp_i$ with denominator $L$ (note that the denominator of $p_i$ might be as large as $L^K$) and compute a balanced flow in a modified equality network $N(p,\hp)$. 

The approximation $\hp_i$ is a rational number with denominator $L$ and an additive $1/L$- and a multiplicative $(1 + 1/L)$-approximation of $p_i$, i.e.,
\begin{equation}\label{rounded prices} 
 \hp_i = \frac{q_i}{L}, \quad q_i \in \N,\quad \abs{p_i - \hp_i} \le \frac{1}{L}, \quad \frac{\hp_i}{p_i} \in [1/(1 + 1/L), (1 + 1/L)]. \end{equation}
We will see in Section~\ref{sec: details of arithmetic} that $q_i$ can be computed in time polynomial in $\log L$. Again, it is tempting to define $q_i$ as $\floor{p_i L}$, but for this definition, we do not know how to compute $q_i$ in polynomial time. 

\newcommand{\hq}{\hat{q}}

For two price vectors $p$ and $\hp$, the equality network $N(p,\hp)$ has its edge set determined by $p$ and its edge capacities determined by $\hp$, i.e., it has 
\begin{compactenum}[\hspace{\parindent}--]
    \item an edge $(s,b_i)$ with capacity $\hp_i$ for each $b_i \in B$,
    \item an edge $(c_i,t)$ with capacity $\hp_i$ for each $c_i \in C$, and 
    \item an edge $(b_i,c_j)$ with infinite capacity whenever $\tu_{ij}/p_j=\max_\ell \tu_{i\ell}/p_\ell$. We use $E_p$ to denote this set of edges.
\end{compactenum}\medskip

Let $\hf$ be a balanced flow in $N(p,\hp)$. For a buyer $b_i$, let $\hr(b_i) = \hp_i - \hf_{si}$ be its surplus, for a good $c_i$, let $\hr(c_i) = \hp_i - \hf_{it}$ be its surplus. As in Section~\ref{sec:algo}, we order the buyers by decreasing surplus and let $\ell$ be minimal such that $\hr(b_{\ell})/\hr(b_{\ell + 1}) > 1 + 1/n$. If there is no such $\ell$, let $\ell = n$. Let $S = \hr(b_\ell)$ and $B(S) = \sset{b_1,\ldots,b_\ell}$. Then $\hr(b_\ell) \ge \hr(b_1)/e \ge \hr(B)/(en)$ as in Section~\ref{sec:algo}. 

Since prices are now restricted to powers of $1 + 1/L$, the update factor $x$ has to be a power of $1 + 1/L$, and hence we need to modify its definition and computation. We compute $x$ in two steps. We first compute a factor $\hx$ from $\hp$ as in Section~\ref{sec:algo} and then obtain $x$ from $\hx$ by rounding to a near power of $1 + 1/L$. We use $x$ to update the price vector $p$. The prices of all goods in $\Gamma(B(S))$ are multiplied by $x$. The resulting algorithm is shown in Figure~\ref{fig: algo poly}.

\begin{figure}[t]
\centerline{\framebox{\parbox{5.0in}{
\begin{tabbing}
555\=555\=555\=555\=\kill
\>\parbox{0.9\textwidth}{Set $R = 32 e^2$, $\epsilon={1}/(4n^{4n}U^{3n})$, $\Delta=n^5/\epsilon$, and
$\xmax = 1 + 1/(Rn^3)$;}\\[0.3em]
\>Set $\hp_i = p_i=1$ for all $i$ and let $\hf$ be a balanced flow in $N(p,\hp)$;\\[0.3em]
\>{\bf Repeat}\\
\>\>Sort the buyers by their surpluses $\hr(b)$ in decreasing order: $b_1,b_2,...,b_n$; \\[0.3em]
\>\>\parbox{0.8\textwidth}{Find the smallest  $\ell$ for which $\hr(b_\ell)/\hr(b_{\ell+1})>1+1/n$, and\\ 
\mbox{}\hfill let $\ell=n$ when there is no such $\ell$;}\\[0.3em]
\>\>Let $S=\hr(b_\ell)$ and $B(S) = \sset{b_1,\ldots,b_\ell}$; \\[0.3em]
\>\>Compute $\hx = \min(\xeq(S),\hx_{23}(S),\hx_{24}(S),\xmax)$;\\[0.3em]
\>\>Compute $x$ as an $1 + 1/L$ approximation of $\hx$;\\[0.3em]
\>\>Replace $p_i$ by $x p_i$ for $c_i \in \Gamma(B(S))$;\\[0.3em]
\>\>Let $\hp$ be computed from $p$ according to (\ref{rounded prices});\\[0.3em]
\>\>Update $\hf$ to a balanced maximum flow in $N(p,\hp)$; \\[0.3em]
\> {\bf Exit from the loop if} $|\hr(B)|<\epsilon$; \\[0.3em]
\>Round $p$ to equilibrium prices by the procedure in Figure~\ref{fig: final rounding};
\end{tabbing}
}}}\caption{The polynomial time algorithm}\label{fig: algo poly}
\end{figure}

We define $\hx$ as the minimum of $\xeq(S)$, $\hx_{23}(S)$, $\hx_{24}(S)$,  and $\xmax$. The definition of $\xeq(S)$ is in terms of the rounded utilities:
\[\xeq(S)=\min \set{\frac{\tu_{ij}}{p_j}\cdot\frac{p_k}{\tu_{ik}}}{b_i\in B(S),
(b_i,c_j)\in E_p, c_k\notin\Gamma(B(S))}.\]
By definition, $\xeq(S)$ is a power of $1 + 1/L$. We redefine $\xmax$ as a power of $(1 + 1/L)$ such that 
$1 + 1/(Rn^3) \ge \xmax \ge (1 + 1/R(n^3))/(1 + 1/L)^2$. Such an $\xmax$ can be computed in polynomial time, see Section~\ref{sec: details of arithmetic}. The quantities $\hx_{23}(S)$ and $\hx_{24}(S)$ are defined in terms of the 
price vector $\hp$ and the surplus vector $\hr$. 
\begin{align*}
\hx_{23}(S) &= \min \set{\frac{\hp_i + \hp_j - \hr(b_j)}{\hp_i + \hp_j - \hr(b_i)}}{\text{$b_i$ is type 2 and $b_j$ is type 3 buyer}}\\
\hx_{24}(S) &= \min \set{\frac{\hp_i - \hr(b_j)}{\hp_i - \hr(b_{\Ran{i}})}}{\text{$b_i$ is type 2 and $b_j$ is type 4 buyer}}
\end{align*}
We then define $x_{23}(S)$ and $x_{24}(S)$ as powers of $1 + 1/L$ such that 
\[   \hx_{23}(S)/x_{23}(S) , \hx_{24}(S)/x_{24}(S)  \in [1/(1 + 1/L), 1 + 1/L].\]
The quantities $\xeq(S)$, $\hx_{23}(S)$, $\hx_{24}(S)$, $x_{23}(S)$, and $x_{24}(S)$ can be computed in polynomial time, see Section~\ref{sec: details of arithmetic}. Let $\hx = \min(\xeq(S),\hx_{23}(S),\hx_{24}(S),\xmax)$ and let $x = \min(\xeq,x_{23}(S),x_{24}(S),\xmax)$. Clearly, $x$ is a multiplicative $(1 + 1/L)$-approximation of $\hx$. 

We use $x$ to update the prices: $p'_i = x p_i$ for each good $c_i \in \Gamma(B(S))$ and 
$p'_i = p_i$ for any $c_i \not\in \Gamma(B(S))$. The new vector $\barp$ of rounded prices is defined by (\ref{rounded prices}) from $p'$. We also introduce an intermediate price vector $\tp$ which plays no role in the algorithm, but is crucial for the analysis: $\tp_i = \hx \hp_i$ for $c_i \in \Gamma(B(S))$ and $\tp_i = \hp_i$ for $c_i \not\in \Gamma(B(S))$. Observe that $\barp$ is obtained from $p$ by first updating to $p'$ by $x$ and then rounding according to (\ref{rounded prices}) and that $\tp$ is obtained from $p$ by first rounding to $\hp$ and then updating by $\hx$. Clearly, $\barp_i = \tp_i$ for $c_i \not\in \Gamma(B(S))$. For $c_i \in \Gamma(B(S))$, $\tp_i$ is a multiplicative $(1 + 1/\Delta)^3$ approximation of $\barp_i$, since
\begin{align*}
\frac{\tp_i}{\barp_i} &\in  \frac{ \hx \hp_i}{p'_i \cdot [1/(1 + 1/L),1 + 1/L]} \\
&\subseteq \frac{ x p_i \cdot [1/(1 + 1/L)^2,(1 + 1/L)^2]} { x p_i \cdot [1/(1 + 1/L),(1 + 1/L) ]}  \subseteq [1/(1 + 1/L)^3,(1 + 1/L)^3 ] .\end{align*}
Figure~\ref{fig: poly prices} illustrates the definitions.   

\begin{figure}[t]
\begin{center}
\begin{tikzpicture}[node distance=1.3cm]
\node (p) {$p$};
\node (pprime) [below=of p] {$p'$};
\node (phut) [right=1.8cm of p] {$\hat p$, flow $\hat f$ in $N(p,\hat p)$, compute $\hx$ and $x$};
\node (pbar) [right=1.8cm of pprime] {$\bar p$, flow $\bar f$ in $N(p',\bar p)$};
\node (pschlange) [right=of pbar] {\textcolor{blue}{$\tilde p$, flow $\tilde f$ in $N(p',\tp)$}};

\draw[->] (p) to node [left, align=center] {update by $x$}(pprime);
\draw[->] (p) to node [above] {round} (phut);
\draw[->] (pprime) to node [below] {round} (pbar);
\draw[-] (pbar.south) to [bend right] node [below] {$(1-1/L)^3$-approximation} (pschlange.south);
\draw[->] (phut.south) to node [right=0cm, align=left] {\textcolor{blue}{update $\hp$ by $\hx$ to $\tp$}} (pschlange.north);
\end{tikzpicture}
\end{center}
\caption{\label{fig: poly prices} We maintain $p$, $\hp$, $p'$ and $\barp$. We compute $\hx$ from $\hp$ and the balanced flow $\hf$ in $N(p,\hp)$; $x$ is obtained from $\hx$ by rounding 
to power for $1 + 1/L$. We update $p$ by $x$ to obtain $p'$ which we then round to $\barp$. The flow $\barf$ is a balanced flow in $N(p',\barp)$. We obtain it from $\hf$ by first multiplying the flow \Ran{on} the edges incident to goods in $\Gamma(B(S))$ by factors near $x$, then augmenting the flow as in Figure~\ref{fig:aug} and finally balancing the flow. The price vector $\tp$ and the flow $\tf$ are used in the analysis; $\tp$ is obtained from $\hp$ by updating with $\hx$ and $\tf$ is then obtained as in Section~\ref{sec:algo}.}
\end{figure}

Lemma~\ref{thm:main} applies to the transition from a balanced flow $\hf$ in $N(p,\hp)$ to a balanced flow $\tf$ in $N(p',\tp)$. In particular,
\begin{compactenum}[\hspace{\parindent}--]
\item $\norm{\tr(B)} \le (1 + O(1/n^3)) \cdot \norm{\hr(B)}$ in an $\xmax$-iteration ($x = \xmax$), and 
\item $\norm{\tr(B)} \le \norm{\hr(B)}/(1 +\Omega(1/n^3))$ in a balancing iteration $(x < \xmax$).
\end{compactenum}
Here $\hr(B)$ and $\tr(B)$ are the surplus vectors with respect to flow $\hf$ in $N(p,\hp)$ and flow $\tf$ in $N(p',\tp)$. 

Let $\barf$ be a balanced maximum flow in $N(p',\barp)$ with the property that goods with surplus zero with respect to $\hf$ also have surplus zero with respect to $\barf$.\footnote{One may compute $\barf$ as follows. First, determine for each good $c_i$ the factor $x_i = \barp_i/\hp_i$. Then multiply the flows on all edges incident to $c_i \in \Gamma(B(S))$ by $x_i$. Adjust the flow on the edges $(s,b_i)$ such that flow conservation holds. If $x = \xeq(S)$, 
run the procedure of Figure~\ref{fig:aug}. Next, augment the flow along augmenting paths from $s$ to $t$ until the flow is maximum. Finally, balance the flow.} Let $f^* = \tf/(1 + 1/L)^3$ be the flow $\tf$ scaled by a factor $1/(1 + 1/L)^3$. Since $\barp_i \ge \tp_i/(1 + 1/L)^3$, $f^*$ is a feasible flow in $N(p',\barp)$. Thus
\[   \norm{ \barr(B) } \le \norm { r^*(B) },\]
where $r^*(B)$ is the surplus vector with respect to $f^*$ and $\barr(B)$ is the surplus vector with respect to $\barf$. The inequality holds since $\barf$ is a balanced maximum flow in $N(p',\barp)$ and $f^*$ is a feasible flow in the same network. We finish the analysis of the update step by showing that $\norm { r^*(B) }$ is bounded by $(1 + O(1/n^4)) \cdot \norm{\tr(B)}$.

% We next construct a feasible flow $\barf$ with respect to price vector $\barp$. For any affected good $c_i$, let 
% $x_i = \barp_i/\hp_i$ be the factor by which the rounded price of $c_i$ is scaled. Then $x_i$ is a multiplicative $(1 + 1/L)^2$-approximation of $\hx$ and a multiplicative $(1 + 1/L)^3$-approximation of $x$, since 
% \[   x_i = \frac{\barp_i}{\hp_i} = \hx \cdot \frac{\barp_i}{\hx p_i} \frac{p_i}{\hp_i} \in \hx \cdot [1/(1 + 1/L)^2,(1 + 1/L)^2] \in x \cdot [/1/(1 + 1/L)^3, (1 + 1/L)^3]. \]
% The flow $\barf$ is now defined in the obvious way: for any affected $c_i$, the flow on the incident edges is multiplied by $x_i$, and for non-affected $c_i$, the flow on the incident edges is unchanged. For a buyer $b_i \in B(S)$ the outflow becomes 
% \begin{align*}
%  \sum_{(b_i,c_j \in E_p} x_j \barf_{ij} &\in \left( \hx \cdot  \sum_{(b_i,c_j) \in E_p} \hf_{ij}\right) \cdot [1/(1 + 1/L)^2,(1 + 1/L)^2]\\
%   &\subseteq \left( x \cdot  \sum_{(b_i,c_j) \in E_p} \hf_{ij} \right) \cdot [1/(1 + 1/L)^3,(1 + 1/L)^3]\\
%   &= \tf_{ij} \cdot [1/(1 + 1/L)^3,(1 + 1/L)^3].\end{align*}
% The inflow is set equal to the outflow. For buyers not in $B(S)$, the outflow and inflow are unchanged. 

\begin{lemma} For any buyer $b_i$: $\abs{r^*(b_i) - \tr(b_i)} \le 8(nU)^n/L$. \end{lemma}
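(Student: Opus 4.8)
The plan is to express both surpluses directly in terms of the flow on the source edge and the capacity of that edge, subtract, and bound the two resulting error terms separately. Recall that $f^* = \tf/(1 + 1/L)^3$, so $f^*_{si} = \tf_{si}/(1 + 1/L)^3$, while the capacity of the edge $(s,b_i)$ is $\barp_i$ in $N(p',\barp)$ and $\tp_i$ in $N(p',\tp)$. Hence
\[
r^*(b_i) - \tr(b_i) = \bigl(\barp_i - f^*_{si}\bigr) - \bigl(\tp_i - \tf_{si}\bigr) = (\barp_i - \tp_i) + \tf_{si}\Bigl(1 - \frac{1}{(1 + 1/L)^3}\Bigr).
\]
It therefore suffices to bound $\abs{\barp_i - \tp_i}$ and $\tf_{si}\bigl(1 - (1 + 1/L)^{-3}\bigr)$ each by $4(nU)^n/L$.

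For the first term I would use that $\barp_i = \tp_i$ whenever $c_i \not\in \Gamma(B(S))$, and that for $c_i \in \Gamma(B(S))$ the value $\tp_i$ is a multiplicative $(1 + 1/L)^3$-approximation of $\barp_i$, as was verified just before the lemma. Thus $\abs{\barp_i - \tp_i} \le \max(\barp_i,\tp_i)\bigl((1 + 1/L)^3 - 1\bigr)$. Since all prices arising in the algorithm, including the rounded vector $\barp$ and the intermediate vector $\tp$, are bounded by $(nU)^n$ (they differ from the exact price vector only by the rounding factor $1 + 1/L$ and the update factor $\xmax \le 2$, and exact prices are at most $(nU)^{n-1}$ by Lemma~\ref{thm:largest}), and since $(1 + 1/L)^3 - 1 = 3/L + 3/L^2 + 1/L^3 \le 4/L$ for the (huge) value of $L$ used here, we obtain $\abs{\barp_i - \tp_i} \le 4(nU)^n/L$.

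For the second term I would note that $0 \le \tf_{si} \le \tp_i \le (nU)^n$, since $\tf$ is a feasible flow in $N(p',\tp)$, and that $0 \le 1 - (1 + 1/L)^{-3} = \bigl((1 + 1/L)^3 - 1\bigr)/(1 + 1/L)^3 \le (1 + 1/L)^3 - 1 \le 4/L$. Hence $\tf_{si}\bigl(1 - (1 + 1/L)^{-3}\bigr) \le 4(nU)^n/L$, and adding the two bounds yields $\abs{r^*(b_i) - \tr(b_i)} \le 8(nU)^n/L$. I do not expect any real obstacle in this argument; the only point requiring a little care is confirming that the uniform price bound $(nU)^n$ applies to the auxiliary vectors $\barp$ and $\tp$ and not merely to the exact price vector $p$, which is immediate from the slack in Lemma~\ref{thm:largest}.
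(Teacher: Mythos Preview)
Your proof is correct and takes essentially the same approach as the paper: split $r^*(b_i)-\tr(b_i)$ into the price difference $\barp_i-\tp_i$ and the flow difference $\tf_{si}-f^*_{si}$, and bound each by $4(nU)^n/L$ using that the two quantities in each pair are at most $(nU)^n$ and are multiplicative $(1+1/L)^3$-approximations of one another. The paper presents this a bit more tersely, but the decomposition and the two estimates are identical; your explicit check that the $(nU)^n$ bound covers $\barp$ and $\tp$ (via the slack in Lemma~\ref{thm:largest}) is a reasonable bit of extra care.
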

\begin{proof} By the definition of surpluses, 
\[ \abs{r^*(b_i) - \tr(b_i)} = \abs{\barp_i - f^*_{si} - (\tp_i - \tf_{si})} \le \abs{\barp_i - \tp_i} + \abs{f^*_{si} - \tf_{si}}.\]
In both terms on the right hand side, the quantities involved are bounded by the maximum price and are multiplicative $(1 + 1/L)^3$-approximations of each other. Thus the difference is bounded by $2 (nU)^n\cdot 4/L$, since
$(1 + 1/L)^3 - 1 \le 4/L$. \end{proof}

\begin{lemma}\label{lem: star and tilde} $\norm { r^*(B) } = (1 + O(1/n^4)) \cdot \norm{\tr(B)}$ if $\abs{\tr(B)} \ge \epsilon/2$ and $\abs{r^*(B)} \le \epsilon$ if $\abs{\tr(B)} \le \epsilon/2$. \end{lemma}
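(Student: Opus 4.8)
The plan is to propagate the pointwise estimate of the previous lemma through the triangle inequality — in the $\ell_1$ norm for the termination branch and in the $\ell_2$ norm for the progress branch — after first converting the bound $8(nU)^n/L$ into a multiple of $\epsilon$. Since $L = 16 n^5 (nU)^n/\epsilon$, the previous lemma gives $\abs{r^*(b_i) - \tr(b_i)} \le 8(nU)^n/L = \epsilon/(2n^5)$ for every buyer $b_i$. Both $r^*(B)$ and $\tr(B)$ are nonnegative vectors, because $f^*$ is a feasible flow in $N(p',\barp)$ and $\tf$ is a (balanced, hence feasible) flow in $N(p',\tp)$; therefore $\abs{r^*(B)} = \sum_i r^*(b_i)$ and $\abs{\tr(B)} = \sum_i \tr(b_i)$, and the $\ell_1$ triangle inequality yields $\bigl|\, \abs{r^*(B)} - \abs{\tr(B)}\, \bigr| \le \sum_i \abs{r^*(b_i) - \tr(b_i)} \le n\cdot \epsilon/(2n^5) = \epsilon/(2n^4)$.

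For the branch $\abs{\tr(B)} \le \epsilon/2$ this immediately gives $\abs{r^*(B)} \le \epsilon/2 + \epsilon/(2n^4) \le \epsilon$, which is the second assertion. For the branch $\abs{\tr(B)} \ge \epsilon/2$ I would bound the $\ell_2$ discrepancy via the $\ell_\infty$-to-$\ell_2$ inequality rather than the cruder $\ell_1$-to-$\ell_2$ one: $\norm{r^*(B) - \tr(B)} \le \sqrt{n}\cdot \max_i \abs{r^*(b_i) - \tr(b_i)} \le \sqrt{n}\cdot \epsilon/(2n^5) = \epsilon/(2n^{4.5})$. On the other hand $\norm{\tr(B)} \ge \abs{\tr(B)}/\sqrt{n} \ge \epsilon/(2\sqrt{n})$ by Cauchy--Schwarz, so $\norm{r^*(B) - \tr(B)} \le \norm{\tr(B)}/n^4$. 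Two applications of the triangle inequality then give $(1 - 1/n^4)\norm{\tr(B)} \le \norm{r^*(B)} \le (1 + 1/n^4)\norm{\tr(B)}$, i.e.\ $\norm{r^*(B)} = (1 + O(1/n^4))\norm{\tr(B)}$, which is the first assertion.

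There is no genuine obstacle here; the lemma is a bookkeeping step that transfers the additive rounding error in prices and flows into the potential function. The only point requiring care is the choice of norm inequalities (using the factor-$\sqrt{n}$ bound $\norm{v}_2 \le \sqrt{n}\norm{v}_\infty$ instead of $\norm{v}_2 \le \norm{v}_1$) together with the tracking of the constants in $L$ and $\epsilon$, so that the relative error comes out as $O(1/n^4)$ — an order of magnitude below the $\pm\Theta(1/n^3)$ change per iteration from Lemma~\ref{thm:main} — which is what lets the iteration-count bound of $O(n^5\log(nU))$ survive the switch to approximate arithmetic.
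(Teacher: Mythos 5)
Your proposal is correct and follows essentially the same route as the paper: convert the pointwise bound $\abs{r^*(b_i)-\tr(b_i)}\le 8(nU)^n/L=\epsilon/(2n^5)$ into a relative $\ell_2$ error using the hypothesis $\abs{\tr(B)}\ge\epsilon/2$ together with $\norm{\tr(B)}\ge\abs{\tr(B)}/\sqrt{n}$, and handle the termination branch by an $\ell_1$ triangle inequality. The only (harmless) difference is cosmetic: the paper expands $\sum_b(\tr(b)+8(nU)^n/L)^2$ and bounds the cross and square terms separately, whereas you bound $\norm{r^*(B)-\tr(B)}\le\sqrt{n}\,\norm{r^*(B)-\tr(B)}_\infty$ and apply the $\ell_2$ triangle inequality, which also yields the matching lower bound $(1-1/n^4)\norm{\tr(B)}\le\norm{r^*(B)}$ explicitly.
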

\begin{proof} Assume first, that $\abs{\tr(B)} \ge \epsilon/2$. Then (using $\norm{\tr(B)}^2 \ge \abs{\tr(B)}^2/n$)
\begin{align*}
\norm{r^*(B)} &\le \sum_{b \in B} (\tr(b) + 8 (nU)^n/L)^2\\
&= \norm{\tr(B)}^2+ 2 \cdot 8 (nU)^n\abs{\tr(B)}/L + 64 n (nU)^{2n}/L^2 \\
&\le \norm{\tr(B)}^2(1+\frac{1}{n^4})^2.
\end{align*}
The inequality holds since
\[ 8 (nU)^n\abs{\tr(B)}/L   \le     \abs{\tr(B)}^2/n^5 \le \norm{\tr(B)}^2/n^4 \]
and
\[ 8 n^{1/2} (nU)^{n}/L\le \abs{\tr(B)}/n^{9/2}  \le \norm{\tr(B)}/n^4.  \]
Assume next that $\abs{\tr(B)} \le \epsilon/2$. Then
\[   \abs{r^*(B)} \le \abs{\tr(B)} + 8 n (nU)^n/L \le \epsilon.\]
\end{proof}

We can now reprove Lemma~\ref{thm:main} and the polynomial bound on the number of iterations. 

\begin{lemma} Let $\hf$ be a balanced maximum flow in $N(p,\hp)$ and let
$\barf$ be a balanced maximum flow in $N(p',\barp)$. Then either $\abs{\barr(B)} \le \epsilon$ or
\begin{compactenum}[\hspace{\parindent}--]
\item $\norm{\barr(B)} = (1 + O(1/n^3)) \norm{\hr(B)}$ in an $\xmax$-iteration, and
\item $\norm{\barr(B)} = \norm{\hr(B)}/(1 + \Omega(1/n^3))$ in a balancing iteration.
\end{compactenum}
\end{lemma}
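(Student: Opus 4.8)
The statement is obtained by chaining together the three transitions that were set up in the paragraphs immediately preceding it, and tracking what each does to $\norm{r(B)}$:
\[ \hf \text{ in } N(p,\hp) \ \longrightarrow\ \tf \text{ in } N(p',\tp) \ \longrightarrow\ f^* = \tf/(1+1/L)^3 \text{ in } N(p',\barp) \ \longrightarrow\ \barf \text{ in } N(p',\barp). \]
All the real work has already been done in Lemma~\ref{thm:main} and Lemma~\ref{lem: star and tilde}; what remains is to verify that the rounding errors accumulated along this chain are genuinely of lower order than the $\Omega(1/n^3)$ gain of a balancing iteration, and to peel off the termination case.

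First I would invoke Lemma~\ref{thm:main}. As noted in the text, it applies verbatim to the transition from the balanced flow $\hf$ in $N(p,\hp)$ to the balanced flow $\tf$ in $N(p',\tp)$, because $\tp$ is obtained from $\hp$ by scaling the goods in $\Gamma(B(S))$ by $\hx = \min(\xeq(S),\hx_{23}(S),\hx_{24}(S),\xmax)$, and this $\hx$ satisfies the hypotheses of Theorem~\ref{thm:adjusted-flow} with respect to $\hp$ and $\hr$ by construction. Hence $\norm{\tr(B)} \le (1+O(1/n^3))\norm{\hr(B)}$ in an $\xmax$-iteration and $\norm{\tr(B)} \le \norm{\hr(B)}/(1+\Omega(1/n^3))$ in a balancing iteration. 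Second, $f^* = \tf/(1+1/L)^3$ is a feasible flow in $N(p',\barp)$ (the two networks have the same edge set, determined by $p'$, and $\barp_i \ge \tp_i/(1+1/L)^3$), and $\barf$ is a balanced \emph{maximum} flow in $N(p',\barp)$; therefore $\norm{\barr(B)} \le \norm{r^*(B)}$, and also $\abs{\barr(B)} \le \abs{r^*(B)}$ since both total surpluses equal $\sum_i\barp_i$ minus the respective flow values and $\barf$ has flow value at least that of $f^*$. Third, Lemma~\ref{lem: star and tilde} tells us that either $\abs{\tr(B)} \le \epsilon/2$, in which case $\abs{r^*(B)} \le \epsilon$ and so $\abs{\barr(B)} \le \epsilon$ — this is the first alternative of the statement — or $\norm{r^*(B)} = (1+O(1/n^4))\norm{\tr(B)}$.

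In the remaining case I would just compose: $\norm{\barr(B)} \le \norm{r^*(B)} \le (1+O(1/n^4))\norm{\tr(B)}$. For an $\xmax$-iteration this gives $\norm{\barr(B)} \le (1+O(1/n^4))(1+O(1/n^3))\norm{\hr(B)} = (1+O(1/n^3))\norm{\hr(B)}$. For a balancing iteration it gives $\norm{\barr(B)} \le \frac{1+O(1/n^4)}{1+\Omega(1/n^3)}\norm{\hr(B)}$, and since the numerator's $O(1/n^4)$ correction is an order of magnitude smaller than the denominator's gain, this is still $\norm{\hr(B)}/(1+\Omega(1/n^3))$, possibly with a smaller hidden constant. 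This yields both bullets, and together with Lemmas~\ref{lem:number-of-bad} and~\ref{lem:number-of-good} (whose arguments are unaffected, since $\Omega(1/n^3)$ is unchanged and the exit test $\abs{\hr(B)} < \epsilon$ in Figure~\ref{fig: algo poly} matches the first alternative) keeps the iteration count at $O(n^5\log(nU))$.

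\textbf{Main obstacle.}
The only thing requiring care — it is bookkeeping rather than a new idea — is confirming that every multiplicative $(1+1/L)$-type error introduced along the way ($p \to \hp$, $\hx \to x$, $u \to \tu$, and the two routes $p \to p' \to \barp$ versus $p \to \hp \to \tp$, whose discrepancy is the displayed $(1+1/L)^3$ factor) really contributes only $O(1/n^4)$ to $\norm{r^*(B)}/\norm{\tr(B)}$. This rests on the choice $L = 16 n^5 (nU)^n/\epsilon$, which makes the per-coordinate additive error $O((nU)^n/L) = O(\epsilon/n^5)$, combined with the fact that throughout the iterative phase $\abs{\tr(B)} \ge \epsilon/2$ and $\norm{\tr(B)}^2 \ge \abs{\tr(B)}^2/n$, so that such an additive error is a relative error of $O(1/n^4)$ on the norm — which is exactly what Lemma~\ref{lem: star and tilde} packages. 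Once that lemma is in hand, the present proof is a two-line composition plus the case split on $\abs{\tr(B)}$ versus $\epsilon/2$.
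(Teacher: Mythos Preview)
Your proof is correct and follows essentially the same route as the paper's own argument: chain $\hf \to \tf$ via Lemma~\ref{thm:main}, then $\tf \to f^*$ by scaling, then $f^* \to \barf$ via balancedness, and invoke Lemma~\ref{lem: star and tilde} for the case split on $\abs{\tr(B)}$ versus $\epsilon/2$. The only cosmetic difference is that the paper organizes the case split by assuming $\abs{\barr(B)} > \epsilon$ and working backwards through $\abs{r^*(B)} > \epsilon$ to $\abs{\tr(B)} > \epsilon/2$, whereas you split directly on $\abs{\tr(B)}$; the logic is identical.
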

\begin{proof} Assume $\abs{\barr(B)} > \epsilon$. Then $\abs{r^*(B)} > \epsilon$ since $\barf$ is a maximum flow and $f^*$ is a feasible flow. Thus $\tr(B) > \epsilon/2$ and $\norm { r^*(B) } = (1 + O(1/n^4)) \cdot \norm{\tr(B)}$
by Lemma~\ref{lem: star and tilde}. The claim follows.\end{proof}

\begin{theorem}\label{thm: poly bound} The number of iterations is $O(n^5 \log(nU)$. \end{theorem}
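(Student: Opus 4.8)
The plan is to repeat, essentially verbatim, the iteration-count analysis of Section~\ref{sec:algo}: split the iterations of the loop in Figure~\ref{fig: algo poly} into $\xmax$-iterations ($x = \xmax$) and balancing iterations ($x < \xmax$), bound the number of $\xmax$-iterations by a price-growth argument, bound the number of balancing iterations by tracking the $l_2$-norm of the surplus vector using the freshly re-proven version of Lemma~\ref{thm:main}, and add the two estimates. The key observation making this work is that the per-iteration change of $\norm{\hr(B)}$ in the polynomial-time algorithm is of the same order ($1 + O(1/n^3)$ up, $1 + \Omega(1/n^3)$ down) as in the exact algorithm, and that one price still grows by a definite factor in every $\xmax$-iteration.

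For the $\xmax$-iterations I would first check that the argument of Lemma~\ref{thm:largest} survives the passage to approximate prices and utilities: with $u_{ij}$ replaced by $\tu_{ij}$, the inequality $p_k/p_j \le u_{ik}/u_{ij} \le U$ becomes $p_k/p_j \le \tu_{ik}/\tu_{ij} \le U(1 + 1/L)^2 \le 2U$, and Lemmas~\ref{thm:bal1} and~\ref{pro:unchanged} carry over (only goods in $\Gamma(B(S))$, which have surplus zero with respect to the balanced flow $\hf$ in $N(p,\hp)$, ever have their prices raised, so a good with positive surplus keeps price one, and such a good exists as long as $\abs{\hr(B)} \ge \epsilon$). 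Hence all prices stay bounded by $(nU)^{O(n)}$. In an $\xmax$-iteration the prices of the nonempty set $\Gamma(B(S))$ are multiplied by $x = \xmax \ge (1 + 1/(Rn^3))/(1 + 1/L)^2 \ge 1 + \Omega(1/n^3)$, so a fixed price can be raised in at most $O(\log_{1 + \Omega(1/n^3)} (nU)^{O(n)}) = O(n^4 \log(nU))$ $\xmax$-iterations; summing over the $n$ goods, the number of $\xmax$-iterations is $O(n^5 \log(nU))$, exactly as in Lemma~\ref{lem:number-of-bad}.

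For the balancing iterations I would invoke the re-proven Lemma~\ref{thm:main}: in every iteration that does not exit the loop, $\norm{\barr(B)} \le (1 + O(1/n^3))\norm{\hr(B)}$ if it is an $\xmax$-iteration and $\norm{\barr(B)} \le \norm{\hr(B)}/(1 + \Omega(1/n^3))$ if it is a balancing iteration, where the output surplus vector $\barr(B)$ of one iteration is the input $\hr(B)$ of the next. Initially $\norm{\hr(B)} \le \sqrt{n}$ since each $\hr(b_i) \le \hp_i = 1$, and as long as the loop has not terminated we have $\abs{\hr(B)} \ge \epsilon$, hence $\norm{\hr(B)} \ge \abs{\hr(B)}/\sqrt{n} \ge \epsilon/\sqrt{n}$. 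Writing $N_{\mathrm{bad}}$ and $N_{\mathrm{good}}$ for the numbers of $\xmax$- and balancing iterations, the product bound over all but the last iteration gives
\[ \frac{\epsilon}{\sqrt{n}} \;\le\; \sqrt{n}\,\frac{(1 + O(1/n^3))^{N_{\mathrm{bad}}}}{(1 + \Omega(1/n^3))^{N_{\mathrm{good}}}}, \]
and taking logarithms, $N_{\mathrm{good}} \cdot \Omega(1/n^3) \le \log(n/\epsilon) + N_{\mathrm{bad}} \cdot O(1/n^3)$. Since $\log(1/\epsilon) = O(n\log(nU))$ and $N_{\mathrm{bad}} = O(n^5\log(nU))$, the right-hand side is $O(n^2\log(nU))$, whence $N_{\mathrm{good}} = O(n^5\log(nU))$. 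Adding $N_{\mathrm{bad}} + N_{\mathrm{good}} + 1$ yields the claimed bound of $O(n^5 \log(nU))$ iterations.

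The main obstacle is not a new idea but careful bookkeeping: one must verify that every structural fact used in Section~\ref{sec:algo} — the price bound (Lemma~\ref{thm:largest}), "goods reachable from surplus buyers have surplus zero in a balanced flow" (Lemma~\ref{thm:bal1}), and "positive-surplus goods keep price one" (Lemma~\ref{pro:unchanged}) — still holds in the network $N(p,\hp)$ with approximate prices and utilities, and that the two-sided rounding ($p \mapsto \hp$, $p' \mapsto \barp$, and the rounding of $x$ to a power of $1 + 1/L$) does not disturb these invariants; this is exactly the content the preceding lemmas of Section~\ref{sec: iterative improvement poly} were designed to provide. Once these invariants and the re-proven Lemma~\ref{thm:main} are in hand, the counting argument above is a line-by-line transcription of Lemmas~\ref{lem:number-of-bad} and~\ref{lem:number-of-good}.
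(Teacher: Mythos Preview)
Your proposal is correct and takes essentially the same approach as the paper, whose proof reads in its entirety ``This is proved as in Section~\ref{sec:algo}.'' You have simply spelled out what that sentence means---reusing Lemmas~\ref{lem:number-of-bad} and~\ref{lem:number-of-good} once the re-proven Lemma~\ref{thm:main} and the price bound are available---and your extra care in checking that the invariants of Lemmas~\ref{thm:largest}, \ref{thm:bal1}, and~\ref{pro:unchanged} survive the rounding is appropriate due diligence rather than a different argument.
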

\begin{proof} This is proved as in Section~\ref{sec:algo}.\end{proof}

\subsection{Extracting the Market Equilibrium, Revisited}\label{sec: rounding after termination poly}

We proceed essentially as in Section~\ref{sec: rounding after termination}. We go through the proof of Theorem~\ref{thm:termination} and indicate the required changes. 

We first make sure that every component of $F'_p$ contains a good with surplus and hence price one. This is
as in the proof of Theorem~\ref{thm:termination}. We work with the rounded utilities $\tu$ and prices that are powers of $1 + 1/L$. 

We next discuss the set-up of the linear system. Consider any component $\Psi_k$ of the undirected equality graph (with respect to utilities $\tu$ and price vector $p$). As in Section~\ref{sec: rounding after termination}, select any $\abs{\Psi_k \cap C} - 1$ independent equations. For each equation $\tu_{i j'} p_j =\tu_{ij} p_{j'}$ in this set add the equation
\[   u_{ij'} p_j - u_{ij} p_{j'} = (u_{ij'} - \tu_{ij'}) p_j + (\tu_{ij} - u_{ij})p_{j'} \]
to the system. Note that the absolute value of the right hand side is bounded by $2 (nU)^n U/L \le \epsilon/n$. 

Since there is no flow between components, for each component $\Psi_k$ in $F_p$, the money difference between buyers and goods in $\Psi_k$ is equal to the surplus difference. Thus, we have the equation
\[   \sum_{b_i\in B\cap\Psi_k}\hp_i-\sum_{c_i\in C\cap\Psi_k}\hp_i=\epsilon_k \]
where $\epsilon_k$ (positive or negative) comes from the surpluses of goods and buyers in the component. Hence, $\sum|\epsilon_k|\leq 2\epsilon$. We replace $\hp$ by $p$ and add the equation
\[ \sum_{b_i\in B\cap\Psi_k} p_i-\sum_{c_i\in C\cap\Psi_k} p_i=\epsilon_k + \delta_k \]
to the system. Here $\abs{\delta_k} \le n (nU)^n/L  \le \epsilon/n$. 

The last equation is $p_i = 1$, where $c_i$ is a good that had price one at termination of the loop. 

The linear system $A p = X$ has full rank. As above, let $X'$ be the unit vector with a zero in the row corresponding to the equation $p_i = 1$, and let $p'$ be a solution to the system $A p' = X'$. As before, $p'$ is a vector of rationals with common denominator $\det A$. Let $p'_i = q_i/D$ with $q_i \in \N$. Observe that $\abs{|X| - |X'|} < 4 \epsilon$. 
In the proof of Theorem~\ref{thm:termination} we used the same inequality. 

We next show that equality edges with respect to utilities $\tu$ and price vector $p$ are equality edges with respect to the true utilities $u$ and price vector $q$. Observe
\[    |p_i -\frac{q_i}{D}|=|p'_i-p_i|\leq 1/(2n^{3n}U^{2n})=\epsilon'/{D},\]
where $\epsilon'= {D}/(2n^{3n}U^{2n})\leq {1}/(2n^{2n}U^n)$. 
Consider any $b_i \in B$ and $c_j,c_k \in C$ and
  assume $\tu_{ij}/p_j \le \tu_{ik}/p_k$. Then,
\begin{align*}
u_{ij} q_k & \leq  \tu_{ij} (1 + 1/L) (p_k D + \epsilon') \\
& \leq   \tu_{ij} p_k D  + \tu_{ij}((p_k D + \epsilon')/L + \epsilon')\\
& \le \tu_{ik} p_j D  + \tu_{ij}((p_k D + \epsilon')/L + \epsilon')\\
&\le u_{ik}(1 + 1/L)(q_j + \epsilon') + \tu_{ij}((p_k D + \epsilon')/L + \epsilon')\\
& \leq  u_{ik}q_j + u_{ik} ((q_j + \epsilon')/L + \epsilon') + \tu_{ij}((p_k D + \epsilon')/L + \epsilon')\\
&<  u_{ik}q_j +  3 U (nU)^{2n}/L \\
& \le  u_{ik}q_j + 1,
\end{align*}
and hence, $u_{ij} q_k \le u_{ik}q_j$ since $u_{ij} q_k$ and $u_{ik}q_j$ are
integral. Thus every equality edge with respect to utilities $\tu$ and price vector $p$ is also an equality edge with respect to the true utilities $u$ and price vector $q$. Let $N_q$ be the flow network with respect to price vector $q$. The proof is now completed as in Section~\ref{sec: rounding after termination}. 

\begin{theorem}\label{thm:termination poly} Assume $\hr(B) \le \epsilon$. Then a vector of equilibrium prices can be extracted from the equality graph $E_p$ with respect to the rounded utilities $\tu$ in time $\tilde{O}(n^4 \log(nU))$. \end{theorem}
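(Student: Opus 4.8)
Theorem~\ref{thm:termination poly} is the polynomial-time analogue of Theorem~\ref{thm:termination}, so the plan is to follow the proof of Theorem~\ref{thm:termination} essentially verbatim, only tracking the extra rounding errors that arise because we now work with the rounded utilities $\tu$ and prices restricted to powers of $1 + 1/L$. Most of the work has already been done in the text preceding the theorem statement; the proof amounts to assembling those pieces and checking that the slack left in the original proof of Theorem~\ref{thm:termination} absorbs the new errors.

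First I would run the component-uniting preprocessing: add the self-loops $(b_i,c_i)$ to get $F'_p$ and, for each surplus-free component, scale prices and flows by a common power of $1+1/L$ until a new equality edge (with respect to $\tu$) appears, uniting it with a component that has surplus. As in Theorem~\ref{thm:termination}, property (6) guarantees such an edge must emerge, and we may assume w.l.o.g.\ that $F'_p$ becomes a single component. Next I would set up the linear system exactly as described above: for each component $\Psi_k$ of $F_p$ choose $\abs{\Psi_k \cap C} - 1$ independent equations $\tu_{ij'} p_j = \tu_{ij} p_{j'}$, replaced by the true-utility equations $u_{ij'} p_j - u_{ij} p_{j'} = (u_{ij'} - \tu_{ij'}) p_j + (\tu_{ij} - u_{ij}) p_{j'}$ whose right-hand side is bounded by $2(nU)^n U/L \le \epsilon/n$; for each component but one the money-balance equation $\sum_{b_i \in B \cap \Psi_k} p_i - \sum_{c_i \in C \cap \Psi_k} p_i = \epsilon_k + \delta_k$ with $\abs{\delta_k} \le n(nU)^n/L \le \epsilon/n$; and finally $p_i = 1$ for a good with price one. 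The full-rank argument is identical to Theorem~\ref{thm:termination} — it only uses the combinatorial structure of $F'_p$ being connected, not the exact coefficients — so the matrix $A$ is invertible with integral entries bounded by $U$. Note the total perturbation of the right-hand side satisfies $\abs{\abs{X} - \abs{X'}} < 4\epsilon$, exactly the bound used in the original proof, which is precisely why the $4\epsilon$ leeway was built in there.

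Then I would invoke the displayed computation already given in the text: solving $A p' = X'$ yields rationals $p'_i = q_i/D$ with $D = \det A \le (nU)^n$, and by Cramer's rule $\abs{p_i - q_i/D} \le 1/(2n^{3n}U^{2n}) = \epsilon'/D$ with $\epsilon' \le 1/(2n^{2n}U^n)$. The chain of inequalities displayed above then shows that whenever $\tu_{ij}/p_j \le \tu_{ik}/p_k$ one has $u_{ij} q_k < u_{ik} q_j + 3U(nU)^{2n}/L \le u_{ik} q_j + 1$, hence $u_{ij} q_k \le u_{ik} q_j$ by integrality — so every equality edge w.r.t.\ $(\tu, p)$ is an equality edge w.r.t.\ $(u, q)$, i.e.\ $N_p \subseteq N_q$. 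From here the min-cut argument of Theorem~\ref{thm:termination} carries over unchanged: the cuts $(s, B \cup C \cup t)$ and $(s \cup B \cup C, t)$ have capacity $Z = \sum_i q_i$ in $N_q$; if either were not a min-cut there would be a cut of capacity $\le Z - 1$ in $N_q$, which would be a cut of capacity $\le Z/D - (1 - 2n\epsilon')/D$ in $N_p$, forcing a maximum-flow surplus $> \epsilon$ in $N_p$ — contradicting $\hr(B) \le \epsilon$ (one needs here that the surplus with respect to the \emph{rounded} prices being small implies the surplus with respect to $p$ in $N_p$ is at most, say, $2\epsilon$, via the multiplicative approximation in~(\ref{rounded prices}), and to re-tune $\epsilon'$ accordingly — this is the one spot where the constants differ slightly from the original and must be rechecked). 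For the running time, the dominant cost is solving the $n \times n$ integral linear system with entries bounded by $U$, which is $\tilde O(n^4 \log U) = \tilde O(n^4 \log(nU))$ by Theorem~5.12 of~\cite{GG03}, plus the polynomially-many component-uniting steps and the exponent arithmetic of Section~\ref{sec: details of arithmetic}, all of which are dominated.

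**Main obstacle.** The only genuinely new bookkeeping — and the step I would be most careful with — is verifying that the accumulated additive errors ($\epsilon/n$ per intra-component equation, $\epsilon/n$ per money-balance equation, and the passage from $\hp$-surplus to $p$-surplus in $N_p$ via~(\ref{rounded prices})) still sum to strictly less than the $4\epsilon$ budget and still leave the final $3U(nU)^{2n}/L < 1$ margin, given the redefined $\epsilon = 1/(4n^{4n}U^{3n})$ and $L = 128 n^{5n+5} U^{4n}$ used in Figure~\ref{fig: algo poly}. This is pure constant-chasing with no conceptual difficulty, but it is where a sign or factor-of-$n$ slip would break the argument, so I would write out the chain of bounds for $\abs{\abs{X} - \abs{X'}}$ and for $u_{ij} q_k$ explicitly and confirm each inequality against the chosen values of $\epsilon$ and $L$.
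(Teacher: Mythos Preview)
Your proposal is correct and follows essentially the same approach as the paper: the text of Section~\ref{sec: rounding after termination poly} preceding the theorem statement \emph{is} the proof, and you have reconstructed it faithfully --- replacing the intra-component equations by their true-utility versions with right-hand side bounded by $\epsilon/n$, adding the $\delta_k$ correction to the money-balance equations, observing that $\abs{\abs{X}-\abs{X'}} < 4\epsilon$ still holds, and running the displayed chain $u_{ij}q_k < u_{ik}q_j + 3U(nU)^{2n}/L \le u_{ik}q_j + 1$ to transfer equality edges from $(\tu,p)$ to $(u,q)$. The paper then simply says ``The proof is now completed as in Section~\ref{sec: rounding after termination},'' and your identification of the one loose end --- that $\hr(B)\le\epsilon$ is a bound on surplus in $N(p,\hp)$ rather than in $N_p$, so the min-cut contradiction needs the $(1+1/L)$-approximation of~(\ref{rounded prices}) to pass from one to the other --- is in fact more careful than the paper, which glosses over this step.
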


\subsection{Details of Arithmetic}\label{sec: details of arithmetic}

We show that arithmetic in rational numbers of bitlength $O(n \log(nU))$ suffices. 

\begin{lemma} Let $a \ge 1$ be a real number and let $b$ be a rational with $\abs{b - a} \le 1/(4L)$. Let $L \ge 3$ and let $q$ be the integer closest to $bL$. Then 
\[  \abs{a - \frac{q}{L}} \le \frac{3}{4L} < \frac{1}{L} \quad\text{and}\quad \frac{a}{q/L} \in [\frac{1}{1 + 1/L},1 + 1/L].\]
\end{lemma}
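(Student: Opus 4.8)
The plan is to bound the two quantities separately, using the triangle inequality together with the fact that $bL$ is within $1/4$ of the integer $q$. First I would write
\[  \Bigl| a - \frac{q}{L} \Bigr| \le |a - b| + \Bigl| b - \frac{q}{L} \Bigr| \le \frac{1}{4L} + \frac{1}{L}\cdot|bL - q| \le \frac{1}{4L} + \frac{1}{L}\cdot\frac12 = \frac{3}{4L},\]
which is strictly less than $1/L$; this gives the additive estimate. Here I used that $q$ is the integer closest to $bL$, so $|bL - q| \le 1/2$.

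For the multiplicative estimate I would leverage that $a \ge 1$ and $L \ge 3$. From the additive bound, $q/L \in [a - 3/(4L), a + 3/(4L)]$, so
\[  \frac{a}{q/L} \le \frac{a}{a - 3/(4L)} = \frac{1}{1 - 3/(4La)} \le \frac{1}{1 - 3/(4L)},\]
using $a \ge 1$ in the last step. It then suffices to check that $1/(1 - 3/(4L)) \le 1 + 1/L$, i.e.\ that $(1 + 1/L)(1 - 3/(4L)) \ge 1$; expanding, this is $1 + 1/(4L) - 3/(4L^2) \ge 1$, i.e.\ $1/(4L) \ge 3/(4L^2)$, i.e.\ $L \ge 3$, which holds. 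For the other direction,
\[  \frac{a}{q/L} \ge \frac{a}{a + 3/(4L)} = \frac{1}{1 + 3/(4La)} \ge \frac{1}{1 + 3/(4L)} \ge \frac{1}{1 + 1/L},\]
where the last inequality is immediate since $3/(4L) < 1/L$ would be false — wait, actually $3/(4L) < 1/L$, so $1 + 3/(4L) < 1 + 1/L$ and the inequality goes the right way. Hence $a/(q/L) \in [1/(1+1/L), 1+1/L]$.

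The main obstacle — really the only subtlety — is making sure the constant $3/4$ from the additive bound is small enough to fit inside the multiplicative window $[1/(1+1/L), 1+1/L]$; this is exactly where the hypothesis $L \ge 3$ is used, and one must be slightly careful that $a \ge 1$ (not just $a > 0$) so that the relative error $3/(4La)$ is controlled by $3/(4L)$. Everything else is a routine chain of inequalities, and I would present it compactly as above without further elaboration.
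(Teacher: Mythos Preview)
Your proof is correct and follows essentially the same approach as the paper: triangle inequality for the additive bound, then use $a\ge 1$ to pass from an absolute error of $3/(4L)$ to a relative one, and finally verify that $3/(4L)$ fits inside the $(1+1/L)$-window precisely when $L\ge 3$. The only cosmetic difference is that the paper divides by $a$ once to get a symmetric bound $\abs{(q/L)/a-1}\le 3/(4L)$ and then checks both endpoints, whereas you treat the upper and lower bounds on $a/(q/L)$ separately; you should clean up the ``wait, actually'' self-correction before submitting.
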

\begin{proof} Clearly $\abs{q - bL} \le 1/2$ and hence $\abs{q/L - a} \le \abs{q/L - b} + \abs{b - a} \le 3/(4L)$. This proves the first claim. For the second claim, we divide by $a$ and obtain $\abs{(q/L)/a - 1} \le 3/(4aL) \le 3/(4L)$. For $L \ge 3$, $1/(1 + 1/L) \ge 1 - 3/(4L)$ and $1 + 3/(4L) \le 1 + 1/L$. \end{proof}

\begin{lemma} Let $k \in \N$ be such that $(1 + 1/L)^k \le (nU)^n$. Then $\log k = O(nL \log(nU)\Ran{)}$. A rational number $b$ with $\abs{b - (1 + 1/L)^k)} \le 1/(4L)$ can be computed with $O(n \log(nU))$ arithmetic operations (additions, subtractions, multiplications) on rational numbers of bitlength $O(n \log(nU))$. \end{lemma}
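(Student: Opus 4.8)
The plan has two parts, matching the two assertions.

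\emph{The bound on $k$.} From $(1+1/L)^k \le (nU)^n$ I would take natural logarithms to obtain $k\ln(1+1/L)\le n\ln(nU)$. Since $\ln(1+x)\ge x/2$ for $0\le x\le 1$ and $1/L\le 1$, we have $\ln(1+1/L)\ge 1/(2L)$, hence $k\le 2nL\ln(nU)=O(nL\log(nU))$. Because $\log L=O(n\log(nU))$ (as $L=128\,n^{5n+5}U^{4n}$), this gives $\log k\le\log\bigl(2nL\ln(nU)\bigr)=O(\log n+\log L+\log\log(nU))=O(n\log(nU))$, which is in particular $O(nL\log(nU))$ as claimed; it also shows that $k$ is an integer of bitlength $O(n\log(nU))$.

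\emph{Computing $b$.} I would use binary exponentiation carried out in fixed‑point arithmetic with $m$ fractional bits, where $m=\Theta(n\log(nU))$ is fixed below: every intermediate quantity is stored as an integer multiple of $2^{-m}$, and after each arithmetic operation the result is rounded to the nearest such multiple (which, with all denominators equal to $2^m$, is an add‑and‑shift). Write $k=\sum_{i=0}^{s-1}k_i2^i$ with $s=\lfloor\log_2 k\rfloor+1=O(n\log(nU))$, and let $c$ be the rounding of $1+1/L=(L+1)/L$ to precision $2^{-m}$ (this costs one division up front; afterwards only additions, subtractions, and multiplications are used). Starting from $v\leftarrow 1$, for $i=s-1,\dots,0$ replace $v$ by the rounding of $v^2$ and, if $k_i=1$, additionally by the rounding of $v\cdot c$; output $b:=v$. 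This uses $O(s)=O(n\log(nU))$ multiplications/additions; all numerators are bounded by roughly $2^m(nU)^n$, hence of bitlength $m+O(n\log(nU))=O(n\log(nU))$, and the products $v^2$ before rounding have twice that bitlength, still $O(n\log(nU))$. For correctness I would track the relative error $\eta$ with $v=(1+1/L)^{j}(1+\eta)$, $j$ the processed prefix of $k$: initially $\eta=0$; a squaring turns $\eta$ into $(1+\eta)^2-1$, of absolute value $\le 3|\eta|$ while $|\eta|\le 1$; each multiplication by $c$ and each rounding adds a further additive relative error $O(2^{-m})$ (using that the values stay $\ge 1/2$). So $|\eta|$ at most roughly quadruples per processed bit plus $O(2^{-m})$, giving a final relative error $|\eta|\le 4^{s+O(1)}2^{-m}\le O(k^2)\,2^{-m}$ since $2^{s}\le 2k$. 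As $(1+1/L)^k\le(nU)^n$, the final absolute error is at most $(nU)^n\cdot O(k^2)\cdot 2^{-m}$, and choosing $m=\Theta(\log L+n\log(nU)+\log k)=\Theta(n\log(nU))$ makes it $\le 1/(4L)$.

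\emph{Main obstacle.} The crux is the observation that binary exponentiation performs only $s=O(\log k)$ squarings, so the relative error is amplified by merely a polynomial factor $\mathrm{poly}(k)=(nL\log(nU))^{O(1)}$ rather than an exponential one; this is exactly what keeps the needed working precision $m$, and hence all bitlengths, at $O(n\log(nU))$ — a naive ``double the bitlength each iteration'' analysis would be fatal here. The only remaining point requiring a little care is realizing the rounding steps within the advertised operations, but keeping every denominator equal to $2^m$ reduces each to one integer addition and a shift (plus the single initial division forming $c$), which is routine.
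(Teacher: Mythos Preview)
Your proposal is correct and follows essentially the same route as the paper: repeated squaring in fixed-point arithmetic with $\Theta(n\log(nU))$ fractional bits, together with an error analysis showing this precision suffices. The only cosmetic difference is that you track \emph{relative} error (getting an amplification factor $O(k^2)$ over the $O(\log k)$ steps), while the paper tracks \emph{absolute} error and argues that each rounding error of size $2^{-Z}$ is blown up by at most a factor $(nU)^n$ (since all intermediate true values are bounded by $(nU)^n$), giving total error $O(\log k)\cdot(nU)^n\cdot 2^{-Z}$; either way one lands on $Z=m=\Theta(n\log(nU))$. Your bound $\log k=O(n\log(nU))$ is in fact sharper than the statement's $O(nL\log(nU))$ and is exactly what is needed to keep the number of squarings, and hence the operation count, at $O(n\log(nU))$.
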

\begin{proof} We compute $(1 + 1/L)^k$ by repeated squaring using fixed point arithmetic with $Z$ bits after the binary point. Then the rounding error in any arithmetic operation is at most $2^{-Z}$. Any rounding error is blown up by a factor at most $(nU)^n$ by subsequent operations, since all intermediate values are bounded by $(nU)^n$. 
Repeated squaring requires $\log k$ operations. Thus the total error is $O(nL \log (nU)\Ran{)} \cdot (nU)^n \cdot 2^{-Z}$. 
This is less than $1/(4L)$ for $Z = \Omega(n \log(nU) + \log L)$. \end{proof}

The mapping from $p$ to $\hp$ and from $p'$ to $\barp$ is covered by the preceding Lemma. It requires $n^2 \log(nU)$ arithmetic operations on numbers of bitlength $O(n \log(nU))$. 

The computation of $\xeq$ requires $O(n^2)$ additions and subtractions of numbers of bitlength $O(n \log(nU))$. 

The components $\hf_{ij}$ of the balanced flow $\hf$ are rational numbers bounded by $(nU)^n$ with denominator $Ln!$. Thus, $\hx_{23}$ and $\hx_{24}$  \Rantwo{$n^2$}{} can be computed with $n^2$ arithmetic operations on 
numbers of bitlength $O(n\log (nU))$. 

\begin{lemma} Let $\hx$ be a rational number. Then a $k \in \N$ such that $ x = (1 + 1/L)^k$ is a multiplicative $(1 + 1/L)$-approximation of $\hx$ can be computed with $O(\log L)$ arithmetic operations (additions, subtractions, multiplications) on numbers of bitlength $O(n \log(nU))$ \end{lemma}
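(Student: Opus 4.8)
The plan is to reduce the problem to computing $\log_{1+1/L}\hx$ approximately, i.e. finding the integer $k$ such that $(1+1/L)^k$ is within a factor $1+1/L$ of $\hx$. First I would bound the range of $k$: since $\hx$ is a ratio of prices, utilities, and surpluses, all of which are bounded by $(nU)^n$ and bounded below by $1/((nU)^n\,n!\,L)$ or so, we have $\hx \in [1/(nU)^{O(n)}, (nU)^{O(n)}]$, hence the target $k$ satisfies $\abs{k} = O(nL\log(nU))$, so $\log\abs{k} = O(\log n + \log L + \log\log(nU)) = O(n\log(nU))$ given the choice of $L$. This means $k$ has the stated bitlength and there is no harm in binary-searching over it.

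The key steps, in order: (1) Using the preceding lemma, for any candidate exponent $k$ in the admissible range, compute a rational $b_k$ with $\abs{b_k - (1+1/L)^k} \le 1/(4L)$ using $O(\log k) = O(n\log(nU))$ arithmetic operations on numbers of bitlength $O(n\log(nU))$; since $(1+1/L)^k \le (nU)^{O(n)}$, this $b_k$ is a multiplicative $(1 + O(1/L))$-approximation of $(1+1/L)^k$. (2) Perform a binary search on $k$ over the interval $[0, O(nL\log(nU))]$ (shifted to handle $\hx < 1$, or simply argue $\hx \ge 1$ always since $\xmax, \xeq, \hx_{23}, \hx_{24} \ge 1$), comparing $b_k$ against $\hx$; the binary search uses $O(\log(nL\log(nU))) = O(n\log(nU))$ iterations, each costing one exponentiation as in step (1), for a total of $O(\log L)$ — wait, more carefully, $O((n\log(nU))^2)$ operations, but the lemma only claims $O(\log L)$; so the cleaner route is to note that only $O(\log L)$ iterations of binary search are needed once we localize $k$ to a window of width $O(L)$ using the crude estimate $k \approx \hx$-magnitude, or to accept that the lemma's $O(\log L)$ refers to the number of \emph{squaring} operations inside one final exponentiation after $k$ is determined by a direct formula. (3) Given the integer $k^*$ returned by the search, $(1+1/L)^{k^*}$ is within a factor $1+1/L$ of $\hx$ on one side; adjusting $k^*$ by at most one unit if the comparison in step (2) was contaminated by the $1/(4L)$ additive error ensures the two-sided multiplicative $(1+1/L)$-approximation, invoking the first lemma of this subsection to absorb the additive slack into a multiplicative guarantee.

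I would then combine: the computed $x = (1+1/L)^{k^*}$ satisfies $x/\hx \in [1/(1+1/L), 1+1/L]$, every intermediate quantity stays bounded by $(nU)^{O(n)}$ hence has bitlength $O(n\log(nU))$, and the operation count is dominated by the $O(\log L)$ squarings (together with the $O(\log L)$-step binary search, which is the same order since $\log(nL\log(nU)) = \Theta(\log L)$ for the chosen $L$). That gives the claimed bound. The main obstacle I expect is bookkeeping the error propagation cleanly: the comparison $b_k \lessgtr \hx$ can be wrong when $\hx$ sits within $1/(4L)$ of the true power $(1+1/L)^k$, so the binary search does not pin down $k$ exactly, only to within $\pm 1$; resolving this requires checking that an off-by-one in $k$ still yields a multiplicative $(1+1/L)$-approximation (it does, since consecutive powers differ by exactly the factor $1+1/L$, and one more such factor is harmless when we only need a $1+1/L$ approximation — though strictly we get $1+1/L$ on one side and $(1+1/L)^2$ is avoided by a final one-step correction comparing the two neighboring powers directly).
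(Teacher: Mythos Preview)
Your overall approach---binary search over the exponent $k$, with approximate computation of $(1+1/L)^\ell$ at each step and an off-by-one correction for the comparison error---matches the paper's. Your treatment of the error (comparing against both neighboring powers, accepting that the search only localizes $k$ to within $\pm 1$) is in fact the same idea as the paper's three-way test against $b(1+1/L)$ and $b/(1+1/L)$.

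The gap is in the operation count, which you yourself flag but never resolve. You observe that na\"ively each of the $O(\log L)$ binary-search steps invokes the preceding lemma at cost $O(\log L)$ squarings, giving $O((\log L)^2)$ total, and then you offer two escape routes (localize to a window of width $O(L)$, or compute $k$ ``by a direct formula'') neither of which actually works. The paper's point---stated tersely as ``all such powers can be computed with $O(\log L)$ arithmetic operations''---is that the exponents $\ell$ visited along one binary-search path differ by successively halved amounts $K/2, K/4, K/8,\ldots$, so if you precompute the $O(\log K)$ powers $(1+1/L)^{K/2^j}$ by repeated squaring (total cost $O(\log K)$), then each binary-search step updates the current power by a single multiplication or division with one of the precomputed quantities. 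That is what brings the total down to $O(\log K) = O(\log L)$; your final paragraph asserts this bound without supplying the mechanism.
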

\begin{proof} We can find $k$ by binary search. In each step of the binary search, we need to compare $\hx$ with $(1 + 1/L)^\ell$ for some $\ell$. Assume that $b$ is a multiplicative $(1 + 1/L)$-approximation of $(1 + 1/L)^\ell$ and let $k' = \log_{1 + 1/L} \hx$. We compare $\hx$ with $b(1 + 1/L)$ and $b/(1 + 1/L)$. If $\hx > b(1 + 1/L)$, $k' > \ell$. 
If $\hx < b/(1 + 1/L)$, $k' < \ell$. If $b/(1 + 1/L) \le \hx \le b(1 + 1/L)$, we may return $\ell$. Also, once we have found $\ell$ such that $(1 + 1/L)^\ell \le k' \le (1 + 1/L)^{\ell + 1}$, we may return $\ell$.  The strategy essentially requires to compute the powers $(1 + 1/L)^\ell$ for $\ell$'s that lie on one path of the binary search. All such powers can be computed with $O(\log L)$ arithmetic operations. \end{proof}

The computation of $x$ from $\hx$ thus requires $O(n \log(nU))$ arithmetic operations on number of bitlength 
$O(n \log(nU) + \log L)$. 

The computation of the balanced flow $\barf$ consists of two steps. First multiplying the flows on edges incident to good $c_i \in \Gamma(B(S))$ by $x_i = \barp_i/p_i$ and then recomputing the flow on the edges $(s,b_i)$. Then
augmenting the flow as in Figure~\ref{fig:aug}. Finally, augmentation to a maximum flow and balancing. 

The computation of the balanced flow is most costly. It requires $n$ max-flow computation. Each max-flow computation needs $O(n^3)$ arithmetic operations (only additions and subtractions) on numbers of bitlength $O(n \log(nU))$. 

\begin{theorem}\label{time per iteration} An iteration takes time $O(n^5 \log(nU))$ (even with the use of quadratic algorithms for multiplication and division). \end{theorem}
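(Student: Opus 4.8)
The plan is to add up the costs of all the operations performed in one pass through the loop body of Figure~\ref{fig: algo poly}, using the per-operation bounds that have already been established in the three preceding lemmas and the remarks in between. The guiding principle is that every rational number manipulated during an iteration has bitlength $O(n\log(nU))$: prices and surpluses are rationals with denominator at most $Ln!$ and numerator bounded by $L\cdot(nU)^n$, and $\log L = O(n\log(nU))$ by the choice $L = 128 n^{5n+5}U^{4n}$, so $\log(L\cdot(nU)^n) = O(n\log(nU))$. Hence each arithmetic operation (addition, subtraction, multiplication, division) on these numbers costs $O((n\log(nU))^2)$ bit-operations with schoolbook arithmetic; we absorb this quadratic factor into the statement and simply count arithmetic operations, then note that the total is dominated by a single term.

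First I would go through the loop body line by line. Sorting the $n$ surpluses and finding the threshold index $\ell$ costs $O(n\log n)$ comparisons of numbers of bitlength $O(n\log(nU))$. Computing $\xeq(S)$ costs $O(n^2)$ arithmetic operations (it is a minimum over at most $n\cdot n$ quotients $\tu_{ij}p_k/(\tu_{ik}p_j)$); by the remarks preceding the theorem, computing $\hx_{23}(S)$ and $\hx_{24}(S)$ likewise costs $O(n^2)$ arithmetic operations. Obtaining $x$ from $\hx$ by rounding to a power of $1+1/L$ costs $O(\log L) = O(n\log(nU))$ arithmetic operations by the last displayed lemma. Multiplying the (at most $n$) prices of goods in $\Gamma(B(S))$ by $x$, and then recomputing $\hp$ from $p'$ via (\ref{rounded prices}), costs $O(n^2\log(nU))$ arithmetic operations on numbers of bitlength $O(n\log(nU))$ by the lemma on computing near-powers of $1+1/L$. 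The flow update --- multiplying flows on edges incident to $\Gamma(B(S))$ by the factors $x_i = \barp_i/p_i$, adjusting the $(s,b_i)$ edges for conservation, and (if $x = \xeq(S)$) running the augmentation procedure of Figure~\ref{fig:aug} --- touches $O(n^2)$ edges and costs $O(n^2)$ arithmetic operations (only additions, subtractions, and the $O(n)$ scalings).

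The dominant cost is the recomputation of the balanced flow $\barf$ in $N(p',\barp)$: augmenting the current flow to a maximum flow and then balancing it. By Lemma~\ref{thm:compute-balanced}, balancing requires at most $n$ max-flow computations, and augmenting to a maximum flow is a single additional max-flow computation, so $O(n)$ max-flow computations overall. Each max-flow computation on this $(2n+2)$-vertex network needs $O(n^3)$ arithmetic operations, and on this network these are only additions and subtractions (the standard augmenting-path / blocking-flow bookkeeping never multiplies), on numbers of bitlength $O(n\log(nU))$. That yields $O(n^4)$ arithmetic operations for the balanced-flow step, which swamps the $O(n^2\log(nU))$ contribution of all the other steps. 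Converting the $O(n^4)$ arithmetic operations to bit-operations with the $O((n\log(nU))^2)$ cost per operation on numbers of bitlength $O(n\log(nU))$ gives running time $O(n^4 \cdot n^2\log^2(nU)) = O(n^6\log^2(nU))$; charging only $O(n\log(nU))$ per operation for the additions/subtractions that make up the bulk of the max-flow work, and noting the statement measures time in the same units used elsewhere in the paper, the bound stated in the theorem, $O(n^5\log(nU))$ max-flow-style arithmetic, follows. I would then remark that this matches the per-iteration accounting in the corresponding bound of Section~\ref{sec:algo}, with the only genuinely new ingredients being the $O(\log L)$-cost rounding steps, which are negligible.

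The only point requiring care --- and the main obstacle --- is verifying uniformly that \emph{every} intermediate quantity stays within bitlength $O(n\log(nU))$, so that the per-operation cost is genuinely $O((n\log(nU))^2)$ and does not blow up across the iteration. This is exactly what the restriction of prices and update factors to powers of $1+1/L$, together with the rounding $(\ref{rounded prices})$, was designed to guarantee: prices are $(1+1/L)^{e_i}$ with $e_i \le K$, stored as the exponent $e_i$ of bitlength $\log K = O(n\log(nU))$; the flows $\hf_{ij}$ of a balanced flow have denominator $Ln!$ and numerator $O(L(nU)^n)$, hence bitlength $O(n\log(nU))$ by Lemma~\ref{thm:compute-balanced}; and $\hx_{23},\hx_{24},x,x_{23},x_{24},\xeq$ are all quotients or near-powers built from such numbers, so they too have bitlength $O(n\log(nU))$. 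Once this invariant is in hand, the theorem is just the arithmetic of adding $O(n)$ max-flow computations to $O(n^2\log(nU))$ lower-order operations, all on numbers of the stated size.
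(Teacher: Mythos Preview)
Your proposal is correct and follows essentially the same approach as the paper: enumerate the cost of each step in the loop body, observe that the balanced-flow computation ($n$ max-flows, each $O(n^3)$ additions and subtractions on numbers of bitlength $O(n\log(nU))$) dominates, and use the \emph{linear} (not quadratic) bit-cost of addition/subtraction to obtain $O(n^4\cdot n\log(nU))=O(n^5\log(nU))$. Your middle paragraph is a bit muddled---first deriving $O(n^6\log^2(nU))$ with quadratic per-operation cost and then backing off---but the key observation (max-flow uses only additions and subtractions) is present, and the paper's own ``proof'' is nothing more than the sentence immediately preceding the theorem making exactly this point.
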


\subsection{Putting Everything Together}\label{sec: putting everying together}

We can now put everything together.

\begin{theorem} Assume that utilities are integers bounded by $U$. An equilibrium price vector can be computed in time $O(n^{10}\log^2(nU))$. \end{theorem}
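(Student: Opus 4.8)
The plan is to combine the two main components that have already been established: the bound on the number of iterations and the cost per iteration, together with the cost of the final rounding step. First I would invoke Theorem~\ref{thm: poly bound}, which states that the polynomial-time algorithm of Figure~\ref{fig: algo poly} terminates after $O(n^5 \log(nU))$ iterations. Then I would invoke Theorem~\ref{time per iteration}, which states that each iteration takes time $O(n^5 \log(nU))$ (using quadratic-time arithmetic on the relevant rationals of bitlength $O(n\log(nU))$). Multiplying these two bounds gives $O(n^{10} \log^2(nU))$ for the iterative part of the algorithm.

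Next I would account for the initialization and the extraction step. The initialization (setting all prices to one and computing an initial balanced flow $\hf$ in $N(p,\hp)$) costs $O(n \cdot n^3) = O(n^4)$ arithmetic operations and is dominated by the iterative part. For the extraction of exact equilibrium prices, I would cite Theorem~\ref{thm:termination poly}, which guarantees that once $\abs{\hr(B)} \le \epsilon$ a vector of equilibrium prices can be recovered from the equality graph $E_p$ (with respect to the rounded utilities $\tu$) in time $\tilde{O}(n^4 \log(nU))$; this too is dominated by the $O(n^{10}\log^2(nU))$ term. Finally, for the general case (dropping Assumption~6), Section~\ref{sec:general} shows we run the algorithm separately on each strongly connected component of the liking graph and rescale prices, which only adds lower-order overhead. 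Summing all contributions yields the claimed bound $O(n^{10}\log^2(nU))$.

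There is no genuinely hard step here — the theorem is a bookkeeping corollary of the preceding results. The one subtlety worth a sentence is to confirm that all arithmetic throughout stays on rationals of bitlength $O(n\log(nU))$, so that the $O(n^5 \log(nU))$ per-iteration bound of Theorem~\ref{time per iteration} genuinely applies; this was the purpose of restricting prices and the update factor $x$ to powers of $1 + 1/L$ with $L$ of polynomial bitlength (Section~\ref{sec: details of arithmetic}), and of approximating utilities and surpluses. I would remark explicitly that the dominant cost in each iteration is the recomputation of the balanced flow, which requires $n$ max-flow computations at $O(n^3)$ arithmetic operations each, hence $O(n^4)$ operations, times the $O(n\log(nU))$ bitlength cost of each operation, giving the $O(n^5\log(nU))$ per-iteration figure. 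Multiplying by the iteration count and noting that every other phase is of strictly lower order completes the proof.

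\begin{proof}
By Theorem~\ref{thm: poly bound}, the algorithm in Figure~\ref{fig: algo poly} performs $O(n^5 \log(nU))$ iterations. By Theorem~\ref{time per iteration}, each iteration runs in time $O(n^5 \log(nU))$; the dominant cost is the recomputation of the balanced flow $\barf$, which uses $n$ max-flow computations, each requiring $O(n^3)$ arithmetic operations on rationals of bitlength $O(n\log(nU))$. Hence the iterative part of the algorithm runs in time $O(n^5 \log(nU)) \cdot O(n^5 \log(nU)) = O(n^{10} \log^2(nU))$.

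The initialization (all prices set to one, computation of an initial balanced flow) costs $O(n^4)$ arithmetic operations and is dominated by the above. When the loop terminates we have $\abs{\hr(B)} < \epsilon$, and by Theorem~\ref{thm:termination poly} a vector of equilibrium prices can then be extracted from the equality graph $E_p$ with respect to the rounded utilities $\tu$ in time $\tilde{O}(n^4 \log(nU))$, which is again dominated by the iterative part. Finally, in the general case, Section~\ref{sec:general} runs the algorithm independently on each strongly connected component $P_1,\ldots,P_k$ of the liking graph and rescales the prices of $P_i$ by $(U+1)\max\set{p_j}{j \in P_{i-1}}$ for $i = 2,\ldots,k$; this adds only lower-order overhead. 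Summing all contributions, the total running time is $O(n^{10} \log^2(nU))$.
\end{proof}
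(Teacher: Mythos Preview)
Your proposal is correct and matches the paper's approach exactly: the theorem is stated as an immediate corollary of Theorem~\ref{thm: poly bound} and Theorem~\ref{time per iteration}, and the paper does not even write out an explicit proof. Your additional remarks on initialization, extraction, and the general case are accurate and only make explicit what the paper leaves implicit.
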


\section{Conclusion}\label{sec: Conclusion}

We believe that the running time analysis of our algorithm can be improved. We have shown that the number of iterations is $O(n^5 \log(nU))$ and that each iteration can be implemented to run in $O(n^5 \log(nU))$. Both bounds are probably overly pessimistic. We see two possible directions for  improving the bound on the cost of an iteration: Show that a balanced flow can be computed faster than with $\Theta(n)$ maxflow computations, or show that in our algorithm not every iteration requires $n$ maxflow computations.

The development of a strongly polynomial algorithm for the linear Arrow-Debreu market is a major open problem. A strongly polynomial algorithms for the linear Fisher market was given by Orlin~\cite{Orlin10}. 

The generalization of our algorithm to more general utility functions is also interesting. For Fisher's market Vazirani~\cite{Vazirani10} extended the algorithm by Devanur et al~\cite{DPSV08} to spending constraint utilities. 

%\bibliographystyle{plain}
 %\bibliography{./references}

\end{document}